\newcommand{\extraqed}{}
\newtheorem{assumption}{Assumption}
\newtheorem{observation}{Observation}
    \newtheorem{theorem}{Theorem}[section]
    \newtheorem{corollary}[theorem]{Corollary}
    \newtheorem{lemma}[theorem]{Lemma}
    \newtheorem*{lemma*}{Lemma}
    \newtheorem{example}[theorem]{Example}
    \newtheorem*{example*}{Example}
    \newtheorem{definition}[theorem]{Definition}
    \newtheorem*{theorem*}{Theorem}
    \newtheorem*{restatement*}{Restatement}
\newcommand{\ashish}[1]{\todo[color=pink]{AG: #1}}
\newcommand{\geoff}[1]{\todo[color=brown]{GR: #1}}
\newcommand\LVR[1]{\begingroup \color{black} #1 \endgroup}
\title{Finding the Right Curve:\\Optimal Design of Constant Function Market Makers}
\newcommand{\geoffemail}{geoff.ramseyer@cs.stanford.edu}
\newcommand{\mohakemail}{mohakg@stanford.edu}
\newcommand{\ashishemail}{ashishg@stanford.edu}
\newcommand{\dmemail}{}
\author{Mohak Goyal*}
\affiliation{%
  \institution{Stanford University}
  \city{Stanford}
  \state{California}
  \postcode{94305}
  \country{USA}}
\email{\mohakemail}
\author{Geoffrey Ramseyer*}
\affiliation{%
  \institution{Stanford University}
  \city{Stanford}
  \state{California}
  \postcode{94305}
  \country{USA}}
\email{\geoffemail}
\author{Ashish Goel}
\affiliation{%
  \institution{Stanford University}
  \city{Stanford}
  \state{California}
  \postcode{94305}
  \country{USA}}
\email{\ashishemail}
\author{David Mazières}
\affiliation{%
  \institution{Stanford University}
  \city{Stanford}
  \state{California}
  \postcode{94305}
  \country{USA}}
\email{\dmemail}
\thanks{* denotes alphabetical order.  This work is supported by the Future of Digital Currency Initiative (FDCI), Stanford University, and the Office of Naval Research, award number N000141912268.}
\begin{abstract}
Constant Function Market Makers (CFMMs) are a tool for creating exchange markets, have been deployed effectively in prediction markets, 
and are now especially prominent
in the Decentralized Finance ecosystem.
We show that for any set of beliefs about future asset prices, an optimal CFMM trading function exists that maximizes the fraction of trades that a CFMM can settle.
We formulate a convex program to compute this optimal trading function.
This program, therefore, gives a tractable framework for market-makers to compile their belief function on the future prices of the underlying assets into the trading function of a maximally capital-efficient CFMM.

Our convex optimization framework further extends
to capture the tradeoffs between fee revenue, arbitrage loss, and opportunity costs of
liquidity providers. Analyzing the program shows how the consideration of profit and loss leads to a qualitatively different optimal trading function.
%
Our model additionally explains the diversity of CFMM designs that appear in practice.
We show that careful analysis of our convex program enables inference of a market-maker's
beliefs about future asset prices, and show that these beliefs mirror the folklore intuition
for several widely used CFMMs.  Developing the program requires a new notion of 
the liquidity of a CFMM, and the core technical challenge is in the analysis
of the KKT conditions of an optimization over an infinite-dimensional Banach space.

\end{abstract}
\begin{document}
\begin{titlepage}
\maketitle
\end{titlepage}

\section{Introduction}

Agents in any economic system need to be able to exchange one asset for another efficiently.
%
Some assets are frequently traded by many market participants, and for these assets, 
a seller offering a reasonable price can likely find a buyer quickly and vice versa.  However, not every pair of assets
is traded frequently, and sellers in these markets might have to wait a long time to find a buyer or accept a
highly unfavourable price.
The role of a \emph{market-maker} is to fill this gap --- to facilitate easy and rapid trading between pairs of assets
for which otherwise there is very little trading activity.  Market-makers trade in both directions on the market, buying and selling
assets when traders arrive at the market \cite{amihud1980dealership}.  In this sense, market-makers facilitate asynchronous trading
between buyers and sellers, thereby increasing the market \emph{liquidity} between two assets.

Our topic of study is a subclass of automated market-making strategies known as {\it Constant Function Market Makers} (CFMMs). A CFMM maintains reserves of two assets $X$ and $Y,$ provided by a so-called \emph{liquidity provider (LP),} and makes trades according to a predefined \emph{trading function} $f(x,y)$ of its asset reserves (the eponymous "constant function"); specifically, a CFMM accepts a trade $(\Delta x, \Delta y)$ from reserves $(x,y)$ to $(x - \Delta x, y + \Delta y)$ 
if and only if $f(x - \Delta x, y + \Delta y)=f(x, y)$.
CFMMs earn revenue by charging a small commission on each trade (i.e. creating a bid-ask spread) but are subject to several associated
expenses \cite{amihud1986asset}, such as the costs of maintaining the asset inventory and adverse selection by arbitrageurs (i.e., stale quote sniping). The loss of the LP relative to the counterfactual strategy of ``buy-and-hold'' is referred to as the ``divergence loss'' \cite{milionis2022automated}.

Automated market-making has long been an important topic of study \cite{aoyagi2020liquidity,gerig2010automated,othman2013practical},
but CFMMs have recently become some of the most widely used exchanges
\cite{uniswapv2,uniswapv3,balancer,egorov2019stableswap} within the modern 
Decentralized Finance (DeFi) ecosystem \cite{werner2021sok}. The success of CFMMs in DeFi is primarily due to their ability to run via smart contracts \cite{mohanta2018overview} with a fairly low computation requirement on blockchains. CFMMs also reduce the barrier to entering the liquidity provision business or ``market-making'' \cite{ammdemocratize}.
CFMMs have also been widely deployed in prediction markets
as a method for aggregating opinions \cite{hanson2007logarithmic,chen2010new}.
For completeness, we describe the precise translation from market scoring rules studied in the prediction markets literature to CFMMs in Appendix \S \ref{apx:cost_fn}.


\begin{example}[Real-world CFMMs]
~
	\begin{enumerate}
		\item
			The decentralized exchange Uniswap \cite{uniswapv2} uses the product function $f(x,y) = xy$.

		\item 
			The Logarithmic Market Scoring Rule (LMSR) \cite{hanson2007logarithmic}, used extensively to design prediction markets, corresponds to a CFMM with trading function
			$f(x,y) = (1-e^{-x}) + (1 - e^{-y})$ \cite{univ3paradigm}.

		\item
			The trading function $f(x,y)=xe^y$ has powered automated storefronts in online games \cite{hyperconomy}.
%
	\end{enumerate}
\end{example}
%


Despite facilitating billions of US dollars worth of trade volume per day, a complete formal understanding of CFMM design trade-offs is missing in the literature. 
Our goal, therefore, is to explain what guides a CFMM designer to choose one
trading function over another. We provide an optimization framework which compiles a market-maker's beliefs on future prices into an optimal CFMM trading function, making substantial progress towards an important open problem \cite{timtweet}.


\subsection{Our Contributions}

We develop a convex optimization framework that translates an LP's beliefs about future asset valuations into
an optimal choice of CFMM trading function.  We show that a unique trading function always maximizes an LP's
expectation of the fraction of trades that a CFMM can settle (\S\ref{sec:inefficiency}). 
Furthermore, our framework is versatile such that it can model
a wide variety of real-world concerns, \LVR{including fee revenue (\S\ref{sec:net_profit}), divergence loss (\S\ref{sec:divloss}), so-called ``Loss-Versus-Rebalancing'' \cite{milionis2022automated} (\S\ref{sec:lvr}), and models of price dynamics (\S\ref{sec:price_dyn}).} To the best of our knowledge, this is the first unified framework for analysing and optimizing CFMMs for various objectives for a given belief function and therefore provides a guide to prospective LPs interested in using a CFMM.

We model an LP's beliefs as a joint distribution on the future prices of the two assets with regard to a numeraire (\S \ref{sec:beliefs}). This belief could, for example, be generated from a price dynamics model.
As one might expect, the optimization problem concerned only with maximizing the fraction of trades settled depends only
on the distribution of the \emph{ratio} of prices.  However, expressing beliefs on future prices as a joint distribution
enables optimizing for profit and loss through the same framework.

We measure the liquidity of the CFMM trading function as the amount of capital implicitly allocated for market-making
at a given spot exchange rate.  
Specifically, the liquidity of a CFMM is the ratio between the size of a trade and the percentage change in
the spot exchange rate (\S \ref{sec:liquidity}). 

We analyze the steady-state dynamics of trade requests on a CFMM and arrive at a notion of ``CFMM inefficiency'' that approximates the probability that a CFMM cannot satisfy a trade request. CFMM inefficiency is a function of the inverse of the CFMM's liquidity.  Ultimately, we find that more complex objective functions considering an LP's profit and loss become linear combinations of CFMM liquidity and CFMM inefficiency.

Careful analysis of the KKT conditions of our convex program allows us to invert the problem; given an arbitrary CFMM trading function,
we can construct an explicit equivalence class of beliefs for which the given function is optimal.  
The main technique involves analysis of the KKT conditions of an optimization problem over an infinite-dimensional Banach space. We obtain closed-form solutions to the optimal CFMM designs for several important belief functions and objective functions. When not closed-form, the solution is still computationally tractable.

Our framework helps explain the choice of CFMM trading functions deployed in practice.  In many cases, the optimal CFMM revealed by our framework matches  the informal intuitions of practitioners. For example, the Uniswap V2 \cite{uniswapv2} protocol in DeFi was designed using the constant product $f(xy) = xy$ CFMM with the motivation 
that the available liquidity must be spread evenly across all exchange rates. In our framework, $f(xy) = xy$ is the optimal CFMM trading function for the uniform belief function with the objective of minimising the expected CFMM inefficiency.

\begin{figure}
\begin{center}
\begin{tabular}{|| c | c | c |   c||} 
 \hline
CFMM &Trading Function $f(x,y)$ & Liquidity $L(p)$ & Belief $\psi(p_X, p_Y)$\\ 
 \hline
Constant product & $xy$ & $\frac{1}{2} \sqrt{p}$ &  1 \\ 
 \hline
Constant weighted product & $x^{\alpha}y$ & $\frac{\alpha}{\alpha+1} p^{\frac{\alpha}{\alpha+1}}$ & $\left(\frac{p_X}{p_Y}\right)^{\frac{\alpha - 1}{\alpha + 1}}$ \\
 \hline
LMSR based & $1-e^{-x} + 1 - e^{-y}$ & $\frac{p}{1+p}$ & $\frac{p_X p_Y}{(p_X+p_Y)^2}$\\
 \hline
Lognormal prior based & As in Figure~\ref{fig:all_f} & $\sqrt{e^{\frac{-(\ln p)^2}{2\sigma^2}}}$ & $ \frac{p_X}{p_Y} \sim lognormal(0, \sigma^2)$\\
\hline
Black-Scholes based & As in Figure~\ref{fig:all_f} & Not closed form & As in equation~\ref{eq:BS}.\\
 \hline
\end{tabular}
\end{center}

\vspace{0.2cm}
  \begin{minipage}[c]{0.5\textwidth}
\includegraphics[width=\textwidth]{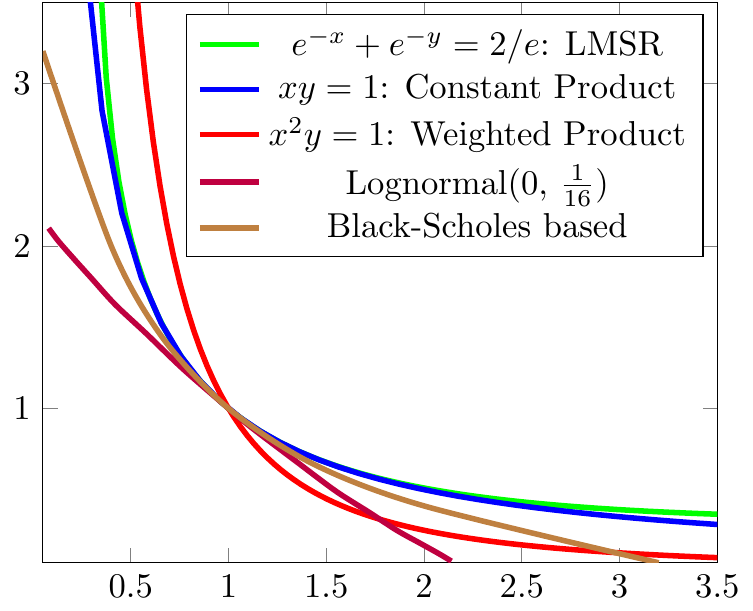}
  \end{minipage}\hfill
  \begin{minipage}[c]{0.5\textwidth}
    \caption{
       Some natural or widely used CFMM trading functions. 
       The lognormal belief function arises when we consider a snapshot of the Black-Scholes process at a future time. The entire Black-Scholes process can be considered (in expectation) for the purpose of our optimization framework via time-discounting (in the plot, the discounting parameter is 1). It can then be compiled into a single belief function as described in \S\ref{sec:price_dyn}. Our Python script in the Github repository \url{https://github.com/gramseyer/cfmm-liquidity-optimization} computes the belief function for the Black Scholes model. For any user-submitted belief function, our script also finds the optimal CFMM trading functions for minimizing CFMM inefficiency.
    }     \label{fig:all_f}
  \end{minipage}
\vspace{0.2cm}

%
%
\begin{tabular}{cc}
\begin{minipage}[t]{0.45\linewidth}\centering\includegraphics[width=0.70\linewidth]{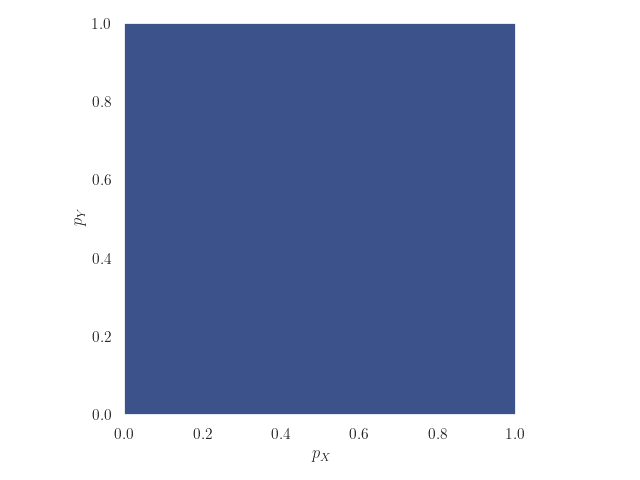} \caption{Constant belief function -- leads to the constant product market maker.} 
\end{minipage} & 
\begin{minipage}[t]{0.45\linewidth}\centering\includegraphics[width=0.70\linewidth]{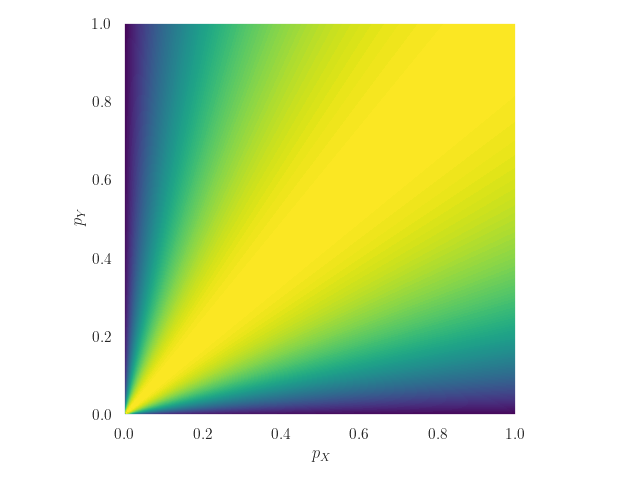} \caption{Belief function $\psi(p_X,p_Y) = \frac{p_X p_Y}{(p_X+ p_Y)^2}$ -- leads to the LMSR based market maker} 
\end{minipage}
\end{tabular}

\begin{tabular}{cc}
\begin{minipage}[t]{0.45\linewidth}\centering\includegraphics[width=0.70\linewidth]{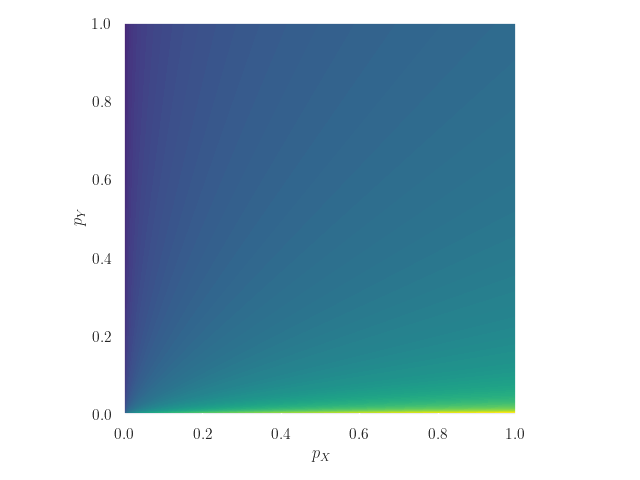} 
\caption{Belief function $\psi(p_X,p_Y) = \left(\frac{p_X}{p_Y}\right)^{1/5}$ -- leads to the weighted product market maker.} 
\end{minipage} &
\begin{minipage}[t]{0.45\linewidth}\centering\includegraphics[width=0.70\linewidth]{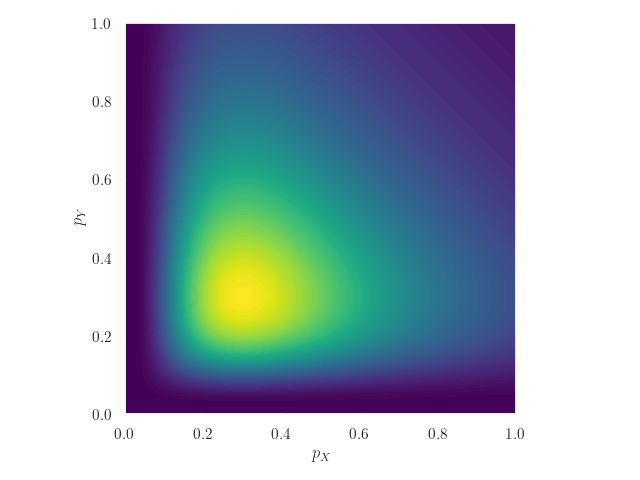} 
\caption{ Lognormal belief function -- leads to the Black-Scholes based CFMM after time-discounted aggregation of the belief function.}
\end{minipage} \\
\end{tabular}
\caption{Belief functions on future prices of the underlying assets relative to a numeraire. The plots are on beliefs defined on the range $(0,1]$ -- this is without loss of generality per Corollary~\ref{cor:equivclass}.}
\label{fig:psi} 
\end{figure}

Figure~\ref{fig:all_f} shows the trading functions and 
Figures~\ref{fig:psi} represent the belief functions for which the constant product, LMSR, constant weighted-product, and Black-Scholes-based CFMMs, respectively, are optimal for minimizing the CFMM inefficiency.\footnote{We provide, at the GitHub repositiry \url{https://github.com/gramseyer/cfmm-liquidity-optimization}, a Python script to generate the optimal CFMM trading functions for any user-defined belief function.}

\S \ref{sec:preliminaries} formally defines a CFMM and gives some basic properties. \S \ref{sec:model} studies the steady-state dynamics of a CFMM
and defines CFMM liquidity and inefficiency.  
\S \ref{sec:optimization} gives our convex optimization framework and analyzes its KKT conditions.  
\S \ref{sec:beliefs} studies the beliefs implicit behind real-world CFMM deployments.
Finally, \S \ref{sec:net_profit} shows how to add consideration for profit and loss to our framework,
and qualitatively studies how these considerations change the optimal trading function.
%
%

\subsection{Related Work}
%
%
%
%
The closest line of work \cite{fan2022differential,neuder2021strategic,cartea2022decentralised,heimbach2022risks,bar2023uniswap} to our paper is the one which considers profit-maximizing market-making strategies which can be implemented via the Uniswap v3 \cite{uniswapv3} protocol. Additionally, \cite{neuder2021strategic,cartea2022decentralised,bar2023uniswap} design ``rebalancing'' strategies for the LPs, wherein they effectively modify the CFMM trading function periodically. In contrast, we consider designing CFMM trading functions from the first principles and do not use the Uniswap v3 framework. We also do not consider rebalancing the CFMM trading function in this work. A non-exhaustive list of papers in this line is:
\begin{itemize}[leftmargin = 0.3cm]
    \item Fan et al. \cite{fan2022differential}, study the question of maximizing risk-adjusted
profit for LPs while accounting for the gas fee for traders.  
Their model assumes that all trading on a CFMM occurs only in response to price movements on an external market (i.e. arbitrageurs realigning the CFMM spot price to the external market).
Their model suggests that risk-neutral
LPs must allocate all of their capital at a single price point (\S 4.2, \cite{fan2022differential}),  while ours better explains the choices of practitioners. 

\item Neuder et al. \cite{neuder2021strategic} 
study dynamic liquidity allocation strategies for risk-adjusted fee revenue maximization,
but do not consider the ``divergence loss'' incurred in the process.

\item Cartea et al. \cite{cartea2022decentralised} decompose the CFMM divergence loss into two components -- the convexity cost (loss due to arbitrage) and the opportunity cost (the cost of locking up capital). 
They give a stochastic optimal control-based closed-form strategy for a profit-maximizing LP.

\item 

Heimbach et al. \cite{heimbach2022risks} model liquidity positions on Uniswap V3 and perform a data-based analysis of the risks and returns of LPs as a function of the volatility of the underlying assets.

\end{itemize}


Similar to \cite{cartea2022decentralised}, Milionis et al. \cite{milionis2022automated} show that a part of the divergence loss corresponds to the market risk and can be hedged by a rebalancing strategy; the remainder of the divergence loss corresponds to the profit made by arbitragers trading against the CFMM -- they call this loss the LVR (loss-vs-rebalancing). When the variance of the price of $X$ relative to $Y$ is $\sigma^2,$ they show that the rate of accrual of LVR (what they call the instantaneous-LVR) is $\sigma^2 p^2 |x^{\prime}(p)|,$ where $x^{\prime}(p)$ denotes the rate of change of $x$ in the CFMM with respect to the price $p$ under perfect arbitrage. Since the LVR is a linear function of our notion of liquidity, our convex optimization framework can accommodate the LVR as a cost for the LP in the objective function. 

Automated market-making has also been studied extensively in the context of prediction markets
\cite{hanson2007logarithmic,chen2010new,chen2012utility}. 
The theory of CFMMs and the dynamics around trading with CFMMs
have been studied in DeFi \cite{angeris2020improved,angeris2019analysis,angeris2021replicating,capponi2021adoption,bartoletti2021theory,bergault2022automated},
and many different DeFi applications have been deployed or proposed using different CFMM trading functions 
\cite{uniswapv2,uniswapv3,balancer,angeris2021replicatingwithoutoracles}.

\section{Preliminaries}
\label{sec:preliminaries}


\begin{definition}[CFMM]
\label{defn:cfmm}
A CFMM trades between two assets $X$ and $Y$, and has a set of asset reserves --- $x$ units of $X$ and $y$ units of $Y$.
Its trading rule is defined by its {\it trading function} $f(\cdot, \cdot)$ such that it accepts a trade of $\Delta_X$ units of $X$ in exchange for $\Delta_Y$ units of $Y$
if and only if $f(x, y)=f(x-\Delta_X, y + \Delta_Y)$.




\end{definition}

 All of the CFMM trading functions discussed in this work have the following properties.

\begin{assumption}
\label{ass:fn_form}
A trading function $f(\cdot, \cdot): \mathbb{R}_+^2 \rightarrow \mathbb{R}$ is continuous, non-negative, increasing in both coordinates, and strictly quasi-concave. Further, it is defined only on the non-negative orthant.
\end{assumption}

The assumption that $f$ is increasing, quasi-concave, and never holds a short position in any asset (and is therefore only defined on the non-negative orthant) is standard in the literature (e.g. \cite{angeris2020improved}).
We assume strict quasi-concavity for clarity of exposition. 
%
%
%
%
%
%
The CFMM's trading function implicitly defines a marginal exchange rate (the ``spot exchange rate'')
for a trade of infinitesimal size.

\begin{definition}[Spot Exchange Rate]
\label{defn:spot}
%
At asset reserves $(x_0, y_0),$ the spot exchange rate of a CFMM with trading function $f$ is
$-\frac{\partial f}{\partial X}/\frac{\partial f}{\partial Y}$ at $(x_0, y_0)$.

When $f$ is not differentiable, the spot exchange rate is any subgradient of $f$.
When $x_0=0$, the spot exchange rate is $[-\frac{\partial f}{\partial X}/\frac{\partial f}{\partial Y}, \infty)$,
and when $y_0=0$, the spot exchange rate is $[0,-\frac{\partial f}{\partial X}/\frac{\partial f}{\partial Y}]$.

\end{definition}

These definitions directly lead to some useful observations.
	We give the proofs in Appendix \ref{apx:prelim_obs}.


\begin{observation}
\label{obs:y_fn_p}

If $f$ is strictly quasi-concave, then for any constant $K>0$ and spot exchange rate $p$, 
there is a unique point $(x,y)$ where $f(x,y)=K$ and $p$ is a spot exchange rate at 
$(x,y)$.

\end{observation}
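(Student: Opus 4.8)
The plan is to recast the statement as a fact about supporting lines of a strictly convex body, namely the superlevel set $U_K := \{(x,y)\in\mathbb{R}_+^2 : f(x,y)\ge K\}$. By Assumption~\ref{ass:fn_form}, $f$ is continuous, so $U_K$ is closed and $\{f>K\}$ is open, hence $\{f>K\}\subseteq\operatorname{int}U_K$ and the level set is $S_K:=\{f=K\}=\partial U_K$ (within the domain). Quasi-concavity of $f$ makes $U_K$ convex, and \emph{strict} quasi-concavity makes it strictly convex in the geometric sense: for distinct $a,b\in S_K$ the open segment $(a,b)$ lies in $\{f>\min\{f(a),f(b)\}\}=\{f>K\}\subseteq\operatorname{int}U_K$, so $\partial U_K$ contains no segment. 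Monotonicity of $f$ in both coordinates makes $U_K$ closed under increasing either coordinate, which I use only to see that the supporting directions of interest have nonnegative ``price.'' (As is implicit in the statement, I take $K$ to be in the range of $f$ and $p$ to be in the range of spot rates realized on $S_K$, so that the point in question exists; the uniqueness argument below needs neither.)

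For a candidate spot rate $p\ge 0$, I would minimize the linear functional $\ell_p(x,y):=px+y$ over $U_K$. Since $\ell_p(x,y)\ge\min\{p,1\}\,(x+y)\to\infty$ on $\mathbb{R}_+^2$, it is coercive there, so $\ell_p$ attains its minimum over the closed set $U_K$ on a compact sublevel slice; let $(x_p,y_p)$ be a minimizer. Strict convexity of $U_K$ forces the minimizer to be unique: if $a\ne b$ were both minimizers, their midpoint $c$ would still be a minimizer by linearity but would lie in $\operatorname{int}U_K\subseteq\mathbb{R}_{++}^2$, and then a small enough displacement of $c$ in direction $-(p,1)$ stays inside $U_K$ while strictly lowering $\ell_p$, a contradiction. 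The same displacement argument shows $(x_p,y_p)\in\partial U_K$, and since $\{f>K\}$ is interior we get $f(x_p,y_p)=K$, i.e.\ $(x_p,y_p)\in S_K$. By construction $\{\ell_p=\ell_p(x_p,y_p)\}$ is a supporting line of $U_K$ at $(x_p,y_p)$ with inward normal $(p,1)$; where $f$ is differentiable with nonzero gradient this forces $\nabla f(x_p,y_p)\parallel(p,1)$ with a positive factor, i.e.\ $p$ is the spot exchange rate at $(x_p,y_p)$ per Definition~\ref{defn:spot}; at kinks and on the axes, Definition~\ref{defn:spot} declares the spot rate to be the set of all such values over the supporting lines at the point (a closed interval, or a half-line on an axis), and $p$ lies in that set precisely because $\{\ell_p=\ell_p(x_p,y_p)\}$ is one of them. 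This gives existence.

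Uniqueness is the converse direction. Suppose $p$ is a spot exchange rate at some $(x,y)\in S_K$ in the sense of Definition~\ref{defn:spot}. Then the corresponding line of slope $-p$ through $(x,y)$ is a supporting line of the convex set $U_K$ at $(x,y)$ --- this is exactly what Definition~\ref{defn:spot} encodes, since a (sub)gradient of the quasi-concave, increasing $f$ at $(x,y)$ is an inward normal to the superlevel set there. Hence $(x,y)$ minimizes $\ell_p$ over $U_K$, so $(x,y)=(x_p,y_p)$ by uniqueness of that minimizer. Therefore the point with $f(x,y)=K$ at which $p$ is a spot rate is unique.

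I expect the only delicate step to be the last piece of bookkeeping: matching ``$p$ is a spot exchange rate at $(x,y)$'' in the possibly set-valued, axis-extended sense of Definition~\ref{defn:spot} with ``the slope-$(-p)$ line through $(x,y)$ supports $U_K$,'' together with allowing the minimizer $(x_p,y_p)$ to land on a coordinate axis, where the half-line convention in Definition~\ref{defn:spot} is precisely what makes $p$ count as a spot rate there. Everything else is routine convex analysis --- strict convexity of $U_K$ is equivalent to linear functionals having unique minimizers over it, and a supporting line of a strictly convex body meets it in exactly one point.
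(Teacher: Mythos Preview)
Your argument is correct and rests on the same idea as the paper's: the superlevel set $\{f\ge K\}$ is strictly convex, so each supporting direction $(p,1)$ touches the boundary at a single point. The paper states this in one line (``no two points on the boundary can share a gradient or subgradient''), whereas you spell it out via the minimization of $\ell_p(x,y)=px+y$ and also handle existence and the subgradient/axis conventions explicitly; this is more thorough but not a different route.
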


\begin{observation}

\label{obs:y_fn_x}

Under Assumption~\ref{ass:fn_form}, for a given constant function value $K$,
 the amount of $Y$ in the CFMM reserves
uniquely specifies the amount of $X$ in the reserves, and vice versa.

\end{observation}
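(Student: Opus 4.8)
The plan is to prove a pure single-valuedness statement: once the level value $K$ is fixed, no horizontal line $\{y = y_0\}$ can meet the level set $\{(x,y) : f(x,y) = K\}$ in two distinct points, and symmetrically for vertical lines. Since the hypotheses in Assumption~\ref{ass:fn_form} (continuity, non-negativity, monotonicity in each coordinate, strict quasi-concavity) are all symmetric under interchanging the roles of $X$ and $Y$, I would prove the ``$y$ determines $x$'' direction in full and then remark that the ``vice versa'' direction is the identical argument with the coordinates swapped.

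First I would fix $K$ and suppose, for contradiction, that there are reserve vectors $(x_1, y_0)$ and $(x_2, y_0)$ with $f(x_1, y_0) = f(x_2, y_0) = K$ and, without loss of generality, $x_1 < x_2$. If ``increasing in the first coordinate'' is read strictly, this is already a contradiction, since then $f(x_1, y_0) < f(x_2, y_0)$. To also cover the non-decreasing reading, I would apply strict quasi-concavity to the segment joining the two points: the midpoint $\bigl(\tfrac{x_1+x_2}{2}, y_0\bigr)$ is a strict convex combination of $(x_1,y_0)$ and $(x_2,y_0)$, so $f\bigl(\tfrac{x_1+x_2}{2}, y_0\bigr) > \min\{f(x_1,y_0), f(x_2,y_0)\} = K$; but monotonicity in the first coordinate gives $f(x_2, y_0) \ge f\bigl(\tfrac{x_1+x_2}{2}, y_0\bigr) > K$, contradicting $f(x_2, y_0) = K$. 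I would note that the entire segment stays in the non-negative orthant, so the argument remains valid on the boundary (when $y_0 = 0$, or when one of the $x_i$ equals $0$), where Definition~\ref{defn:spot} handles the spot rate separately.

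I do not expect a real obstacle here; the only point requiring care is which monotonicity convention is in force, and the two-case argument above is robust to either. If one wanted the stronger statement that the level set is the graph of a continuous, strictly decreasing function of $y$ over an interval, that would additionally require an existence/connectedness argument from continuity and the intermediate value theorem, but the Observation as stated asserts only that the correspondence is single-valued, which is exactly what the contradiction argument delivers.
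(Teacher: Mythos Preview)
Your proof is correct and follows essentially the same approach as the paper's: both argue by contradiction that two points on a horizontal (or vertical) line cannot lie on the same level set, since this would violate monotonicity in the first (respectively second) coordinate. The paper's proof is a one-liner that reads ``increasing'' as strictly increasing and stops there; your additional case using strict quasi-concavity to handle a possible weak-monotonicity reading is extra robustness the paper does not include, but it is not a different route.
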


Observations \ref{obs:y_fn_p} and \ref{obs:y_fn_x} imply that the amounts of $X$ and $Y$ in the CFMM reserves can be written as functions $\mathcal X(p)$ and $\mathcal Y(p)$ of its spot exchange rate for the trading function equals constant $K.$

In the rest of the discussion, we describe CFMM reserve states by the amount of $Y$ in the reserves. 

\begin{observation}
\label{obs:y_nondecreasing}
$\mathcal Y(p)$ is monotone nondecreasing.
\end{observation}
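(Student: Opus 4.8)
The plan is to show that as the spot exchange rate $p$ of $X$ in terms of $Y$ increases, the amount $\mathcal Y(p)$ of $Y$ held in the CFMM reserves cannot decrease, where reserves always satisfy $f(x,y) = K$ for the fixed constant $K$. The intuition is straightforward: $p$ is the marginal price of $X$ measured in units of $Y$; when $X$ becomes more expensive relative to $Y$, the reserve state must hold more $Y$ (equivalently, less $X$, using Observation~\ref{obs:y_fn_x}). I would make this rigorous by exploiting quasi-concavity of $f$ together with the characterization of $p$ as (minus) the ratio of partials, i.e.\ the slope of the level curve $\{f = K\}$.

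First I would parametrize the level set. By Observation~\ref{obs:y_fn_x}, on $\{f(x,y) = K\}$ the coordinate $x$ is a well-defined function of $y$; write $x = g(y)$. Since $f$ is increasing in both coordinates, moving along the level curve forces $g$ to be strictly decreasing in $y$. By Observation~\ref{obs:y_fn_p}, each spot price $p$ corresponds to a unique point on this curve, so $\mathcal Y(p)$ is well-defined and it suffices to show the inverse map $y \mapsto p(y)$ (the spot price at the reserve point with $Y$-reserve equal to $y$) is monotone nondecreasing; then $\mathcal Y$, as its inverse correspondence, is monotone nondecreasing as well. The spot price at the point $(g(y), y)$ is $p(y) = -\frac{\partial f/\partial x}{\partial f/\partial y}$, which is exactly the negative of the slope $dx/dy = g'(y)$ of the level curve at that point (when $f$ is differentiable). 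So the claim reduces to: the level curve $x = g(y)$ of a strictly quasi-concave, increasing $f$ is convex, i.e.\ $g'$ is nondecreasing, equivalently $-g'(y) = p(y)$ is nonincreasing in $y$ --- wait, I need to be careful about signs here, so let me reconsider: as $y$ increases along the curve, $x$ decreases, and quasi-concavity of $f$ means the superlevel set $\{f \ge K\}$ is convex, so its lower boundary $x = g(y)$ is a convex function of $y$, hence $g'(y)$ is nondecreasing. Since $g'(y) < 0$, $|g'(y)|$ is nonincreasing, so $p(y) = -g'(y) = |g'(y)|$ is nonincreasing in $y$; therefore larger $p$ corresponds to smaller $y$, meaning $\mathcal Y$ is monotone \emph{nonincreasing}.

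At this point I realize the sign convention in Definition~\ref{defn:spot} must be read with $p$ as the price of $X$ in terms of $Y$, so that selling $X$ into the pool (increasing $x$, decreasing $y$) lowers $p$; under this reading $\mathcal X(p)$ is nonincreasing and $\mathcal Y(p)$ is nondecreasing, which is the stated claim. So the clean argument is: (i) by quasi-concavity the superlevel set $\{(x,y) : f(x,y) \ge K\}$ is convex; (ii) its efficient (lower-left) boundary is the level curve, which can be written as $y = h(x)$, a convex decreasing function of $x$; (iii) the spot price $p = -\frac{\partial f/\partial x}{\partial f/\partial y} = -h'(x)$ is therefore nonincreasing in $x$ (convexity of $h$ gives $h'$ nondecreasing, and $h' < 0$), hence strictly decreasing in $x$ up to the non-differentiable points where one takes subgradients; (iv) inverting, $x$ is a nonincreasing function of $p$, and since $y = h(x)$ with $h$ decreasing, $y = \mathcal Y(p)$ is nondecreasing in $p$. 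For the non-differentiable case I would replace derivatives by subgradients: the subdifferential of a convex function is a monotone operator, so the set-valued slope $\partial h(x)$ moves monotonically, and Definition~\ref{defn:spot}'s convention of assigning the full interval of subgradients (including the degenerate endpoints when $x=0$ or $y=0$) makes $p$ vary monotonically and cover all of $(0,\infty)$ without gaps.

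The main obstacle I anticipate is purely bookkeeping with orientation and the boundary/non-smooth cases: making sure the sign of $p$ in Definition~\ref{defn:spot} is used consistently so the conclusion comes out ``nondecreasing'' and not ``nonincreasing,'' and handling the endpoints $x = 0$ and $y = 0$ where the spot rate is an interval rather than a point, using monotonicity of the subdifferential of the convex boundary function rather than a derivative. The core geometric fact --- convexity of the superlevel set of a quasi-concave function forces the slope of its boundary to be monotone --- is standard and is where all the content lies; everything else is translating between the $(x,y)$, $y = h(x)$, and $p \mapsto \mathcal Y(p)$ descriptions.
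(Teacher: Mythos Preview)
Your proposal is correct and rests on exactly the same geometric fact the paper uses: quasi-concavity of $f$ makes the superlevel set $\{f\ge K\}$ convex, which forces the slope of the level curve (i.e.\ the spot price) to vary monotonically along it. The paper's proof is the one-line contrapositive of this (``if $\mathcal Y(p)$ were decreasing, the superlevel set could not be convex''), while you spell out the direct argument with the boundary parametrization $y=h(x)$; after your sign-convention detour you land on the same reasoning.
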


\section{Model}
\label{sec:model}

As used in Definition~\ref{defn:spot}, we adopt the notation wherein exchange rates are given as the rate of a unit of $X$ in terms of $Y$ (i.e., a trade of $x$ units of X for $y$ units of $Y$ implies an exchange rate of $p^\prime = \frac{y}{x}$). 
%
%
Unless specified otherwise, $p$ refers to the CFMM spot exchange rate, $\hat p$ denotes the exchange rate in an external market, and $p^\prime$ denotes the exchange rate of a particular trade.

We now turn to our trading model and our formulation of market liquidity.

\begin{definition}[System Model] \label{defn:system_model}
	\begin{enumerate}
		\item
			There are two assets $X$ and $Y$, and a relatively liquid ``primary'' external market that provides a (public) reference exchange rate $\hat p$
			between $X$ and $Y$.
		\item
			An LP creates a CFMM that trades between $X$ and $Y$
			 by providing an initial set of reserves and choosing a CFMM trading function.
		\item Whenever the reference exchange rate $\hat p$ on the external market changes, arbitrageurs
		immediately realign the CFMM's spot exchange rate $p$ with the reference exchange rate.
		\footnote{
			There is always a strictly profitable arbitrage trade to be made when the CFMM's spot exchange rate
			differs from the reference exchange rate \cite{angeris2021optimal}; this phenomena is akin to 
			``stale quote sniping'' in traditional exchanges \cite{baldauf2020high}.
		} \label{item:arb}
 
		\item 	At each time step, a trade request arrives with probability $q: 0<q<1$ (Definitions \ref{defn:trade} and \ref{defn:trademodel}).
	\end{enumerate}

 We assume, however, that for small fluctuations in the CFMM spot exchange rate resulting from small trades, arbitrageurs do not realign the CFMM spot exchange rate. This assumption is reasonable since the trading fee and other associated costs (e.g., gas fee in DeFi) make such an action unprofitable. 
\end{definition}


Since the reference exchange rate is public knowledge, traders using the CFMM can compare the exchange rate that a CFMM offers
with the reference rate. This difference is the \emph{slippage} of a trade.

\begin{definition}[Slippage]
\label{defn:slippage}
The exchange rate of a trade of $y$ units of $Y$ for $x$ units of $X$ is $p^\prime=y/x$. 
Relative to a reference exchange rate of $\hat p$ units of $Y$ per $X$,
the \emph{slippage} of this trade is $(p^\prime- \hat p)/ \hat p.$ 

\end{definition}

Traders in our model are willing to tolerate a fixed amount of maximum slippage $\varepsilon$.  

\begin{definition}[Trade Request]
\label{defn:trade}

A \emph{Trade Request} with a CFMM is a request to SELL or BUY $k$ units of $X$ or $Y$,
on the condition that the slippage of the trade is at most $\varepsilon$ relative to the reference exchange rate $\hat p$ --- in other words,
a trade request is a tuple
(SELL or BUY, X or Y, k, $\hat p$, $\varepsilon$). 
\end{definition}

%

\begin{definition}[Trade Success]
\label{defn:trade_success}
A trade request buying $X$ for $Y$ with maximum slippage $\varepsilon$ 
succeeds if and only if the CFMM can satisfy the entire trade with an exchange rate 
$p^\prime$ and, for the reference exchange rate $\hat p$,
$p^\prime/\hat p \leq 1+\varepsilon$. Similarly, a trade request selling $X$ for $Y$ succeeds if and only if the CFMM can satisfy the entire trade with an exchange rate 
$p^\prime$ such that
$p^\prime /\hat p \geq 1/(1+\varepsilon)$.
%
%
\end{definition}

Trade requests are not partially fulfilled.  Failed requests are not retried and are deleted. If a request succeeds, the CFMM transfers assets accordingly.
Otherwise, the CFMM's reserves are unchanged.
This notion of trade success mirrors the operation of CFMMs in practice;
users supply a trade size, exchange rate, and slippage parameter when submitting a trade request 
(e.g. \cite{uniswapinterface,balancerinterface}).

Putting these definitions together gives the trading model of our study.

\begin{definition}[Trade Model]
\label{defn:trademodel}
There exists a static (in the short term) reference exchange rate $\hat p$. 
The size of the trade request is drawn from distribution $size(\cdot)$. The choice of X or Y is arbitrary, but the trade is for BUY or SELL with equal probability.  
Each request has the same maximum slippage $\varepsilon$.

\end{definition}

Every successful trade changes the reserves of the CFMM -- this model induces a Markov chain on the state of the CFMM's reserves.  We assume that the Markov chain, at a given reference exchange rate, has sufficient time to mix before the reference exchange rate changes.  
Natural restrictions on the distribution of the trades (made explicit below) make this Markov chain ergodic.  We study, therefore,
the expected fraction of trade requests that a CFMM can satisfy when its state is drawn from the stationary distribution of this Markov chain
(we formalize this notion in Definition \ref{defn:yfail}).

\subsection{Liquidity}
\label{sec:liquidity}

Informally, a CFMM with high \emph{liquidity} at a given exchange rate can sell many units of $X$ before its spot exchange rate changes substantially.
Definition \ref{defn:Leps} captures precisely the set of asset reserve states of a CFMM in which the CFMM's spot exchange rate $p$ is at most a $1+\varepsilon$ 
factor away from the reference exchange rate $\hat p$.  
Recall from Observation \ref{obs:y_fn_p} that the amount of asset $Y$ in a CFMM's reserves can be expressed as a function $Y(p)$ of the CFMM spot exchange rate $p$. 

\begin{definition}[$L_\varepsilon(\hat p)$]
\label{defn:Leps}

$L_\varepsilon(\hat p)$ is the interval $\left[\mathcal Y( \frac{\hat p}{(1+\varepsilon)}) ,\mathcal Y(\hat p(1+\varepsilon)) \right]$.
%
  
\end{definition}
By Observation \ref{obs:y_nondecreasing}, $\mathcal Y( \frac{\hat p}{(1+\varepsilon)}) \leq \mathcal Y(\hat p(1+\varepsilon))$, so $L_\varepsilon(\hat p)$ is always well-defined.

 Recall that $\hat p$ is the exchange rate of a unit of $X$ in terms of $Y$.  As motivation for the choice of this definition, consider the case where $X$ is a volatile asset and $Y$ is the base numeraire currency. Here, Definition~\ref{defn:Leps} precisely captures the amount of \emph{capital} allocated to market-making in a range where the spot exchange rate of the volatile asset $X$ is within a $1+\varepsilon$ factor of its reference exchange rate.  
 In the general case where neither $X$ nor $Y$ is the base numeraire currency, the actual amount of capital (in terms of base numeraire) allocated to market-making at a certain price point $\hat p$ depends on the exchange rates of both $X$ and $Y$ in the base numeraire. 
 However, a similar intuition holds.



%
Since $\varepsilon$ is expected to be small in practice, and to facilitate easier analysis in the rest of the paper, it is useful to extend Definition \ref{defn:Leps} to study the liquidity at a single exchange rate.

\begin{definition}[Liquidity]
\label{defn:liquidity}

The liquidity at an exchange rate $\hat p$, $L(\hat p)$, is $\lim_{\theta\rightarrow 0} \frac{\vert L_\theta(\hat p)\vert}{2\ln(1+\theta)}$.

\end{definition}

Observe that $L(\hat p)$ naturally captures an allocation of capital to market-making on the full range of exchange rates. Recall from the System Model~\ref{defn:system_model} (point~\ref{item:arb}) that the arbitrageurs always realign the CFMM's spot exchange rate to the reference exchange rate. Therefore, here on, we denote the liquidity of a CFMM as a function of its spot exchange rate. 
Lemma~\ref{lemma:y_differentiable} enables a natural restatement of $L(p)$ in terms of $\mathcal Y(p)$ in Lemma~\ref{lemma:y_diffable}. We include the proof of Lemma \ref{lemma:y_differentiable}  in Appendix~\S \ref{apx:prelim_obs}. Lemma~\ref{lemma:y_diffable} follows from Definitions~\ref{defn:Leps} and~\ref{defn:liquidity}.

\newcommand{\lemmaydiffable}
{
	The function $\mathcal Y(\cdot)$ is differentiable when the trading function $f$ is twice-differentiable on the nonnegative orthant, $f$ is $0$ when $x=0$ or $y=0$,
	and Assumption \ref{ass:fn_form} holds.
}

\begin{lemma}
\label{lemma:y_differentiable}

\lemmaydiffable{}

\end{lemma}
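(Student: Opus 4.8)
The plan is to show that the amount of asset $Y$ in the reserves, viewed as a function of the spot exchange rate along a fixed level set $\{f(x,y)=K\}$, is differentiable by invoking the implicit function theorem twice.

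First I would fix $K>0$ and parametrize the level set $\{f(x,y)=K\}$. By Observation~\ref{obs:y_fn_x}, under Assumption~\ref{ass:fn_form} the equation $f(x,y)=K$ determines $y$ as a function of $x$ on an open interval; since $f$ is twice-differentiable and increasing in both coordinates, $\partial f/\partial x$ and $\partial f/\partial y$ are strictly positive, so by the implicit function theorem this curve $x \mapsto y_K(x)$ is itself twice-differentiable, with $y_K'(x) = -(\partial f/\partial x)/(\partial f/\partial y)$ evaluated on the curve. By Definition~\ref{defn:spot}, this quantity is exactly $-p$ (the negative of the spot exchange rate), so the spot exchange rate $p(x) = (\partial f/\partial x)/(\partial f/\partial y)$ along the curve is a $C^1$ function of $x$.

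Next I would argue that $x \mapsto p(x)$ is strictly monotone and hence invertible with differentiable inverse. Strict quasi-concavity of $f$ forces the level curve to be strictly convex in the $(x,y)$-plane, so its slope $y_K'(x) = -p(x)$ is strictly increasing in $x$; equivalently $p(x)$ is strictly decreasing. Combined with Observation~\ref{obs:y_fn_p} (which gives the bijection between spot rates and points on the level set), the map $x \mapsto p(x)$ is a continuous strictly monotone bijection onto its range. To get differentiability of the inverse $p \mapsto x(p)$ I need $p'(x) \neq 0$; this is where the boundary hypotheses enter. The condition that $f=0$ on the axes, together with $f$ increasing and the level curve staying in the open orthant for $K>0$, ensures the curve has no vertical or horizontal asymptotic segment and that $p$ sweeps through a full open interval of values without flat spots — I would rule out $p'(x)=0$ by noting that a zero of $p'$ would, by the formula for the second derivative of an implicitly-defined strictly convex curve, contradict strict quasi-concavity (strict convexity of the level curve gives $y_K''(x) > 0$, i.e. $p'(x) < 0$ strictly). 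Then $x(p)$ is differentiable by the inverse function theorem, and $\mathcal Y(p) = y_K(x(p))$ is a composition of differentiable functions, hence differentiable.

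The main obstacle I expect is the careful handling of the boundary behavior and the endpoints of the range of $p$: strict quasi-concavity as stated gives strict convexity of level curves only in a weak sense unless one also uses twice-differentiability to upgrade it to $y_K'' > 0$ pointwise, and one must be sure the relevant derivatives do not degenerate as $x \to 0$ or $y \to 0$ (which is precisely why the hypotheses ``$f$ twice-differentiable on the nonnegative orthant'' and ``$f=0$ on the axes'' are imposed). I would dispatch this by working on the open orthant where all the implicit-function-theorem hypotheses are clean, establishing differentiability of $\mathcal Y$ there, and observing that $L(p)$ (Definition~\ref{defn:liquidity}) only ever evaluates $\mathcal Y$ at interior spot rates, so differentiability on the open range of $p$ suffices for the intended application.
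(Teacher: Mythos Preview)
Your proposal is correct and follows essentially the same route as the paper: parametrize the level curve as $y$ in terms of $x$, view the spot rate as a differentiable function of $x$ via the implicit function theorem, invert, and compose. If anything, you are more careful than the paper's own proof, which asserts that the inverse $h=g^{-1}$ is differentiable without explicitly checking $g'(x)\neq 0$; your discussion of $y_K''>0$ via strict convexity of the level curve addresses exactly that point.
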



\newcommand{\lemmaypdiffable}
{
If the function $\mathcal Y(\cdot)$ is differentiable, then $L(p)=\frac{d \mathcal Y(p)}{d\ln(p)}$.
}

\begin{lemma}
\label{lemma:y_diffable}
\lemmaypdiffable{}

\end{lemma}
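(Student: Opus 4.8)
The plan is to unwind the two definitions in play and reduce the statement to the elementary fact that a symmetric difference quotient of a differentiable function converges to its derivative. First I would use Observation~\ref{obs:y_nondecreasing}: since $\mathcal Y$ is monotone nondecreasing and $\frac{p}{1+\theta}\le p(1+\theta)$ for $\theta>0$, the interval $L_\theta(p)$ from Definition~\ref{defn:Leps} has length $|L_\theta(p)| = \mathcal Y(p(1+\theta)) - \mathcal Y\!\left(\tfrac{p}{1+\theta}\right)$. Substituting this into Definition~\ref{defn:liquidity} gives
\[
L(p) \;=\; \lim_{\theta \to 0^+} \frac{\mathcal Y(p(1+\theta)) - \mathcal Y\!\left(\frac{p}{1+\theta}\right)}{2\ln(1+\theta)}.
\]

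Next I would pass to logarithmic coordinates. Set $g(u) := \mathcal Y(e^{u})$ and $u_0 := \ln p$; by the chain rule $g$ is differentiable wherever $\mathcal Y$ is, with $g'(u_0) = p\,\mathcal Y'(p)$, which is exactly the quantity denoted $\frac{d\mathcal Y(p)}{d\ln p}$. Writing $h := \ln(1+\theta)$, which runs to $0^+$ as $\theta \to 0^+$, the numerator above is $\mathcal Y(e^{u_0+h}) - \mathcal Y(e^{u_0-h}) = g(u_0+h) - g(u_0-h)$ and the denominator is $2h$, so
\[
L(p) \;=\; \lim_{h \to 0^+} \frac{g(u_0+h) - g(u_0-h)}{2h}.
\]

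Finally I would split this symmetric quotient as $\tfrac12\!\left(\tfrac{g(u_0+h)-g(u_0)}{h} + \tfrac{g(u_0)-g(u_0-h)}{h}\right)$; differentiability of $g$ at $u_0$ makes each bracketed term tend to $g'(u_0)$, so the limit equals $g'(u_0) = \frac{d\mathcal Y(p)}{d\ln p}$. There is no genuine obstacle here beyond bookkeeping; the one point worth a sentence of care is that $\theta$ (hence $h$) approaches $0$ only from above, so rather than invoking a symmetric-derivative theorem as a black box one should perform the split above, where the two ordinary (one-sided) difference quotients both converge to $g'(u_0)$ directly from the definition of the derivative. I would also note that the hypothesis ``$\mathcal Y$ differentiable'' is precisely what is needed (supplied, e.g., via Lemma~\ref{lemma:y_differentiable}), so no additional regularity assumptions enter.
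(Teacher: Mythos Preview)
Your proof is correct and follows the same approach as the paper, which simply states ``Follows from Definitions~\ref{defn:Leps} and~\ref{defn:liquidity}.'' You have just made explicit the change to logarithmic coordinates and the symmetric difference-quotient argument that the paper leaves implicit in that one line.
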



The definition of liquidity implied by Lemma~\ref{lemma:y_diffable} is closely related to other definitions of liquidity in the literature.
The Uniswap V3 whitepaper \cite{uniswapv3} uses $\frac{d \mathcal Y(p)}{d\sqrt{p}}$, which is equivalent to $L(p)/\sqrt{p}$. Papers that build strategies for LPs on the Uniswap V3 protocol also adopt the same definition of liquidity \cite{neuder2021strategic,heimbach2022risks,fan2022differential} as \cite{uniswapv3}.
Milionis et al. \cite{milionis2022automated} use $\frac{-d\mathcal X(p)}{dp}$ for liquidity which is equivalent to $L(p)/p^2$; they also introduce a notion of ``instantaneous Loss-Versus-Rebalancing'' for the CFMM expected cost of operation. This quantity is proportional to $-\frac{p^2 d \mathcal X(p)}{dp}$ (Theorem 1, \cite{milionis2022automated}),
which is equivalent to $L(p)$ up to constant multipliers. 
%
%
%
%
 We conclude this subsection with some convenient facts about $L(p)$.

\newcommand{\observationliquidity}
{
	\begin{enumerate}
\item $ L(p)  =   \frac{d \mathcal Y(p)}{d~\ln(p)} = p\frac{d \mathcal X(p)}{d~\ln(1/p)}$ (when $\mathcal Y(p)$ is differentiable).
        \item 
	$ |L_\varepsilon(\hat p)| = \int_{\hat p/(1+\varepsilon)}^{\hat p(1+\varepsilon)}\frac{L(p)}{p} dp.$
 \item
	The amount of $Y$ that enters the CFMM's reserves as the spot exchange rate moves from $p_1$ to $p_2$ (for $p_1 < p_2$) is
	$\int_{p_1}^{p_2} \frac{L(p)}{p} dp.$
 
	\item
	The amount of $Y$ in a CFMM's reserves, with current spot exchange rate $p_0$, is $\mathcal Y(p_0) = \int_{0}^{p_0} \frac{L(p)}{p}dp$.

	\item
	The amount of $X$ in a CFMM's reserves, with current spot exchange rate $p_0$, is $\mathcal X(p_0) = \int_{p_0}^{\infty} \frac{L(p)}{p^2}dp$.

\end{enumerate}
}
\begin{observation}
\label{obs:alloc}
    \observationliquidity{}
\end{observation}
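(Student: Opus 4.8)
The plan is to derive everything from one differential relation along a level set of $f$ and then apply the fundamental theorem of calculus. Fix the constant function value $K$, and recall from Observations~\ref{obs:y_fn_p}--\ref{obs:y_fn_x} that the level set $\{f=K\}$ is parametrized by the spot exchange rate as $p\mapsto(\mathcal X(p),\mathcal Y(p))$. A trade that moves the spot rate from $p$ to $p+dp$ takes the reserves from $(\mathcal X(p),\mathcal Y(p))$ to $(\mathcal X(p+dp),\mathcal Y(p+dp))$, i.e.\ it removes $-d\mathcal X=-\mathcal X'(p)\,dp$ units of $X$ and adds $d\mathcal Y=\mathcal Y'(p)\,dp$ units of $Y$; as the trade size shrinks to zero its exchange rate tends to the spot rate $p$, so $d\mathcal Y/(-d\mathcal X)\to p$, giving the identity $\mathcal Y'(p)=-p\,\mathcal X'(p)$ (consistent with $\mathcal Y'(p)\ge 0$ from Observation~\ref{obs:y_nondecreasing}). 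Here $\mathcal X$ is differentiable by the analogue of Lemma~\ref{lemma:y_differentiable} with the roles of $X$ and $Y$ exchanged.

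For part~1, Lemma~\ref{lemma:y_diffable} already gives $L(p)=\frac{d\mathcal Y(p)}{d\ln p}=p\,\mathcal Y'(p)$; substituting $\mathcal Y'(p)=-p\,\mathcal X'(p)$ gives $L(p)=-p^2\mathcal X'(p)$, and since the chain rule yields $\frac{d\mathcal X(p)}{d\ln(1/p)}=-p\,\mathcal X'(p)$, we get $p\frac{d\mathcal X(p)}{d\ln(1/p)}=-p^2\mathcal X'(p)=L(p)$, proving both equalities. In particular $\mathcal Y'(p)=L(p)/p$ and $\mathcal X'(p)=-L(p)/p^2$. Parts~2 and~3 are then immediate from the fundamental theorem of calculus together with the monotonicity of $\mathcal Y$: the amount of $Y$ entering the reserves as the spot rate moves from $p_1$ to $p_2>p_1$ is $\mathcal Y(p_2)-\mathcal Y(p_1)=\int_{p_1}^{p_2}\mathcal Y'(p)\,dp=\int_{p_1}^{p_2}\frac{L(p)}{p}\,dp$, and by Definition~\ref{defn:Leps} and Observation~\ref{obs:y_nondecreasing}, $|L_\varepsilon(\hat p)|=\mathcal Y(\hat p(1+\varepsilon))-\mathcal Y(\hat p/(1+\varepsilon))=\int_{\hat p/(1+\varepsilon)}^{\hat p(1+\varepsilon)}\frac{L(p)}{p}\,dp$.

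Parts~4 and~5 require, in addition, the boundary behaviour $\lim_{p\to0^+}\mathcal Y(p)=0$ and $\lim_{p\to\infty}\mathcal X(p)=0$: both limits exist by monotonicity and boundedness, and if $\lim_{p\to0^+}\mathcal Y(p)=c>0$ then $(\mathcal X(p),\mathcal Y(p))$ would converge to a point with positive second coordinate, hence (using that $f$ vanishes on the axes, as in the hypotheses of Lemma~\ref{lemma:y_differentiable}, and is increasing) to a point where the spot rate stays bounded away from $0$, a contradiction; the same argument applied to $\mathcal X$ as $p\to\infty$ gives $\mathcal X(p)\to0$. Then $\mathcal Y(p_0)=\mathcal Y(p_0)-\lim_{p\to0^+}\mathcal Y(p)=\int_0^{p_0}\frac{L(p)}{p}\,dp$ and $\mathcal X(p_0)=\mathcal X(p_0)-\lim_{p\to\infty}\mathcal X(p)=-\int_{p_0}^{\infty}\mathcal X'(p)\,dp=\int_{p_0}^{\infty}\frac{L(p)}{p^2}\,dp$, the improper integral converging since $\mathcal X$ is bounded. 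The main obstacle is precisely this last step — justifying the two boundary limits and the convergence of the improper integral cleanly from Assumption~\ref{ass:fn_form} and the vanishing of $f$ on the axes — while the rest is the fundamental theorem of calculus applied to the relations established in part~1.
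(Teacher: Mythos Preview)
Your proof is correct and follows essentially the same approach as the paper: derive the differential relation between $\mathcal X$ and $\mathcal Y$ along the level set, combine with Lemma~\ref{lemma:y_diffable}, and integrate. The paper's one-line justification simply asserts the relation (written there as $d\mathcal X(p)=p\,d\mathcal Y(p)$, which is dimensionally and sign-wise garbled; your $\mathcal Y'(p)=-p\,\mathcal X'(p)$ is the correct form) and then says parts 2--5 follow from Lemma~\ref{lemma:y_diffable}. You go further than the paper by explicitly justifying the boundary limits $\mathcal Y(p)\to 0$ as $p\to 0^+$ and $\mathcal X(p)\to\infty$ as $p\to\infty$, which are needed for parts 4 and 5 but which the paper never addresses; your argument there is sound under the twice-differentiability and vanishing-on-axes hypotheses of Lemma~\ref{lemma:y_differentiable} (continuity of the gradient rules out a finite interior limit point with spot rate $0$ or $\infty$).
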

\label{lemma:defn_independent}
 Point 1 follows from the fact that $d \mathcal{X}(p) = p d \mathcal{Y}(p)$ and points 2-5 follow from Lemma~\ref{lemma:y_diffable}.
\subsection{CFMM Inefficiency}
\label{sec:inefficiency}

We need an expression approximating the fraction of trade requests a CFMM fails
to satisfy.  As discussed above, the trading model given in Definition \ref{defn:trademodel} induces a Markov chain
on the state of a CFMM's reserves. We wish to quantify the expected fraction of trades
that fail during the evolution of this chain.  We assume that the reference exchange rate
changes relatively infrequently (so that this Markov chain has time to mix)
and study the chain's stationary distribution.

The precise details of the induced Markov chain (we give an example below and another in Appendix \S\ref{sec:cont}) 
depends on the trade size distribution
(and instantiation-specific assumptions).  However, common to many natural distributions
is the phenomenon that (when the reference exchange rate is $\hat p$) 
the chance that a trade request of size $k$ units of $Y$
fails is approximately $\frac{k}{L_\varepsilon(\hat p)}$. 
This approximation is closest when the sizes of the trades are much smaller than $L_\varepsilon(\hat p)$. 

\subsubsection{Example: Constant Trade Sizes}
%
%
\begin{definition}[Size-k Trade Distribution]
\label{defn:dist_k}
At each time step, a trade request arrives with probability $q: 0<q<1$ and buys or sells $k$ units of $Y$, 
where buying or selling is chosen with equal probability (and each request tolerates a constant slippage $\varepsilon$).
\end{definition}

For the rest of this section, assume that the reference exchange rate is some unchanging $\hat p$.  The requirement that $q<1$ ensures
that the Markov chain is ergodic.

\newcommand{\lemmabounded}
{

When trades are drawn from the size-k trade distribution (Definition \ref{defn:dist_k}), if the CFMM starts with $y_0$ units of $Y$, 
then the stationary distribution of the induced Markov chain
is uniform over the points 
$\lbrace y_0+ kn~\vert ~n\in \mathbb{Z},~ n_{min} \leq n \leq n_{max}\rbrace$ for some integers $n_{min},~ n_{max}$.

Furthermore, $n_{max}-n_{min}+2 \geq \frac{\vert L_\varepsilon( \hat p) \vert}{k} \geq n_{max}-n_{min} - 2$, where $ \hat p$ is the reference exchange rate.
}

\begin{lemma}
\label{lemma:bounded}
\lemmabounded{}

\end{lemma}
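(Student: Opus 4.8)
The plan is to analyze the induced Markov chain directly on the one-dimensional state space of reserves indexed by the integer $n$, where the CFMM holds $y_0 + kn$ units of $Y$. First I would observe that every successful trade changes the amount of $Y$ in the reserves by exactly $\pm k$: a BUY of $k$ units of $Y$ by the CFMM adds $k$, a SELL removes $k$. Hence the reachable states form a subset of $\{y_0 + kn : n \in \mathbb{Z}\}$, and from any fixed starting point the chain moves on this lattice. The core point is that the set of states from which a trade in a given direction succeeds is an \emph{up-set} (resp.\ down-set) in $n$: by Observation~\ref{obs:y_nondecreasing} and Definition~\ref{defn:trade_success}, selling $X$ for $Y$ (which increases $Y$, i.e.\ increases $n$) succeeds precisely when the current spot rate is low enough that the resulting exchange rate $p'$ satisfies $p'/\hat p \ge 1/(1+\varepsilon)$; since $\mathcal Y(\cdot)$ is monotone, this holds for all $n$ below some threshold $n_{max}$ and fails above it. Symmetrically, buying $X$ for $Y$ succeeds for all $n$ above some threshold $n_{min}$. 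Thus the chain is a lazy nearest-neighbor random walk on $\{n_{min}, n_{min}+1, \dots, n_{max}\}$ (after possibly shifting $y_0$ so these are the truly reachable states), with symmetric transition probabilities $q/2$ up and $q/2$ down in the interior, reflecting/holding at the two ends, and a holding probability $1-q>0$ everywhere ensuring aperiodicity; irreducibility on this finite interval is immediate, giving ergodicity.

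Next I would verify that the stationary distribution is uniform. Because the transition kernel is symmetric in the interior ($P(n,n+1) = P(n+1,n) = q/2$) and the boundary behavior only redistributes the "missing" mass back to the boundary state itself (a failed trade leaves the state unchanged), the chain is reversible with respect to the uniform measure: the detailed-balance equations $\pi(n) P(n,n+1) = \pi(n+1) P(n+1,n)$ are satisfied by constant $\pi$, and the self-loops impose no further constraint. Hence $\pi$ is uniform over $\{y_0 + kn : n_{min} \le n \le n_{max}\}$. This establishes the first claim.

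For the quantitative bound on $n_{max} - n_{min}$, I would translate the success thresholds into statements about $\mathcal Y$. The largest $n$ from which a SELL still succeeds is the largest $n$ with $\mathcal Y(y_0 + kn)$-state spot rate at least $\hat p/(1+\varepsilon)$; equivalently, after the trade the reserve lies at $\mathcal Y$-value at most $\mathcal Y(\hat p(1+\varepsilon))$ — more carefully, one shows the reachable states span essentially the interval of $Y$-values $[\mathcal Y(\hat p/(1+\varepsilon)), \mathcal Y(\hat p (1+\varepsilon))] = L_\varepsilon(\hat p)$, because a trade succeeds exactly when it keeps the post-trade spot rate within the slippage window, and by Definition~\ref{defn:trade_success} the post-trade rate is the trade's exchange rate $p'$. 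The number of lattice points of spacing $k$ inside an interval of length $|L_\varepsilon(\hat p)|$ is between $|L_\varepsilon(\hat p)|/k - 1$ and $|L_\varepsilon(\hat p)|/k + 1$; accounting for the fact that the extreme reachable states may sit just outside the interval (one can overshoot by up to one step on each side, since a trade from just inside the boundary is still allowed) and just inside it contributes the $\pm 2$ slack, yielding $n_{max} - n_{min} + 2 \ge |L_\varepsilon(\hat p)|/k \ge n_{max} - n_{min} - 2$.

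The main obstacle I anticipate is the careful bookkeeping at the two boundaries: precisely pinning down which lattice point is the last one from which a trade succeeds, versus which is the last one \emph{reachable} (these can differ by one step), and confirming that the off-by-constant slack is genuinely at most $2$ and not larger. This requires being careful that "success" is defined by the \emph{post-trade} exchange rate $p'$ (Definition~\ref{defn:trade_success}) rather than the pre-trade spot rate, and that $\mathcal Y$ being only nondecreasing (possibly flat or with jumps, absent the extra smoothness of Lemma~\ref{lemma:y_differentiable}) does not break the monotone-threshold structure — which it does not, since an up-set under a monotone map pulls back to an up-set. Everything else is a standard finite-state reversible-chain argument.
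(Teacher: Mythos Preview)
Your approach is essentially the same as the paper's: lattice of reachable states, threshold structure giving $n_{\min}$ and $n_{\max}$, symmetric reflecting walk with uniform stationary distribution, and a lattice-point count for the $\pm 2$ bound. The paper even cites a standard reference for the uniform stationary distribution rather than writing out detailed balance, so your reversibility argument is a slight elaboration of the same step.

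There is one confusion worth correcting. You repeatedly identify the trade's exchange rate $p'$ (Definition~\ref{defn:slippage}, the overall rate $y/x$) with the \emph{post-trade spot rate}. These are not the same: $p'$ is the average rate across the trade, and for a nonzero-size trade it lies strictly between the pre- and post-trade spot rates. The paper handles this by explicitly invoking the quasi-concavity of $f$ to sandwich $p'$ between the two spot rates; this is what justifies both the monotone-threshold structure you want and the precise $\pm 2$ slack (one step of overshoot is possible because $p'$ can still satisfy the slippage bound when the post-trade spot rate does not, and one step of undershoot is ruled out because if the post-trade state is in $L_\varepsilon(\hat p)$ the trade must succeed). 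Your monotonicity-of-$\mathcal Y$ argument alone does not give monotonicity of $p'$ in $n$; you need quasi-concavity for that. Once you insert that observation, your bookkeeping at the boundaries matches the paper's and the proof goes through.
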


\begin{proof}

The only states reachable from $y_0$ under the size-k trade distribution are a subset of the points
$\lbrace y_0 + kn ~\vert ~ n\in \mathbb{Z}\rbrace$.  
A trade of size $k$ and maximum slippage $\varepsilon$ fails if the spot exchange rate of the CFMM is already above $\hat p(1+\varepsilon)$.
Thus, there must be some $n_{max}$ such that $y_0 + kn_{max}$ upper bounds the reachable state space.  A similar argument shows that $n_{min}$ must exist.
Note that $n_{min}$ and $n_{max}$ always exist, even when $k \gg \vert L_\varepsilon(p)\vert$ (in which case $n_{min}=n_{max}=0$).

By the quasi-concavity of the CFMM trading function, the overall exchange rate of a trade must be between the spot exchange rates
before and after the trade.  Therefore, 
for any $n\in\mathbb{Z}$, a trade to sell $k$ units of $Y$ must succeed if $y_0+kn + k\in L_\varepsilon(p)$.
Thus, $n_{max}$ must be such that $y_0+(n_{max}+1)k \notin L_\varepsilon(\hat p)$ and $y_0+(n_{max}-1)k\in L_\varepsilon(p)$.
A similar argument holds for $n_{min}$.

In other words, the set of reachable states is a sequence of discrete points, all but the endpoints of which must be in
$L_\varepsilon(\hat p)$.  Thus, $(n_{max}+1)-(n_{min}-1) \geq \frac{\vert L_\varepsilon(\hat p)\vert}{k}$, 
and $(n_{max}-1) - (n_{min} + 1) \leq \frac{\vert L_\varepsilon(\hat p)\vert}{k}$.

The Markov chain, therefore, is a random walk on a finite sequence of points with an equal probability of moving in either direction (remaining in place at the endpoints instead of walking beyond the end). Standard results on Markov chains (e.g. Example 1.12, \cite{levin2017markov}) show that the Markov chain is ergodic and the stationary distribution is uniform over these points.
\extraqed{}
\end{proof}

Once this Markov chain mixes, therefore, the chance at any timestep that a trade fails is the chance that the trade
fails if the CFMM is in a randomly sampled state on this Markov chain.

\begin{lemma}
\label{lemma:approxfail}
When trades are drawn from the discrete distribution of size $k$ (Definition \ref{defn:dist_k}),
the probability that a trade fails is between $\min\left(1, \left\vert \frac{k}{\vert L_\varepsilon(\hat p)\vert - k} \right\vert \right)$ and
$\frac{k}{\vert L_\varepsilon(\hat p)\vert + k}$. 



\end{lemma}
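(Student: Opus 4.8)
The plan is to build directly on Lemma~\ref{lemma:bounded}, which already tells us that the stationary distribution is uniform over a discrete sequence of $N+1$ equally spaced points $\{y_0 + kn : n_{min}\le n\le n_{max}\}$, where $N := n_{max}-n_{min}$ satisfies $\frac{|L_\varepsilon(\hat p)|}{k} - 2 \le N \le \frac{|L_\varepsilon(\hat p)|}{k}$ (rearranging the inequality in Lemma~\ref{lemma:bounded}). So the first step is to compute, given uniform weighting over these $N+1$ states, the probability that a freshly arriving trade request fails. A request is equally likely to be a BUY or a SELL of $k$ units; conditioned on the current state $y_0+kn$, a SELL of $k$ units of $Y$ fails exactly when the resulting state would carry the spot rate above $\hat p(1+\varepsilon)$, which (by the endpoint characterization in the proof of Lemma~\ref{lemma:bounded}) happens precisely when $n = n_{max}$, and symmetrically a BUY fails precisely when $n = n_{min}$. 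Thus among the $2(N+1)$ equally likely (state, direction) pairs, exactly two are failures, giving failure probability $\frac{2}{2(N+1)} = \frac{1}{N+1}$.

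The second step is to convert the bound $\frac{|L_\varepsilon(\hat p)|}{k} - 2 \le N \le \frac{|L_\varepsilon(\hat p)|}{k}$ into the claimed two-sided bound on $\frac{1}{N+1}$. Since $\frac{1}{N+1}$ is decreasing in $N$, the upper end $N \le |L_\varepsilon(\hat p)|/k$ gives
\[
\frac{1}{N+1} \ge \frac{1}{|L_\varepsilon(\hat p)|/k + 1} = \frac{k}{|L_\varepsilon(\hat p)| + k},
\]
which is the stated lower bound. For the upper bound, the inequality $N \ge |L_\varepsilon(\hat p)|/k - 2$ gives $N+1 \ge |L_\varepsilon(\hat p)|/k - 1 = \frac{|L_\varepsilon(\hat p)| - k}{k}$, hence $\frac{1}{N+1} \le \frac{k}{|L_\varepsilon(\hat p)| - k}$. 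Of course this is only a meaningful bound when $|L_\varepsilon(\hat p)| > k$; when $|L_\varepsilon(\hat p)| \le k$, the quantity $\frac{k}{|L_\varepsilon(\hat p)|-k}$ is either negative or undefined, which is exactly why the statement writes $\min\!\left(1, \left|\frac{k}{|L_\varepsilon(\hat p)| - k}\right|\right)$ — the probability is trivially at most $1$, and when $N=0$ (the degenerate case $k \gg |L_\varepsilon(\hat p)|$ from Lemma~\ref{lemma:bounded}) the failure probability is exactly $1$, consistent with the $\min$.

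The main thing to be careful about — the only real obstacle — is the boundary bookkeeping: pinning down exactly which states are failure states for each trade direction, and making sure the off-by-one accounting between "$N+1$ points," "$N$ gaps," and the length $|L_\varepsilon(\hat p)|$ is consistent with the chain of inequalities proved in Lemma~\ref{lemma:bounded}. In particular one must use that by quasi-concavity the overall exchange rate of the size-$k$ trade lies between the pre- and post-trade spot rates (already established in the proof of Lemma~\ref{lemma:bounded}), so that the success/failure of a trade from state $y_0+kn$ is governed solely by whether $y_0+kn$ is an interior point or an endpoint of the reachable sequence. Once that is nailed down, the rest is the elementary monotonicity argument above, and the $\min(1,\cdot)$ and absolute-value wrapper in the statement are just safe-guarding the degenerate small-liquidity regime. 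I would also remark that the same $\frac{1}{N+1} \approx \frac{k}{|L_\varepsilon(\hat p)|}$ heuristic is what motivates the approximation $\frac{k}{L_\varepsilon(\hat p)}$ quoted in the preceding paragraph of the text.
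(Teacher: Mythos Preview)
Your approach is essentially identical to the paper's: compute the failure probability as $\frac{1}{N+1}$ (failures only at the two endpoints, each hit by one of the two equiprobable directions), then sandwich $\frac{1}{N+1}$ using the bounds on $N=n_{max}-n_{min}$ from Lemma~\ref{lemma:bounded}. One small slip: rearranging Lemma~\ref{lemma:bounded} actually yields $\frac{|L_\varepsilon(\hat p)|}{k}-2 \le N \le \frac{|L_\varepsilon(\hat p)|}{k}+2$, not $N \le \frac{|L_\varepsilon(\hat p)|}{k}$ as you wrote; the paper's own proof glosses over the same constant when passing to the lower bound $\frac{k}{|L_\varepsilon(\hat p)|+k}$, so this is a shared off-by-a-constant issue rather than a divergence in method.
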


\begin{proof}

Let $n=n_{max}-n_{min}$, as defined in Lemma \ref{lemma:bounded}.  
The trade request failure chance is $\frac{1}{n+1}$ because trade requests only fail at the endpoints of the sequence of reachable states. At $y_0 + kn_{max},$ sell requests fail and at $y_0 + kn_{min},$ buy requests fail.

Since $n +2 \geq \frac{\vert L_\varepsilon(\hat p)\vert}{k}$ and $n-2\leq \frac{\vert L_\varepsilon(\hat p)\vert}{k}$ (from Lemma~\ref{lemma:bounded}), we have that
$\frac{k}{\vert L_\varepsilon(\hat p)\vert + k}\leq \frac{1}{n+1}\leq \frac{k}{\vert L_\varepsilon(\hat p)\vert - k}$. The clipping to $1$ is required for the probability in the case where $k \geq \vert L_\varepsilon(\hat p)\vert/2.$
\extraqed{}
\end{proof}

When $k << \vert L_\varepsilon(\hat p)\vert,$ the trade failure probability is closely approximated by $\frac{k}{\vert L_\varepsilon(\hat p)\vert}$ where the approximation error is $O\left(\frac{k^2}{\vert L_\varepsilon(\hat p)\vert^2}\right).$ The rest of this work makes the following assumption.

\begin{assumption}[Small Trade Size]
\label{assumption:small_trade_size}
Trade sizes are upper bounded by a constant.
\end{assumption}

Assumption~\ref{assumption:small_trade_size} is not required to study the model in general -- traders can submit trades of size comparable to  $\vert L_\varepsilon(\hat p)\vert$ units of $Y$, and it will be successful with non-zero probability. However, the assumption enables us to approximate the trade failure probability, which is required to compile it into a metric for the CFMM designers.
See also that $|L_\varepsilon(\hat p)| = \int_{\hat p/(1+\varepsilon)}^{\hat p(1+\varepsilon)}L(p)~d\ln(p)$
(Observation \ref{obs:alloc}).  If $L(\cdot)$ is relatively constant in a neighbourhood of exchange
rate $p_0$,
then $|L_\varepsilon(p_0)| \sim L(p_0)\cdot 2\ln(1+\varepsilon)$, and so, under Assumption~\ref{assumption:small_trade_size}, for any $\varepsilon,$
 the chance that a trade fails is proportional to $k/L(p_0)$.

With this in mind, we define the following ``CFMM inefficiency'' metric.

\begin{definition}[CFMM Inefficiency]
\label{defn:yfail}
The CFMM's inefficiency at an exchange rate $\hat p$, with regard to a trade of size $k$ units of $Y$, is $\frac{k}{L(\hat p)}$.
The inefficiency of a trade denominated in $X$ is equivalent to the inefficiency of a trade of size $k \hat p$ units of $Y$.
\end{definition}

This metric has important implications for the performance of a CFMM. Consider, for example, a trader submitting a trade request of size $k$ units of $Y$ repeatedly until it succeeds. The expected number of times they have to submit the trade is $1/ (1-\frac{k}{L(\hat p)}).$ Apart from being an important metric in itself, the CFMM inefficiency is also a crucial factor when considering the LP's profits, as we will see in \S\ref{sec:net_profit}. Since the CFMM inefficiency is a convex function of each $L(\hat p),$ it can directly be incorporated in the objective function of our optimization framework in \S\ref{sec:optimization}.

\subsection{A Liquidity Provider's Beliefs}

We represent an LP's beliefs on future asset prices as a function of a base ``numeraire'' currency (such as USD),
instead of one of $X$ or $Y$. This is because traders and LPs usually 
denominate their profits, losses, and trade amounts in their native currency. Note, however, that this does not restrict one from studying the case where one of $X$ and $Y$ is the numeraire itself.

\begin{definition}[LP's Belief]
\label{defn:belief}
The {\it belief} of an LP is a function $\psi(\cdot,\cdot): \mathbb{R}_+^2 \rightarrow \mathbb{R}_+$
such that it believes that at a future time, asset $X$ will have price $p_X$ 
(relative to the numeraire) and $Y$ will have price $p_Y$ with probability proportional
 to $\psi(p_X, ~p_Y)$.

A belief function $\psi(\cdot,\cdot)$ has the following properties:

\begin{enumerate}
	\item 
		$\psi$ is integrable on any set of the form $\lbrace p_X, ~p_Y ~\vert~ p_1 \leq p_X/p_Y \leq p_2\rbrace$ for $p_1, ~p_2 \neq 0$.
	\item
		There exists $p_1, p_2$ so that the integral of $\psi$ on the set $\lbrace p_X, ~p_Y ~\vert~ p_1 \leq p_X/p_Y \leq p_2\rbrace$ is nonzero.
	\item 
		The set $\lbrace p_X, p_Y~\vert~\psi(p_X, p_Y) >0\rbrace$ is open, and $\psi$ is continuously differentiable on this set.

\item The integral of $\psi(p_X,~p_Y)$ over its entire domain, $N_\psi = \iint_{p_X,~p_Y} \psi(p_X, p_Y) dp_X~dp_Y,$ is a finite positive value.
\end{enumerate}


\end{definition}

We do not normalize the belief function to integrate to 1 for ease of analysis later in the paper.
This definition is strictly more flexible than a one-dimensional notion of a belief (i.e. a belief on the exchange rate between $X$ and $Y$).
A one-dimensional belief could be defined, for example, as nonzero only on the horizontal line where $p_Y=1$ (with an appropriate adjustment to the notion of integrating over the belief). 
 This flexibility will be important when we turn to the incentives of profit-seeking LPs
(\S \ref{sec:loss}).

\subsection{Belief Functions From Price Dynamics}
\label{sec:price_dyn}

\LVR
{

An LP might not have just a belief about the distribution of future asset prices,
but also some belief about how an asset's price will evolve over time.  Applying time-discounting to beliefs about
dynamics results in a belief distribution as in Definition \ref{defn:belief}.

Let $g(p_X, p_Y)$ be any continuous, integrable function of asset prices,
and $p_X^t$ and $p_Y^t$ be stochastic processes that are believed to represent
future asset price dynamics. Let $\rho_t(p_X, p_Y)$ be the joint probability density function at time $t$ of $p_X$ and $p_Y$ induced by the stochastic processes. 
%
Denote the value of $g$ at time $t$ by $g_t.$ The expected value of $g_t$ is 
$$ \mathbb{E}(g_t) =  \iint_{p_X, p_Y}~g(p_X, p_Y) \rho_t(p_X, p_Y)~dp_X~dp_Y.$$
Denote the time-discounted value of $g$ at the initial time, with discounting parameter $\gamma$ by $g^{(\gamma)}$. By linearity of expectation, the expected value of $g^{(\gamma)}$ is
\begin{align*}
   \mathbb{E}(g^{(\gamma)}) &= \int_{t=0}^{\infty}  e^{-\gamma t} \left(\iint_{p_X, p_Y} ~g(p_X, p_Y) \rho_t(p_X, p_Y)~dp_X~dp_Y\right)~dt\\
    &= \iint_{p_X, p_Y} \int_{t=0}^{\infty}   e^{-\gamma t}~g(p_X, p_Y)~\rho_t(p_X, p_Y) ~dt~ dp_X ~dp_Y
\end{align*}
%
%
%
Observe that $\mathbb{E}(g^{(\gamma)})$, where the expectation is over the price dynamics, is therefore equivalent to the expected value of
$g$ with respect to the static belief function $\psi(p_X, p_Y)=\int_{t=0}^{\infty}   e^{-\gamma t}~\rho_t(p_X, p_Y) ~dt$.

This holds for any integrable function $g$, which includes CFMM inefficiency (as in Proposition \ref{prop:failchance}) but also
expected profit and loss (as in \S \ref{sec:net_profit}).

This framework captures the geometric Brownian motion \cite{uhlenbeck1930theory} model of price dynamics via:  

\begin{flushleft}
$\rho_t(p_X, p_Y)= \frac{1}{2 \pi}\, \frac{1}{p_X p_Y \sigma_X \sigma_Y t}\, \exp \left( -\frac{ \left( \ln p_X - \ln P_{X} - \left( \mu_X - \frac{1}{2} \sigma_X^2 \right) t \right)^2}{2\sigma_X^2 t} -\frac{ \left( \ln p_Y - \ln P_{Y} - \left( \mu_Y - \frac{1}{2} \sigma_Y^2 \right) t \right)^2}{2\sigma_Y^2 t} \right).$    
\end{flushleft}

Here, $P_X$ and $P_Y$ are the initial exchange rates of $X$ and $Y$ relative to the numeraire. $\mu_X$ and $\mu_Y$ are the drift parameters in the underlying Brownian motion of the log of $p_X$ and $p_Y.$ $\sigma^2_X$ and $\sigma^2_Y$ are the corresponding variances.   With time discounting, this induces the following belief function.
\begin{flushleft}
$\psi(p_X, p_Y)=\int\limits_{t=0}^{\infty}   e^{-\gamma t} \frac{1}{2 \pi}\, \frac{1}{p_X p_Y \sigma_X \sigma_Y t}\, \exp \Big( -\frac{ \left( \ln p_X - \ln P_{X} - \left( \mu_X - \frac{1}{2} \sigma_X^2 \right) t \right)^2}{2\sigma_X^2 t} -\frac{ \left( \ln p_Y - \ln P_{Y} - \left( \mu_Y - \frac{1}{2} \sigma_Y^2 \right) t \right)^2}{2\sigma_Y^2 t} \Big) dt.~\refstepcounter{equation}(\theequation)\label{eq:BS}$ 
\end{flushleft} 

}

\section{Optimizing for Liquidity Provision}
\label{sec:optimization}

How should LPs allocate capital to market-making at different exchange rates?  
This question is the core topic of our work.
At any point in time, only the capital deployed near the reference exchange rate is useable for market-making.
Thus, the ``optimal'' CFMM design necessarily depends on an LP's belief on the distribution of future exchange rates.

We show here that an LP's beliefs on future asset valuations can be compiled into an optimal
CFMM design, which is the solution to a convex optimization problem (Theorem \ref{thm:problem}).  Specifically,
the optimization framework outputs a capital allocation $L(\cdot)$ (as in Definition \ref{defn:liquidity}) that minimizes the expected CFMM inefficiency (Proposition \ref{prop:failchance}).  
Ultimately, we show that this relationship goes both ways; a liquidity allocation uniquely specifies an equivalence class
of beliefs (Corollary \ref{cor:lp_to_equiv}). 
Per Observation \ref{obs:alloc}, a liquidity allocation $L(\cdot)$ fully
specifies a CFMM trading function.

This section discusses ``optimality'' from a viewpoint of minimizing CFMM inefficiency; however, we show in \S \ref{sec:fees} that 
this optimization framework, with a different objective function,
computes a CFMM that maximizes expected CFMM fee revenue. Furthermore,
we show in \S \ref{sec:loss} how to modify the objective of this program to account for losses incurred
during CFMM operation.

\subsection{A Convex Program for Optimal Liquidity Allocation}

\subsubsection{Objective: Minimize Expected CFMM Inefficiency}
\newcommand{\propfailchance}
{
Suppose every trade order on a CFMM is for one unit numeraire's worth of either $X$ or $Y$, and buys or sells the asset in question with equal probability.
The expected CFMM inefficiency is 
$ \frac{1}{N_\psi}\iint_{p_X, p_Y}\frac{\psi(p_X, p_Y)}{p_YL(p_X/p_Y)}dp_X~dp_Y $.
We define the integral only where $\psi(p_X, p_Y) > 0$. Further, we define $\psi(p_X, p_Y)/L(p_X/p_Y)$ to be $\infty$
when $L(p_X/p_Y) = 0$.  $N_\psi$ is as in Definition \ref{defn:belief}.
}
\begin{proposition}

\label{prop:failchance}
\propfailchance{}


\end{proposition}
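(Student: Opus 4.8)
The plan is to compute the expected CFMM inefficiency directly from the definitions, using the LP's belief as the probability measure over future asset prices. First I would set up the expectation: by Definition~\ref{defn:belief}, the probability density that the future prices are $(p_X, p_Y)$ is $\psi(p_X, p_Y)/N_\psi$, so the expected inefficiency is the integral of (inefficiency at prices $(p_X,p_Y)$) against this density over the region where $\psi > 0$. The key observation is that once arbitrageurs realign the CFMM spot exchange rate (System Model, point~\ref{item:arb}), the CFMM's spot rate equals the reference rate, which is the ratio $p_X/p_Y$ of the two assets' prices in the numeraire. So the relevant liquidity value is $L(p_X/p_Y)$.

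Next I would handle the unit conversion. By hypothesis each trade is for one unit of numeraire's worth of $X$ or $Y$, chosen with equal probability; by Definition~\ref{defn:yfail}, CFMM inefficiency is defined with respect to a trade size denominated in units of $Y$, with a trade of $k$ units of $X$ being equivalent to a trade of $k\hat p$ units of $Y$. A trade worth one unit of numeraire consists of $1/p_Y$ units of $Y$ (which has inefficiency $\frac{1/p_Y}{L(p_X/p_Y)}$), or $1/p_X$ units of $X$, which is equivalent to $\frac{1}{p_X}\cdot\frac{p_X}{p_Y} = \frac{1}{p_Y}$ units of $Y$ (same inefficiency $\frac{1/p_Y}{L(p_X/p_Y)}$). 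Averaging the BUY and SELL cases changes nothing since inefficiency as defined depends only on trade size, not direction. Hence at prices $(p_X, p_Y)$ the (conditional) expected inefficiency is $\frac{1}{p_Y L(p_X/p_Y)}$ regardless of which asset is traded or the direction.

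Finally I would assemble: the expected CFMM inefficiency is $\frac{1}{N_\psi}\iint_{\psi > 0} \frac{\psi(p_X,p_Y)}{p_Y L(p_X/p_Y)}\,dp_X\,dp_Y$, which is the claimed formula; the conventions that the integrand is $\infty$ when $L(p_X/p_Y) = 0$ and that the integral is taken only over $\{\psi > 0\}$ are exactly the edge-case stipulations in the statement, and integrability on ratio-slices is guaranteed by Definition~\ref{defn:belief}(1). I expect the main (modest) obstacle to be making the unit-conversion bookkeeping between $X$-denominated and $Y$-denominated trades airtight — in particular, carefully invoking the "a trade of $k$ units of $X$ is equivalent to $k\hat p$ units of $Y$" clause of Definition~\ref{defn:yfail} so that both asset choices collapse to the same per-numeraire-unit inefficiency $1/(p_Y L(p_X/p_Y))$ — and in clearly stating that Assumption~\ref{assumption:small_trade_size} is what licenses using $k/L(\hat p)$ as the inefficiency for these small unit-numeraire trades.
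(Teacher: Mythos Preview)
Your proposal is correct and follows essentially the same argument as the paper's proof: both compute the per-price inefficiency by converting the $X$-denominated and $Y$-denominated unit-numeraire trades into $Y$-units via Definition~\ref{defn:yfail}, observe that both collapse to $\frac{1}{p_Y L(p_X/p_Y)}$, and then integrate against the normalized belief. The only cosmetic difference is that the paper introduces an arbitrary mixing probability~$\alpha$ between $X$- and $Y$-trades (rather than the stated equal probability) to emphasize that the result is independent of~$\alpha$.
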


\begin{proof}

Suppose that a trader order is for $1$ unit of numeraire's worth of $X$ with probability $\alpha$, 
and for $1$ unit of numeraire's worth of $Y$ with probability $1 - \alpha$. 
%
The size of a trade denominated in $X$ is therefore $1/p_X$, and the size of a trade denominated in $Y$ is
$1/p_Y$.  

Recall from Definition ~\ref{defn:yfail} that at a given set of reference prices $p_X, p_Y$, 
the CFMM inefficiency for a trade buying or selling $1$ numeraire's worth of $X$ is
$\frac{\hat p}{p_X}\frac{1}{L(p_X/p_Y)} = \frac{p_X}{p_Y p_X}\frac{1}{L(p_X/p_Y)} = \frac{1}{p_Y L(p_X/p_Y)}$.
Similarly, also from Definition ~\ref{defn:yfail},
the CFMM inefficiency corresponding to a trade of $1$ numeraire's worth of $Y$ is $\frac{1}{p_Y}\frac{1}{L(p_X/p_Y)}$.
Hence, the overall expected CFMM inefficiency is
\begin{equation*} 
\frac{1}{N_\psi}\iint_{p_X, p_Y} \psi(p_X, p_Y)\left(\frac{\alpha}{p_YL(p_X/p_Y)} + \frac{1 - \alpha}{p_YL(p_X/p_Y)}\right) dp_X~dp_Y,
\end{equation*}
%
%
\begin{equation} \label{eq:failure_chance}
=\frac{1}{N_\psi} \iint_{p_X, p_Y} \frac{\psi(p_X, p_Y)}{p_YL(p_X/p_Y)} dp_X~dp_Y. \qedhere
\end{equation} \extraqed{}
\end{proof}

For clarity of exposition, we focus on the scenario where each order trades $1$ unit of the numeraire's worth of value.  
Our model can study, however, scenarios where for general trade sizes and also when the trade size is a function of $p_X$ and $p_Y$.
The CFMM inefficiency is a linear function of trade size. A distribution of trade sizes can be multiplied with the belief function.

Proposition \ref{prop:failchance} also implies that the trade failure chance is the same for a trader buying $X$ or $Y$. The $p_Y$ in the denominator of the integrand in equation~\eqref{eq:failure_chance} appears because the liquidity $L(\cdot)$ is defined with respect to the reserves of asset Y, i.e., $\mathcal Y(\cdot)$ (recall Lemma~\ref{lemma:y_diffable}). Overall, there is no distinction between $X$ and $Y$ for the purpose of the CFMM inefficiency. \\ 


\subsubsection{Constraints: A Finite Budget for Market-Making} ~

The asset reserves of a CFMM are finite.  Clearly, the best CFMM to minimize expected inefficiency has liquidity $L(p)=\infty$ at every exchange rate $p$, 
but this would require
an infinite amount of each asset (Observation \ref{obs:alloc}).  
We model an LP with a fixed budget $B$ who creates a CFMM
when the reference exchange rates of $X$ and $Y$ in the numeraire are $P_X$ and $P_Y$, respectively.
  With this budget, the LP can purchase (or borrow)
any amount of $X$ and $Y$, say, $X_0$ and $Y_0$, subject to the constraint that $P_X X_0 + P_Y Y_0 \leq B$. 
With this intuition, we have the following technical lemmas:

\begin{lemma}
\label{lemma:supply_constraints}

Given a purchasing choice of $X_0$ and $Y_0$, the LP can choose $L(\cdot)$ and set the initial spot exchange rate of the CFMM
to be $p_0$, subject to the following asset conservation constraints.

\begin{enumerate}
	\item
		$\int_{0}^{p_0} \frac{L(p)}{p} dp\leq Y_0$

	\item
		$\int_{p_0}^{\infty} \frac{L(p)}{p^2} dp \leq X_0$
\end{enumerate}

\end{lemma}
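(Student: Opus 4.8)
The plan is to read off the two constraints from the capital content of a CFMM established in Observation~\ref{obs:alloc}, and to argue the correspondence is tight in both directions. For the necessity direction, suppose the LP has set up a CFMM with liquidity profile $L(\cdot)$ and initial spot exchange rate $p_0$. By points~4 and~5 of Observation~\ref{obs:alloc}, such a CFMM holds exactly $\mathcal{Y}(p_0)=\int_0^{p_0}\frac{L(p)}{p}\,dp$ units of $Y$ and $\mathcal{X}(p_0)=\int_{p_0}^{\infty}\frac{L(p)}{p^2}\,dp$ units of $X$ in its reserves. Since the LP deposits these out of a holding of only $Y_0$ units of $Y$ and $X_0$ units of $X$, we must have $\mathcal{Y}(p_0)\le Y_0$ and $\mathcal{X}(p_0)\le X_0$, which are precisely constraints~(1) and~(2). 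The inequalities (rather than equalities) reflect that the LP may simply keep any surplus $Y_0-\mathcal{Y}(p_0)$ or $X_0-\mathcal{X}(p_0)$ outside the pool.

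For the sufficiency direction, I would start from an arbitrary nonnegative $L(\cdot)$ and an initial rate $p_0$ obeying~(1) and~(2), set $\mathcal{Y}(p):=\int_0^{p}\frac{L(s)}{s}\,ds$ and $\mathcal{X}(p):=\int_{p}^{\infty}\frac{L(s)}{s^2}\,ds$ (both finite for $p\in(0,\infty)$ using~(1)--(2)), and exhibit a trading function $f$ meeting Assumption~\ref{ass:fn_form} whose reserve curve is the parametrized arc $\{(\mathcal{X}(p),\mathcal{Y}(p)):p>0\}$. Because $L\ge 0$, the map $p\mapsto \mathcal{Y}(p)$ is nondecreasing (Observation~\ref{obs:y_nondecreasing}) and $p\mapsto\mathcal{X}(p)$ is nonincreasing, so this arc is the graph of a nonincreasing function of $x$; differentiating gives $\frac{d\mathcal{Y}}{d\mathcal{X}}=\frac{L(p)/p}{-L(p)/p^2}=-p$, so the spot exchange rate at the point parametrized by $p$ is $p$ (consistent with Definition~\ref{defn:spot}), and in particular the reserves at rate $p_0$ are $(\mathcal{X}(p_0),\mathcal{Y}(p_0))\le(X_0,Y_0)$ --- affordable for the LP. To upgrade this single arc into a function on all of $\mathbb{R}_+^2$, I would declare it to be a level set of $f$ and fill in the rest of the orthant by nested rescalings of the arc (e.g.\ making $f$ strictly increasing along rays from the origin), obtaining a continuous, coordinatewise-increasing, quasi-concave $f$ vanishing on the axes; by construction and Lemma~\ref{lemma:y_diffable}, its liquidity is $L$.

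The main obstacle is the bookkeeping in the sufficiency direction: verifying that the assembled $f$ satisfies every clause of Assumption~\ref{ass:fn_form}, most delicately (strict) quasi-concavity, which can fail on intervals where $L\equiv 0$ (the reserve curve acquires a flat or kinked piece there) and at the boundary behavior as $p\to 0$ or $p\to\infty$. Since strict quasi-concavity is adopted only ``for clarity of exposition,'' I would handle the general case either by first treating $L$ strictly positive on its support and passing to a limit, or simply by noting that Observations~\ref{obs:y_fn_p}--\ref{obs:y_fn_x} and~\ref{obs:alloc} already pin down the bijection between admissible liquidity profiles and trading functions, so the genuine content of the lemma is exactly the pair of inequalities~(1)--(2), and the realizability step needs no more than invoking that correspondence.
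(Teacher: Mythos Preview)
Your proof is correct and, at its core, takes the same approach as the paper: the paper's entire proof is the single line ``Follows from Observation~\ref{obs:alloc},'' which is exactly your necessity direction (invoking points~4 and~5 of that observation). Your sufficiency direction---actually constructing a trading function $f$ from a given $L(\cdot)$ and verifying Assumption~\ref{ass:fn_form}---goes well beyond what the paper supplies; the paper is content to treat the correspondence between liquidity profiles and trading functions as already established by Observation~\ref{obs:alloc} and does not spell out the realizability step or address the edge cases (intervals with $L\equiv 0$, boundary behavior) that you flag.
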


\begin{proof}
Follows from Observation \ref{obs:alloc}.
\end{proof}

\begin{lemma}
For any two budgets $B,B^\prime$ with $B^\prime > B$ and any capital allocation $L_1(\cdot)$ satisfying the constraints of 
Lemma \ref{lemma:supply_constraints} with budget $B$, there exists a capital allocation $L_2(\cdot)$ satisfying the constraints
 of Lemma \ref{lemma:supply_constraints} 
using the larger budget $B^\prime$ that gives a strictly lower expected CFMM inefficiency.
\end{lemma}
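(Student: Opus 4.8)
The plan is to exploit linearity throughout: scale the allocation $L_1$ together with the accompanying purchasing decision by the common factor $\lambda := B'/B > 1$, and read off the effect on the objective from Proposition~\ref{prop:failchance}.

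First I would fix a purchasing choice $(X_0,Y_0)$ and an initial spot rate $p_0$ witnessing that $L_1$ is feasible for budget $B$, so $P_X X_0 + P_Y Y_0 \le B$ and the two inequalities of Lemma~\ref{lemma:supply_constraints} hold. Set $L_2 := \lambda L_1$, keep the initial spot rate $p_0$, and buy $(\lambda X_0, \lambda Y_0)$. Feasibility for $B'$ is then immediate: $P_X(\lambda X_0)+P_Y(\lambda Y_0) = \lambda(P_X X_0 + P_Y Y_0) \le \lambda B = B'$, and since $\int_0^{p_0} L(p)/p\,dp$ and $\int_{p_0}^\infty L(p)/p^2\,dp$ are both linear in $L$, the two asset-conservation inequalities are preserved under multiplication by $\lambda$. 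For the objective, Proposition~\ref{prop:failchance} gives that the expected CFMM inefficiency of $L_2$ equals $\tfrac1\lambda$ times that of $L_1$, because $L$ enters the integrand only through $1/L$. It then suffices to show the expected inefficiency of $L_1$ is a strictly positive real number: the integrand is nonnegative and is strictly positive on $\{\psi>0\}$ away from $\{L_1(p_X/p_Y)=\infty\}$; the first constraint of Lemma~\ref{lemma:supply_constraints} forces $L_1(p)/p$ integrable, hence $L_1<\infty$ almost everywhere, and the preimage of a null set of ratios is null in $\mathbb{R}_+^2$ by Fubini, while $\{\psi>0\}$ is a nonempty open set by Definition~\ref{defn:belief}. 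Dividing a positive finite number by $\lambda>1$ then strictly decreases it.

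The one delicate point — and what I expect to be the only real obstacle — is the degenerate case where the expected inefficiency of $L_1$ is already $+\infty$ (say $L_1$ vanishes on part of the support of $\psi$, or decays too fast in the tails), since scaling gives no strict improvement there. I would handle it by instead setting $L_2 := L_1 + \delta\,\mathds{1}_{[a,b]}$ for a compact $[a,b]\subset(0,\infty)$ with $\iint_{a\le p_X/p_Y\le b}\psi>0$ (which exists by property~(2) of Definition~\ref{defn:belief}) and $\delta>0$ small enough that the extra $\delta\!\int_{[a,b]\cap(0,p_0)}p^{-1}\,dp$ units of $Y$ and $\delta\!\int_{[a,b]\cap(p_0,\infty)}p^{-2}\,dp$ units of $X$ (both finite) are affordable with the surplus $B'-B$; this strictly lowers the integrand on a set of positive measure and renders the integral finite whenever the blow-up was caused by $L_1$ vanishing on that set. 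Alternatively, one restricts attention to allocations of finite expected inefficiency — the only regime relevant to the optimization of Section~\ref{sec:optimization} — in which case the clean scaling argument alone suffices.
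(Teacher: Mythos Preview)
Your proof is correct and takes a genuinely different route from the paper. The paper's one-line argument is purely additive: keep $L_1$ and spend the surplus $B'-B$ to add liquidity in a neighborhood of some $p$ with $\psi(p,1)>0$, which strictly lowers the integrand there. Your main argument is multiplicative: scale everything by $\lambda=B'/B$ and read off a global $1/\lambda$ reduction of the objective. The scaling approach is cleaner quantitatively and makes the monotonicity in $B$ completely transparent, but it costs you the separate positivity argument (which you supply correctly via the integrability constraints and Fubini). The paper's additive approach needs no positivity check, since lowering a nonnegative integrand on a set of positive measure automatically gives a strict decrease when the integral is finite.

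Both proofs share the same silent caveat about the $+\infty$ case, which you, unlike the paper, flag explicitly. Your additive fallback $L_2=L_1+\delta\mathds{1}_{[a,b]}$ is essentially the paper's own construction and inherits its limitation: if the divergence of the objective comes from $L_1$ vanishing (or decaying too fast) outside $[a,b]$, the integral remains infinite. The clean resolution---restricting to allocations of finite expected inefficiency, which by Lemma~\ref{lemma:finite_soln} is the only regime relevant to the optimization---is the right one, and is what the paper is implicitly assuming as well.
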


\begin{proof}
Duplicate  $L_1(\cdot)$ and  allocate the capital $B^\prime-B$ to any $p$ with $\psi(p, 1)>0$ to build $L_2(\cdot)$. \extraqed{}
\end{proof}

A rational LP sets the initial spot exchange rate of the CFMM to be equal to the current
reference exchange rate (i.e. $p_0=\frac{P_X}{P_Y}$).  If not, a trader could arbitrage the CFMM against an external market.  The arbitrage
profit of this trader is the LP's loss, which effectively reduces the LP's initial budget.


Our convex program combines the above objective and constraints to compute an optimal liquidity allocation $L(p)$.
The core of the rest of this work is in using this
program to understand the relationship between LP beliefs and optimal liquidity allocations.

\begin{theorem}
\label{thm:problem}

Suppose that the initial reference prices of assets $X$ and $Y$ are $P_X$ and $P_Y$, and that an LP has
initial budget $B > 0$ and belief function $\psi(\cdot, \cdot)$.

The optimal liquidity provision strategy, $L(\cdot)$, is the solution to the following convex optimization problem (COP).  The decision
variables are $X_0, Y_0$, and $L(p)$ for each exchange rate $p>0.$
\footnote{
The optimization is over a Banach space with one dimension for each $p>0$; we elide this technicality when possible for clarity of exposition.
}
\begin{align}
minimize ~& \iint_{p_X, p_Y} \frac{\psi(p_X, p_Y)}{p_YL(p_X/p_Y)} dp_X~dp_Y \tag{COP}\label{eq:cop}\\
subject~to ~& \int_{0}^{p_0} \frac{L(p)}{p} dp\leq Y_0 \tag{COP$1$} \label{eq:cop1}\\
     ~& \int_{p_0}^{\infty} \frac{L(p)}{p^2} dp \leq X_0 \tag{COP$2$} \label{eq:cop2}\\
     ~& X_0 P_X + Y_0 P_Y \leq B \tag{COP$3$} \label{eq:cop3}\\
     ~& L(p) \geq 0 ~& \forall ~p>0 \tag{COP$4$} \label{eq:cop4}
\end{align}
\end{theorem}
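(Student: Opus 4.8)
The plan is to assemble the theorem from results already in hand, treating it as a verification that the convex program \eqref{eq:cop} faithfully encodes the liquidity provider's decision problem. First I would dispose of the choice of initial spot price. A priori an LP may set any $p_0$, but by point~\ref{item:arb} of the System Model (Definition~\ref{defn:system_model}) any $p_0 \neq P_X/P_Y$ is instantly arbitraged back to $P_X/P_Y$, extracting strictly positive value from the CFMM and hence leaving the LP with a strictly smaller effective budget; by the budget-monotonicity lemma preceding the theorem, a smaller budget can only weakly worsen the attainable expected inefficiency. So it is without loss of optimality to fix $p_0 = P_X/P_Y$, exactly as the statement of \eqref{eq:cop} does, and it suffices to optimize over LP strategies with this $p_0$.

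Next I would exhibit the correspondence between LP strategies (with $p_0 = P_X/P_Y$ fixed) and feasible points of \eqref{eq:cop}. A strategy is a purchase $(X_0, Y_0)$ together with a trading function. The purchase must satisfy $P_X X_0 + P_Y Y_0 \leq B$, which is \eqref{eq:cop3}. By Observation~\ref{obs:alloc}, a trading function is equivalent data to a liquidity allocation $L(\cdot)$, and since $\mathcal Y$ is nondecreasing (Observation~\ref{obs:y_nondecreasing}) we have $L(p) = \tfrac{d\mathcal Y(p)}{d\ln p} \geq 0$, which is \eqref{eq:cop4}. Points~4 and~5 of Observation~\ref{obs:alloc} express the $Y$- and $X$-reserves at spot price $p_0$ as $\int_0^{p_0} L(p)/p\,dp$ and $\int_{p_0}^{\infty} L(p)/p^2\,dp$; these cannot exceed the amounts purchased, giving \eqref{eq:cop1} and \eqref{eq:cop2}, while Lemma~\ref{lemma:supply_constraints} supplies the converse, that every $L(\cdot)\geq 0$ meeting these bounds is realizable by an admissible trading function. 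Thus the feasible points of \eqref{eq:cop} are exactly the feasible LP strategies. Finally, Proposition~\ref{prop:failchance} identifies the expected CFMM inefficiency of such a strategy as $\tfrac{1}{N_\psi}$ times the objective of \eqref{eq:cop}; since $N_\psi$ is a positive constant independent of $(X_0, Y_0, L(\cdot))$, minimizing expected inefficiency over strategies is equivalent to minimizing the \eqref{eq:cop} objective over its feasible set. For convexity: \eqref{eq:cop1}--\eqref{eq:cop4} are affine in $(X_0, Y_0, L(\cdot))$, and the objective is an integral (a nonnegative combination) of the maps $L \mapsto \frac{\psi(p_X,p_Y)}{p_Y L(p_X/p_Y)}$, each of which is the convex decreasing function $t \mapsto 1/t$ precomposed with the linear evaluation $L \mapsto L(p_X/p_Y)$ and positively scaled, hence convex (with value $+\infty$ where $L(p_X/p_Y)=0$, matching the conventions of Proposition~\ref{prop:failchance}); so \eqref{eq:cop} is a convex program.

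The step I expect to be the real obstacle is the analytic one flagged in the footnote to the theorem: \eqref{eq:cop} is an optimization over an infinite-dimensional Banach space of functions $L(\cdot)$, so asserting that it has ``the solution'' presupposes that a minimizer exists and is essentially unique. Existence would follow from lower semicontinuity of the (extended-real-valued, convex) objective together with a coercivity or compactness argument on the feasible set, or, more constructively, from exhibiting the optimizer directly via the stationarity (KKT) conditions — which is precisely what \S\ref{sec:optimization} carries out next; uniqueness on the set of price ratios carrying positive belief mass would come from the strict convexity of $t \mapsto 1/t$. I would either include a short functional-analytic argument here or defer these two points to the KKT analysis that immediately follows, since it produces the optimizer in (near) closed form.
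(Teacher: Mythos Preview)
Your proposal is correct and follows essentially the same route as the paper: derive the objective from Proposition~\ref{prop:failchance}, derive the constraints from the budget and from Lemma~\ref{lemma:supply_constraints}/Observation~\ref{obs:alloc}, observe convexity, and then address existence in the infinite-dimensional setting. Your write-up is considerably more detailed than the paper's (which is terse), and in particular you spell out the $p_0 = P_X/P_Y$ reduction and the bijection between strategies and feasible points that the paper handles in the surrounding prose rather than in the proof proper. The one substantive thing the paper does that you leave open is the existence step: rather than sketching a semicontinuity/coercivity argument or deferring to the later KKT analysis, the paper simply invokes off-the-shelf results for convex optimization over Banach spaces (Theorem~47.C of \cite{zeidler1985} for existence of optima, and \S4.14, Proposition~1 of \cite{zeidler1995} for well-definedness of the KKT conditions), which cleanly disposes of the ``real obstacle'' you identify.
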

\begin{proof}
The $L(\cdot)$ that solves \ref{eq:cop} minimizes the expected 
transaction failure chance (the expression in Proposition \ref{prop:failchance}),\footnote{The normalization term in the denominator is dropped for clarity since it doesn't change the solution of the problem.}
while satisfying the LP's budget constraint.
The objective and the constraints are integrals of convex functions and thus are convex.

This optimization problem is over a Banach space (there are uncountably many $L(p)$).  Well-established results from the theory of optimization over Banach spaces show that optimal solutions exist (Theorem 47.C, \cite{zeidler1985})
and the KKT conditions
are well defined (\S 4.14, Proposition 1, \cite{zeidler1995}).
\extraqed{}
\end{proof}

A CFMM offers only a spot exchange rate ($X$ relative to $Y$), not a spot valuation for each asset (relative to the numeraire).
In this light, we find that the objective function of \ref{eq:cop} can be rearranged to one that depends only
on ratios of valuations.
%
%
\begin{lemma}
\label{lemma:polar}
Define $r,\theta$ to be the standard polar coordinates,
with $p_X=r\cos(\theta)$ and $p_Y=r\sin(\theta)$.
\begin{dmath*}
\iint_{p_X, p_Y} \frac{\psi(p_X, p_Y)}{p_YL(p_X/p_Y)} dp_X~dp_Y
= \int_\theta \left( \frac{1}{L(\cot(\theta))\sin(\theta)} \int_r \psi(r\cos(\theta), r\sin(\theta))dr\right) d\theta
\end{dmath*}
\end{lemma}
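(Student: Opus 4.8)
The plan is a direct change of variables to polar coordinates, with Tonelli's theorem supplying the justification for swapping the order of integration. First I would observe that by Definition~\ref{defn:belief} the function $\psi$ is (integrated only where it is) supported on a subset of the open first quadrant $\{p_X > 0,\ p_Y > 0\}$, which corresponds in polar coordinates to $r \in (0,\infty)$ and $\theta \in (0,\pi/2)$. On this region the map $(r,\theta)\mapsto(r\cos\theta,\,r\sin\theta)$ is a smooth bijection onto the first quadrant with Jacobian determinant $r > 0$, so the classical change-of-variables formula applies.

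Next I would carry out the substitution itself. Writing $p_X = r\cos\theta$ and $p_Y = r\sin\theta$ gives $p_X/p_Y = \cot\theta$, $p_Y = r\sin\theta$, and $dp_X\,dp_Y = r\,dr\,d\theta$, so the integrand transforms as
\[
\frac{\psi(p_X,p_Y)}{p_Y\,L(p_X/p_Y)}\,dp_X\,dp_Y \;=\; \frac{\psi(r\cos\theta,\,r\sin\theta)}{r\sin\theta\,L(\cot\theta)}\cdot r\,dr\,d\theta \;=\; \frac{\psi(r\cos\theta,\,r\sin\theta)}{\sin\theta\,L(\cot\theta)}\,dr\,d\theta,
\]
where the factor $r$ coming from the Jacobian cancels exactly against the $r$ appearing in $p_Y = r\sin\theta$. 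This cancellation is the reason the right-hand side of the claimed identity carries no leftover power of $r$.

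Then I would invoke Tonelli's theorem. Since $\psi \ge 0$ and $L \ge 0$, the transformed integrand is a nonnegative measurable function (taking the value $+\infty$ precisely where $L(\cot\theta) = 0$, consistent with the convention adopted in Proposition~\ref{prop:failchance}), so the double integral equals the iterated integral in either order, whether the common value is finite or $+\infty$. Integrating first in $r$, the factor $\frac{1}{\sin\theta\,L(\cot\theta)}$ does not depend on $r$ and pulls out of the inner integral, yielding
\[
\int_\theta \frac{1}{L(\cot\theta)\sin\theta}\left(\int_r \psi(r\cos\theta,\,r\sin\theta)\,dr\right) d\theta,
\]
which is exactly the asserted expression.

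I do not expect a genuine obstacle here; the only care needed is bookkeeping about the domain and the degenerate directions. The ray $\theta = 0$ (i.e.\ $p_Y = 0$) is excluded both because $p_Y$ sits in the denominator of the original integrand and because $\psi$ is not integrated there, and restricting to $\theta \in (0,\pi/2)$ handles this cleanly. The integrability clause of Definition~\ref{defn:belief} ensures the inner $r$-integral is finite for almost every $\theta$ in the support of $\psi$, but Tonelli does not require this a priori, so the identity holds as stated regardless.
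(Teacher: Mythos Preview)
Your proposal is correct and takes essentially the same approach as the paper, which simply says the result ``follows by standard algebraic manipulations ($dp_X\,dp_Y = r\,dr\,d\theta$).'' You have supplied considerably more rigor (explicit domain bookkeeping, the Jacobian cancellation, and the Tonelli justification), but the underlying idea is identical.
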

\begin{proof}
Follows by standard algebraic manipulations ($dp_X~dp_Y = r~dr~d\theta$). \extraqed{}
\end{proof}
%
This rearrangement reveals a useful equivalence class among LP beliefs.

\newcommand{\corequivclass}
{
Any two beliefs $\psi_1, \psi_2$ give the same optimal liquidity allocations if there exists a constant $\alpha>0$ such
that for every $\theta$, 
\begin{equation*}
\int_r \psi_1(r\cos(\theta), r\sin(\theta))dr = \alpha \int_r \psi_2(r\cos(\theta), r\sin(\theta))dr
\end{equation*}
}
\begin{corollary}
\label{cor:equivclass}
\corequivclass{}
\end{corollary}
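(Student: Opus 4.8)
The plan is to deduce the corollary directly from Lemma~\ref{lemma:polar}, which already rewrites the objective of \ref{eq:cop} so that the belief enters only through its ``angular marginal.'' For $i \in \{1,2\}$ set
\[
g_i(\theta) := \int_r \psi_i(r\cos\theta,\, r\sin\theta)\, dr,
\]
which is finite for almost every $\theta$ by the integrability properties in Definition~\ref{defn:belief}. By Lemma~\ref{lemma:polar}, the objective of the program \ref{eq:cop} instantiated with belief $\psi_i$ is
\[
J_i(L) \;=\; \int_\theta \frac{g_i(\theta)}{L(\cot\theta)\,\sin\theta}\, d\theta,
\]
read as $+\infty$ wherever $L$ vanishes on the relevant support, exactly as in Proposition~\ref{prop:failchance}.

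Next I would observe that the constraints \ref{eq:cop1}--\ref{eq:cop4}, together with the choice $p_0 = P_X/P_Y$, do not involve $\psi$ at all: they depend only on $B$, $P_X$, $P_Y$ and the decision variables $X_0, Y_0, L(\cdot)$. Hence the feasible set $\mathcal F$ is identical for the two programs. By hypothesis $g_1(\theta) = \alpha\, g_2(\theta)$ for every $\theta$ with $\alpha > 0$, so $J_1(L) = \alpha\, J_2(L)$ for every $L \in \mathcal F$; since $\alpha > 0$, this identity also respects the $+\infty$ convention, as it does not change which $L$ force the integrand to blow up. Minimizing a strictly positive rescaling of a functional over a fixed set does not change the set of minimizers, so the two programs have identical sets of optimal liquidity allocations. (The normalization constant $N_\psi$ from Proposition~\ref{prop:failchance} is, as the footnote to Theorem~\ref{thm:problem} notes, another strictly positive scalar and is likewise irrelevant to the minimizer.)

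The only point requiring care --- and hence the \emph{main obstacle}, such as it is --- is justifying the passage through Lemma~\ref{lemma:polar} for the possibly non-integrable, possibly $+\infty$-valued integrand $\psi_i(p_X, p_Y)/(p_Y L(p_X/p_Y))$, and checking that the polar map sends $\{\psi_i > 0\}$ to a set on which $g_i$ is well defined. This is handled by Tonelli's theorem: the integrand is nonnegative, so the double and iterated integrals agree with a common value in $[0, +\infty]$, the substitution $dp_X\, dp_Y = r\, dr\, d\theta$ is the standard polar change of variables, and the relations $p_Y = r\sin\theta$, $p_X/p_Y = \cot\theta$ produce the stated form. With that bookkeeping in place, the corollary follows immediately.
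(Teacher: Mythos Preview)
Your proof is correct and follows essentially the same approach as the paper: use Lemma~\ref{lemma:polar} to see that the objective depends on $\psi$ only through its angular marginal, note that a positive scalar multiple of the objective has the same minimizers over the (belief-independent) feasible set, and conclude. The paper's proof is a one-line version of what you wrote, phrasing the rescaling in terms of the derivative of the objective rather than the objective itself; your added care about Tonelli and the $+\infty$ convention is welcome but not a different idea.
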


This corollary has important implications for the closed-form results we obtain in \S\ref{sec:beliefs} for commonly deployed CFMMs. The analysis of a belief defined on the square $p_X, p_Y \in (0, P_X] \times (0,P_Y]$ gives the results for all beliefs defined analogously on $p_X, p_Y \in (0, \alpha P_X] \times (0,\alpha P_Y]$ for any $\alpha > 0.$ 

\newcommand{\corvarphi}
{
Define $\varphi_\psi(\theta) = \int_r \psi(r\cos(\theta), r\sin(\theta))dr$.
Then
\begin{dmath*}
\iint_{p_X, p_Y} \frac{\psi(p_X, p_Y)}{p_YL(p_X/p_Y)} dp_X~dp_Y
= \int_p \frac{\varphi_\psi(\cot^{-1}(p))\sin(\cot^{-1}(p))}{L(p)}dp
\end{dmath*}
}

\begin{corollary}
\label{cor:varphi}
\corvarphi{}
\end{corollary}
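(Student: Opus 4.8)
The plan is to derive Corollary~\ref{cor:varphi} directly from Lemma~\ref{lemma:polar} by performing the change of variables $p=\cot(\theta)$ in the single remaining integral over $\theta$. First I would recall from Lemma~\ref{lemma:polar} that
\[
\iint_{p_X, p_Y} \frac{\psi(p_X, p_Y)}{p_YL(p_X/p_Y)} dp_X~dp_Y
= \int_\theta \frac{1}{L(\cot(\theta))\sin(\theta)}\,\varphi_\psi(\theta)\,d\theta,
\]
using the definition $\varphi_\psi(\theta) = \int_r \psi(r\cos(\theta), r\sin(\theta))dr$ to abbreviate the inner integral. The spot exchange rate $p = p_X/p_Y = \cot(\theta)$ ranges over $(0,\infty)$ as $\theta$ ranges over $(0,\pi/2)$, which is exactly the relevant quadrant since $\psi$ is supported on the positive orthant.

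Next I would compute the Jacobian of the substitution. From $p=\cot(\theta)$ we get $dp = -\csc^2(\theta)\,d\theta$, so $d\theta = -\sin^2(\theta)\,dp$, and as $\theta$ increases from $0$ to $\pi/2$, $p$ decreases from $\infty$ to $0$, so the orientation flip and the sign cancel, leaving $\int_0^{\pi/2} (\cdots)\,d\theta = \int_0^\infty (\cdots)\sin^2(\theta)\,dp$. Substituting into the integrand, the factor $\frac{1}{L(\cot\theta)\sin\theta} = \frac{1}{L(p)\sin\theta}$ combines with $\sin^2(\theta)$ to give $\frac{\sin(\theta)}{L(p)}$, so the integral becomes $\int_0^\infty \frac{\varphi_\psi(\theta)\sin(\theta)}{L(p)}\,dp$ where $\theta = \cot^{-1}(p)$. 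Rewriting $\theta$ as $\cot^{-1}(p)$ throughout yields exactly the claimed expression
\[
\int_p \frac{\varphi_\psi(\cot^{-1}(p))\sin(\cot^{-1}(p))}{L(p)}dp.
\]

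This is essentially a routine one-variable change of variables, so there is no serious obstacle; the only point requiring a little care is bookkeeping the sign from the decreasing substitution together with the reversal of the limits of integration, and confirming that $\varphi_\psi(\theta)$ is well-defined and integrable under the substitution — but this is guaranteed by property~1 of Definition~\ref{defn:belief} (integrability of $\psi$ on sets of the form $\{p_1 \le p_X/p_Y \le p_2\}$) together with Lemma~\ref{lemma:polar}, which already established the equality of the double integral with the $\theta$-integral. Since $\sin(\cot^{-1}(p)) = 1/\sqrt{1+p^2}$ one could also present the final integrand in the more explicit form $\frac{\varphi_\psi(\cot^{-1}(p))}{L(p)\sqrt{1+p^2}}$, but the form stated in the corollary suffices.
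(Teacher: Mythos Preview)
Your proposal is correct and follows essentially the same approach as the paper: start from Lemma~\ref{lemma:polar}, substitute $p=\cot(\theta)$ with $d\theta=-\sin^2(\theta)\,dp$, absorb the sign by reversing the limits of integration, and simplify. The paper's proof is slightly terser, but the argument is identical.
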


Corollary \ref{cor:varphi} enables a straightforward construction of a feasible solution to \ref{eq:cop}.

\begin{lemma}
\label{lemma:finite_soln}
\ref{eq:cop} always has a solution with finite objective value.
\end{lemma}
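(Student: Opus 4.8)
The plan is to exhibit, by hand, a single feasible point of \ref{eq:cop} whose objective value is finite. First I would pass to the reformulation of the objective given by Corollary~\ref{cor:varphi}: setting $h(p) := \varphi_\psi(\cot^{-1}(p))\sin(\cot^{-1}(p)) = \varphi_\psi(\cot^{-1}(p))/\sqrt{1+p^2}\ge 0$, the objective of \ref{eq:cop}, as a functional of $L$, equals $\int_0^\infty h(p)/L(p)\,dp$. Because $\operatorname{supp}\psi$ is open and $\psi$ is continuous there (Definition~\ref{defn:belief}), $h(p)>0$ at exactly those ratios $p=p_X/p_Y$ for which $\psi(p_X,p_Y)>0$, up to a set of measure zero; hence any $L$ that is strictly positive wherever $h>0$ automatically avoids the degenerate ``$\psi/L=\infty$'' case. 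Fix the initial spot rate $p_0:=P_X/P_Y$ and define the weight $\mu(p):=1/p$ for $p\in(0,p_0]$ and $\mu(p):=1/p^2$ for $p>p_0$, so that constraints \ref{eq:cop1} and \ref{eq:cop2} together read exactly $\int_0^{p_0}L(p)\mu(p)\,dp\le Y_0$ and $\int_{p_0}^\infty L(p)\mu(p)\,dp\le X_0$.

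Next I would choose $L$ by a Cauchy--Schwarz-type balancing between the objective and the reserve cost: put $L(p):=\kappa\sqrt{h(p)/\mu(p)}$ on $\{h>0\}$ and $L(p):=0$ elsewhere, for a scalar $\kappa>0$ to be pinned down. Then \ref{eq:cop4} holds, the objective becomes $\kappa^{-1}\!\int_{\{h>0\}}\!\sqrt{h(p)\mu(p)}\,dp$, and the total reserve cost $\int_0^\infty L(p)\mu(p)\,dp$ equals $\kappa\int_{\{h>0\}}\!\sqrt{h(p)\mu(p)}\,dp$. Writing $I:=\int_0^\infty\sqrt{h(p)\mu(p)}\,dp$, and granting for the moment that $I<\infty$, I would then set $Y_0:=\int_0^{p_0}L\mu$ and $X_0:=\int_{p_0}^\infty L\mu$, so that \ref{eq:cop1}--\ref{eq:cop2} hold with equality and $X_0+Y_0=\kappa I$. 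Choosing $\kappa\le B/(I\max(P_X,P_Y))$ then gives $X_0P_X+Y_0P_Y\le \kappa I\max(P_X,P_Y)\le B$, i.e.\ \ref{eq:cop3} holds. The objective value at this feasible point is $I/\kappa<\infty$, which is the claim.

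The main obstacle is the remaining step, $I<\infty$. I would establish it by the change of variables $p=\cot\theta$, $\theta\in(0,\pi/2)$, under which $h=\varphi_\psi(\theta)\sin\theta$ and a direct computation gives
\[
I=\int_{0}^{\theta_0}\frac{\sqrt{\varphi_\psi(\theta)}}{\cos\theta\,\sqrt{\sin\theta}}\,d\theta+\int_{\theta_0}^{\pi/2}\frac{\sqrt{\varphi_\psi(\theta)}}{\sin\theta\,\sqrt{\cos\theta}}\,d\theta,\qquad \theta_0:=\cot^{-1}(p_0).
\]
Away from the endpoints the weights are bounded and continuous, and $\varphi_\psi$ is finite and continuous there (it is the radial integral of the continuous $\psi$, and finiteness a.e.\ follows from $N_\psi<\infty$ of Definition~\ref{defn:belief}), so the only issue is integrability near $\theta=0$ and $\theta=\pi/2$, where the weights have mild $(\cdot)^{-1/2}$-type singularities. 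Here I would use the integrability of $\psi$ on wedges $\{p_1\le p_X/p_Y\le p_2\}$ from Definition~\ref{defn:belief} to control the rate at which $\varphi_\psi(\theta)$ can grow as the ray direction approaches a coordinate axis, and check that this growth is dominated by the integrable singularity (for instance, for the belief families of Figure~\ref{fig:all_f} one has $\varphi_\psi(\theta)=O(\cot^{c}\theta)$ with $|c|<1$, which suffices). Verifying this endpoint bound in full generality from the stated properties of $\psi$ is the technical heart of the argument; everything else is the explicit construction above.
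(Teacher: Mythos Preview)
Your approach is genuinely different from the paper's, and in a sense more principled. The paper simply picks an ad hoc piecewise allocation (namely $L(p)=1$ for $p\le 1$ and $L(p)=\varphi_\psi(\cot^{-1}(p))/p^2$ for $p>1$), bounds the objective directly, asserts that the resulting $X_0,Y_0$ are finite, and rescales to meet the budget. Your choice $L=\kappa\sqrt{h/\mu}$ is exactly the shape dictated by the KKT conditions of Lemma~\ref{lemma:kktconds}, so it is the natural candidate; it also has the pleasant property that feasibility and finiteness of the objective collapse to the single condition $I<\infty$, rather than two separate checks.

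That said, you correctly flag $I<\infty$ as the unverified ``technical heart,'' and this is a real gap: the properties of $\psi$ in Definition~\ref{defn:belief} do not obviously force $\varphi_\psi(\theta)$ to be tame enough near $\theta\in\{0,\pi/2\}$ to beat your $(\cdot)^{-1/2}$ weight singularity. Item~4 controls $\int r\psi\,dr$ along each ray, not $\varphi_\psi=\int\psi\,dr$, and item~1 only concerns wedges bounded away from the axes, so one can construct beliefs with $\varphi_\psi(\theta)\sim 1/\theta$ near an axis while $N_\psi<\infty$. The paper's proof carries the same gap in a different guise: it asserts $\int_0^1\varphi_\psi(\cot^{-1}(p))\,dp<\infty$ ``as per our assumption on trader beliefs'' without argument, and its claim that $Y_0=\int_0^{p_0}L(p)/p\,dp$ is ``clearly finite'' is in fact false as written, since $L\equiv 1$ near $p=0$ makes that integral diverge. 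In short, your construction is cleaner and your identification of the remaining obstacle more honest than the paper's; neither proof, however, closes the endpoint question from Definition~\ref{defn:belief} alone.
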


\begin{corollary}
\label{corolary:phi_and_L}
On any set of nonzero measure, we cannot have $\psi(p_X, p_Y) > 0$ and $L(p_X/p_Y) = 0.$
\end{corollary}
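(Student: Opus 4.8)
The plan is a short argument by contradiction that leans on Lemma~\ref{lemma:finite_soln}. Since \ref{eq:cop} admits a feasible solution with finite objective value, the optimal value of \ref{eq:cop} is finite; in particular, no feasible allocation whose objective value is $+\infty$ can be optimal. Throughout, $L$ denotes the optimal allocation, i.e. the solution of \ref{eq:cop}.

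Suppose, for contradiction, that the set
\[
S = \{(p_X, p_Y) \in \mathbb{R}_+^2 \ :\ \psi(p_X, p_Y) > 0 \text{ and } L(p_X/p_Y) = 0\}
\]
has positive two-dimensional Lebesgue measure. Recall the conventions attached to the objective of \ref{eq:cop} (stated in Proposition~\ref{prop:failchance}): the integrand $\psi(p_X, p_Y)/(p_Y L(p_X/p_Y))$ is considered only where $\psi(p_X, p_Y) > 0$, and is taken to equal $+\infty$ whenever $L(p_X/p_Y) = 0$. Since $p_Y \in (0,\infty)$, this integrand is nonnegative everywhere it is defined and equals $+\infty$ at every point of $S$. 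For a nonnegative measurable integrand this forces the objective integral to be at least $\int_S (+\infty)\,dp_X\,dp_Y = +\infty$, because $S$ has positive measure. This contradicts the optimality of $L$, so $S$ must have measure zero, which is exactly the assertion of the corollary. If one prefers to work with the one-dimensional reformulation of Corollary~\ref{cor:varphi}, the diffeomorphism $(p_X, p_Y)\mapsto(p_X/p_Y, p_Y)$ together with Fubini (using $\psi \ge 0$) shows that a positive-measure $S$ forces $L(p)=0$ while $\varphi_\psi(\cot^{-1}(p))>0$ on a positive-measure set of ratios $p$, on which $\int_p \varphi_\psi(\cot^{-1}(p))\sin(\cot^{-1}(p))/L(p)\,dp$ diverges; the conclusion is the same.

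The only point requiring any care — and it is a routine measure-theoretic one — is the claim that an integrand equal to $+\infty$ on a set of positive measure makes the Lebesgue integral equal $+\infty$, which is immediate for nonnegative integrands, together with the bookkeeping (via the change of variables and Fubini) that a positive-area set where $\psi>0$ indeed projects to a positive-measure set of price ratios. Beyond that, the corollary is a direct consequence of Lemma~\ref{lemma:finite_soln} and the finiteness/sign conventions already in force, so I do not anticipate any genuine obstacle.
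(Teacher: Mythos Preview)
Your proof is correct and follows exactly the approach the paper intends: the corollary is placed immediately after Lemma~\ref{lemma:finite_soln} without its own written proof, precisely because the contradiction argument you give (a positive-measure set where $\psi>0$ and $L=0$ would force the objective to $+\infty$, contradicting the finite optimal value guaranteed by Lemma~\ref{lemma:finite_soln}) is the intended one-line justification. Your added remarks about the change of variables to the one-dimensional form via Corollary~\ref{cor:varphi} are extra care but not needed.
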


Proofs of Corollarys~\ref{cor:equivclass} and~\ref{cor:varphi} and Lemma~\ref{lemma:finite_soln} are in the Appendix~\ref{sec:proof_equiv},~\ref{sec:proof_varphi}, and~\ref{sec:prooffinitesol} respectively. 

\subsection{Optimality Conditions}

We first give some lemmas about the structure of optimal solutions to \ref{eq:cop}.

\newcommand{\lemmabasicopt}
{
The following hold at any optimal solution.

\begin{enumerate}
	\item 
		$\int_{0}^{p_0} \frac{L(p)}{p} dp = Y_0$
	\item 
    	$\int_{p_0}^{\infty} \frac{L(p)}{p^2} dp = X_0$
    \item
    	$X_0 P_X + Y_0 P_Y = B$

\end{enumerate}	
}

\begin{lemma}
\label{lemma:basicopt}
\lemmabasicopt{}

\end{lemma}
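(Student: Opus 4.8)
The plan is to argue by contradiction that each of the three constraints \eqref{eq:cop1}, \eqref{eq:cop2}, \eqref{eq:cop3} must be tight at any optimum, using the monotonicity of the objective in $L(\cdot)$. First I would establish part (3): if $X_0 P_X + Y_0 P_Y < B$ at a claimed optimum, then there is slack in \eqref{eq:cop3}, so we may increase $Y_0$ (say) by a small $\delta > 0$ while keeping \eqref{eq:cop3} satisfied. By Lemma \ref{lemma:finite_soln} the optimal objective is finite, and by Corollary \ref{corolary:phi_and_L} we have $L(p_X/p_Y) > 0$ a.e.\ on the support of $\psi$; pick a price $p^*$ with $\varphi_\psi(\cot^{-1}(p^*)) > 0$ (such $p^*$ exists by property (2) of Definition \ref{defn:belief}) and add a bump of liquidity concentrated near $p^* < p_0$ of total $Y$-cost $\delta$, i.e.\ increase $L$ on a small interval below $p_0$ so that $\int \frac{\Delta L(p)}{p}\,dp = \delta$. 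This keeps \eqref{eq:cop1} feasible (using the new $Y_0$), does not affect \eqref{eq:cop2}, and strictly decreases the objective by Corollary \ref{cor:varphi}, since the integrand $\varphi_\psi(\cot^{-1}(p))\sin(\cot^{-1}(p))/L(p)$ is strictly decreasing in $L(p)$ wherever $\varphi_\psi > 0$. This contradicts optimality, so \eqref{eq:cop3} is tight. (If the support of $\psi$ happens to lie entirely above $p_0$, add the bump above $p_0$ instead, using the $X_0$ budget after increasing $X_0$; the argument is symmetric.)

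Next I would handle parts (1) and (2). Suppose \eqref{eq:cop1} is slack, i.e.\ $\int_0^{p_0} \frac{L(p)}{p}\,dp < Y_0$. Then again add a small liquidity bump of $Y$-cost $\delta$ at some price $p^* < p_0$ with $\varphi_\psi(\cot^{-1}(p^*)) > 0$, small enough that \eqref{eq:cop1} remains satisfied; this leaves \eqref{eq:cop2} and \eqref{eq:cop3} untouched and strictly lowers the objective, a contradiction. The argument for \eqref{eq:cop2} is the mirror image, adding a bump at some $p^* > p_0$ with $\varphi_\psi > 0$ and using that $\int_{p_0}^\infty \frac{\Delta L(p)}{p^2}\,dp$ can be made equal to any small target. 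One subtlety: one must be able to find a price $p^*$ on the correct side of $p_0$ where $\varphi_\psi$ is positive. If the entire support of $\psi$ lies on one side of $p_0$, then on the empty side the optimal $L$ is simply $0$ there (by Corollary \ref{corolary:phi_and_L} this costs nothing in the objective), and the corresponding asset-conservation constraint need not be individually tight; in that degenerate case parts (1)/(2) should be read together with (3), and the argument still yields that $X_0 P_X + Y_0 P_Y = B$ with the relevant integral equal to the relevant $X_0$ or $Y_0$. For the generic case where $\psi$ has mass on both sides of $p_0$, all three constraints are tight as stated.

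The main obstacle I anticipate is making the ``add a small liquidity bump'' perturbation fully rigorous in the infinite-dimensional Banach-space setting: one needs to exhibit an admissible perturbation direction $\Delta L \in$ (the relevant function space, e.g.\ $L^1$ against the measures $dp/p$ and $dp/p^2$) that is nonnegative, has the prescribed small integral against $dp/p$ (or $dp/p^2$), and for which the change in the objective is genuinely negative and bounded away from $0$ of order $\delta$. The cleanest route is to work with the one-dimensional reformulation in Corollary \ref{cor:varphi}: choose $\Delta L(p) = c \cdot \mathbf{1}_{[p^*, p^*(1+\eta)]}(p)$ for a small interval on which $\varphi_\psi$ is bounded below by a positive constant (possible since $\varphi_\psi(\cot^{-1}(\cdot))$ is continuous and positive at $p^*$ by the smoothness in Definition \ref{defn:belief}), pick $c, \eta$ so the $Y$- or $X$-cost equals $\delta$, and then estimate the objective decrease by convexity / the mean value theorem, getting a bound of the form $\Theta(\delta)$ that contradicts optimality for all small $\delta > 0$. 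Once this perturbation lemma is in hand, parts (1)–(3) all follow by the same template.
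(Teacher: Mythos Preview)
Your approach is essentially the same as the paper's: argue by contradiction that slack in any of \eqref{eq:cop1}--\eqref{eq:cop3} allows a feasible perturbation that adds liquidity at some $p^*$ with $\varphi_\psi(\cot^{-1}(p^*))>0$, strictly decreasing the objective. The paper's proof is terser (it simply says ``any unallocated capital could be allocated to increase this $L(\cdot)$ on a neighbourhood''), while you spell out the explicit bump $\Delta L = c\cdot\mathbf{1}_{[p^*,p^*(1+\eta)]}$ and the continuity argument ensuring $\varphi_\psi$ is bounded below there; that extra rigor is welcome but not a different idea.

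One small point where the paper is slightly sharper: in the degenerate case where all of $\psi$'s mass lies on one side of $p_0$, you say the constraint on the empty side ``need not be individually tight.'' In fact it \emph{is} tight. The paper's argument is that any positive $Y_0$ (say) on the empty side could be traded, via \eqref{eq:cop3}, for additional $X_0$, which then funds more liquidity on the non-empty side and strictly lowers the objective; hence at the optimum $Y_0=0$ and $\int_0^{p_0}\frac{L(p)}{p}\,dp=0$, so \eqref{eq:cop1} holds as $0=0$. Your proof plan would be complete once you add this reallocation step.
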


Lemma~\ref{lemma:basicopt} says that at optimum, the constraints of \ref{eq:cop} are tight. A full proof is in Appendix~\ref{proof_basicopt}.
Using the result of Lemma \ref{lemma:basicopt}, the KKT conditions (\S 5.5.3, \cite{boyd2004convex}) of \ref{eq:cop} are the following:

\begin{lemma}[KKT Conditions]
\label{lemma:kktconds}

Let $\lambda_Y, \lambda_X$, and $\lambda_B$ be the Lagrange multipliers for \ref{eq:cop1},~\ref{eq:cop2}, and~\ref{eq:cop3} respectively.  Let $\lbrace \lambda_{L(p)} \rbrace $ 
be the Lagrange multipliers for each $L(p)\geq 0$ constraint.

When $\varphi_\psi(\cot^{-1}(p)) > 0 :$
\begin{enumerate}

	\item		For all $p$ with $p\geq p_0$, 
		$\frac{\lambda_X}{p^2} = \frac{1}{L(p)^2}\varphi_\psi(\cot^{-1}(p))\sin(\cot^{-1}(p)) + \lambda_{L(p)}$.

	\item 		For all $p$ with $p\leq p_0$, 
		$\frac{\lambda_Y}{p} = \frac{1}{L(p)^2}\varphi_\psi(\cot^{-1}(p))\sin(\cot^{-1}(p)) + \lambda_{L(p)}$.

	\item $\lambda_X = P_X \lambda_B$ and $\lambda_Y = P_Y \lambda_B$.
\end{enumerate}

When $\varphi_\psi(\cot^{-1}(p)) = 0 :$
\begin{enumerate}

	\item

		For all $p$ with $p\geq p_0$, 
		$\frac{\lambda_X}{p^2} = \lambda_{L(p)}$.

	\item 
		For all $p$ with $p\leq p_0$, 
		$\frac{\lambda_Y}{p} = \lambda_{L(p)}$.

	\item 
		$\lambda_X = P_X \lambda_B$ and $\lambda_Y = P_Y \lambda_B$.
\end{enumerate}

\end{lemma}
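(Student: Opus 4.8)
The plan is to apply the general theory of KKT conditions for convex optimization over a Banach space (as invoked in the proof of Theorem \ref{thm:problem}) to the reformulated objective from Corollary \ref{cor:varphi}. First I would rewrite \ref{eq:cop} using Corollary \ref{cor:varphi}, so that the objective becomes $\int_p \frac{\varphi_\psi(\cot^{-1}(p))\sin(\cot^{-1}(p))}{L(p)}\,dp$, a separable integral functional of the decision function $L(\cdot)$ together with the scalars $X_0, Y_0$. By Lemma \ref{lemma:basicopt}, at an optimum the constraints \ref{eq:cop1}, \ref{eq:cop2}, \ref{eq:cop3} all hold with equality, so I may treat them as equality constraints with multipliers $\lambda_Y, \lambda_X, \lambda_B$ of unrestricted sign (while the pointwise constraints \ref{eq:cop4} keep multipliers $\lambda_{L(p)} \geq 0$, subject to complementary slackness $\lambda_{L(p)} L(p) = 0$).

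Next I would form the Lagrangian
\[
\mathcal L = \int_p \frac{\varphi_\psi(\cot^{-1}(p))\sin(\cot^{-1}(p))}{L(p)}\,dp + \lambda_Y\!\left(\int_0^{p_0}\!\frac{L(p)}{p}dp - Y_0\right) + \lambda_X\!\left(\int_{p_0}^{\infty}\!\frac{L(p)}{p^2}dp - X_0\right) + \lambda_B(X_0 P_X + Y_0 P_Y - B) - \int_p \lambda_{L(p)} L(p)\,dp,
\]
and take the (Gâteaux) derivative with respect to the decision variables. Stationarity in $X_0$ gives $-\lambda_X + \lambda_B P_X = 0$, i.e. $\lambda_X = P_X\lambda_B$; stationarity in $Y_0$ gives $\lambda_Y = P_Y\lambda_B$; this yields part 3 in both cases of the lemma. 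For the pointwise derivative in $L(p)$: since the integrand involving $L(p)$ is $\frac{c(p)}{L(p)}$ with $c(p) = \varphi_\psi(\cot^{-1}(p))\sin(\cot^{-1}(p))$, its derivative is $-\frac{c(p)}{L(p)^2}$; the constraint terms contribute $\frac{\lambda_Y}{p}$ for $p \leq p_0$ (from \ref{eq:cop1}, whose integral has domain $[0,p_0]$) and $\frac{\lambda_X}{p^2}$ for $p \geq p_0$ (from \ref{eq:cop2}), and the nonnegativity constraint contributes $-\lambda_{L(p)}$. Setting the derivative to zero and rearranging gives, for $p \leq p_0$, $\frac{\lambda_Y}{p} = \frac{c(p)}{L(p)^2} + \lambda_{L(p)}$, and for $p \geq p_0$, $\frac{\lambda_X}{p^2} = \frac{c(p)}{L(p)^2} + \lambda_{L(p)}$ — exactly parts 1 and 2 in the $\varphi_\psi(\cot^{-1}(p)) > 0$ case. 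When $\varphi_\psi(\cot^{-1}(p)) = 0$ the term $\frac{c(p)}{L(p)^2}$ vanishes (formally; one should note $L(p) > 0$ there is not guaranteed, so the stationarity simplifies to the constraint and nonnegativity multipliers balancing), giving $\frac{\lambda_Y}{p} = \lambda_{L(p)}$ for $p \leq p_0$ and $\frac{\lambda_X}{p^2} = \lambda_{L(p)}$ for $p \geq p_0$.

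The main obstacle is making the pointwise differentiation rigorous over the Banach space: the objective functional $L \mapsto \int c(p)/L(p)\,dp$ is only lower semicontinuous and blows up as $L(p) \to 0$, so one must justify interchanging the variation with the integral and handle the boundary behavior where $L(p)$ may equal $0$ (which is precisely why the statement splits on whether $\varphi_\psi(\cot^{-1}(p))$ is positive — by Corollary \ref{corolary:phi_and_L}, $\varphi_\psi(\cot^{-1}(p)) > 0$ forces $L(p) > 0$ a.e., rescuing differentiability there). I would lean on the cited Banach-space optimization results (\S 4.14, Proposition 1 of \cite{zeidler1995}) to assert the KKT system is well defined, then verify that the formal stationarity computation above is the correct instantiation by checking it against a finite-dimensional discretization and passing to the limit, or by directly computing the Gâteaux derivative in admissible directions $h(\cdot)$ supported where $L(\cdot)$ is bounded away from $0$. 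The split at $p_0$ in the domains of integration of the two resource constraints is the one genuinely non-mechanical bookkeeping point, and it falls directly out of the forms of \ref{eq:cop1} and \ref{eq:cop2}.
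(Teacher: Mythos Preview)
Your proposal is correct and takes essentially the same approach as the paper: both simply assert that the stated equations are the KKT stationarity conditions of \ref{eq:cop} (after the reformulation of Corollary~\ref{cor:varphi}), invoke Zeidler's Banach-space KKT result, and flag the differentiability issue at the boundary where $L(p)\to 0$. The only minor difference is that the paper handles this last technicality by replacing $L(p)$ with $L(p)+\varepsilon$ in the denominator for arbitrarily small $\varepsilon$ (and elides the resulting distortion), whereas you propose discretization or restricting to Gâteaux directions supported where $L(\cdot)$ is bounded away from zero; either maneuver is adequate at the level of rigor the paper adopts.
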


\begin{proof}
These are the KKT conditions of \ref{eq:cop}.
$\lbrace L(p) \rbrace$
is a functional over a Banach space.  This functional exists for every optimal solution
by Proposition 1 of \S 4.14 of \cite{zeidler1995}.  Note that that proposition requires the objective
to be continuously differentiable in a neighbourhood of the optimal solution; this does not hold
when the optimization problem is as written and there is some $p$ so that $\varphi_\psi(\cot^{-1}(p))$ goes continuously to $0$ at $p$ (but is nonzero 
near $p$).  In this case, one could replace $L(p)$ by $L(p)+\varepsilon$ in the denominator of the objective, for some arbitrarily small $\varepsilon$.
This would cause a small distortion in $L(p)$.  We elide this technicality for clarity of exposition.  
Continuous differentiability of the objective on a neighbourhood where $L(p)>0$ for all $p$ with $\varphi_\psi(\cot^{-1}(p))>0$ follows
from the assumption that $\psi$ is continuously differentiable on the set where $\psi(p_X, p_Y)>0$, and that this set is open (in Definition~\ref{defn:belief}).
\extraqed{}
\end{proof}

\newcommand{\coryhatdefined}
{
The integral $\mathcal Y(\tilde{p})=\int_0^{\tilde{p}} \frac{L(p)dp}{p}$ is well defined for every $\tilde{p}$ 
and $\mathcal Y(\cdot)$ is monotone nondecreasing and continuous.
	
}

\begin{corollary}
\label{cor:yhat}
\coryhatdefined{}

\end{corollary}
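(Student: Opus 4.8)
The plan is to establish the three claimed properties of $\mathcal Y(\tilde p)=\int_0^{\tilde p}\frac{L(p)}{p}\,dp$ in turn, leaning on the structure of optimal solutions already derived. First I would argue well-definedness: by Lemma~\ref{lemma:finite_soln} the optimal solution has finite objective value, and by Corollary~\ref{corolary:phi_and_L} we cannot have $\psi(p_X,p_Y)>0$ together with $L(p_X/p_Y)=0$ on a set of positive measure, so $L$ is finite a.e.\ on the support of $\varphi_\psi$; combined with constraint~\ref{eq:cop1} (equivalently Lemma~\ref{lemma:basicopt}(1)), which says $\int_0^{p_0}\frac{L(p)}{p}\,dp=Y_0<\infty$, the integral $\int_0^{\tilde p}\frac{L(p)}{p}\,dp$ converges for every $\tilde p\le p_0$. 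For $\tilde p>p_0$ I would split the integral at $p_0$ and use the KKT conditions of Lemma~\ref{lemma:kktconds}: on $p\ge p_0$ the first KKT condition gives $L(p)^2\le \frac{p^2}{\lambda_X}\varphi_\psi(\cot^{-1}(p))\sin(\cot^{-1}(p))$ (with $\lambda_X=P_X\lambda_B>0$), so $L(p)/p$ is bounded by a square root of a locally integrable quantity times a bounded factor, and one checks that $\int_{p_0}^{\tilde p}\frac{L(p)}{p}\,dp<\infty$; alternatively, where $\varphi_\psi(\cot^{-1}(p))=0$ the KKT conditions force $L(p)=0$. Either way the integral defining $\mathcal Y$ is finite on any bounded interval.

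Monotonicity is immediate: $L(p)\ge 0$ by constraint~\ref{eq:cop4}, so $p\mapsto L(p)/p$ is nonnegative, and hence $\mathcal Y(\tilde p)=\int_0^{\tilde p}\frac{L(p)}{p}\,dp$ is monotone nondecreasing in $\tilde p$. This also matches Observation~\ref{obs:y_nondecreasing} and Observation~\ref{obs:alloc}(4), which identify $\mathcal Y(p_0)=\int_0^{p_0}\frac{L(p)}{p}\,dp$ with the $Y$-reserves.

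Continuity then follows from absolute continuity of the Lebesgue integral of an integrable function: since $g(p):=L(p)/p$ is nonnegative and integrable on every bounded subinterval of $(0,\infty)$ (by the finiteness argument above), the function $\tilde p\mapsto\int_0^{\tilde p}g(p)\,dp$ is (absolutely) continuous. At $\tilde p=0$ we have $\mathcal Y(0)=0$ and $\mathcal Y(\tilde p)\to 0$ as $\tilde p\to 0^+$ by dominated convergence. I expect the only mildly delicate point — the ``main obstacle'' such as it is — to be verifying integrability of $L(p)/p$ near $p=0$ and near $\tilde p=\infty$ directly from the KKT characterization rather than simply invoking the finite objective value; but this is handled by the bound $L(p)^2\lesssim p\,\varphi_\psi(\cot^{-1}(p))\sin(\cot^{-1}(p))$ on $p\le p_0$ from Lemma~\ref{lemma:kktconds}(2), together with Lemma~\ref{lemma:basicopt}(1)--(2) which guarantee the relevant tail and head integrals are exactly $Y_0$ and $X_0$ and hence finite. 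Everything else is routine.
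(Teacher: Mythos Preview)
Your proposal is correct and follows essentially the same approach as the paper: both arguments extract from the KKT conditions (Lemma~\ref{lemma:kktconds}) the explicit form of $L(p)$ in terms of $\varphi_\psi$, use this to verify that $L(p)/p$ is locally integrable, and then read off monotonicity from $L(p)\ge 0$ and continuity from standard properties of the Lebesgue integral. The only minor difference is one of emphasis: the paper relies directly on the integrability assumption on $\varphi_\psi$ to conclude that the closed-form expressions $\sqrt{(p^2/\lambda_X)\varphi_\psi(\cot^{-1}(p))\sin(\cot^{-1}(p))}$ and $\sqrt{(p/\lambda_Y)\varphi_\psi(\cot^{-1}(p))\sin(\cot^{-1}(p))}$ are integrable, whereas you lean more on the budget constraints (Lemma~\ref{lemma:basicopt}(1)--(2)) to get finiteness of $\int_0^{p_0}L(p)/p\,dp$ and $\int_{p_0}^\infty L(p)/p^2\,dp$ and then bound the remaining piece on $[p_0,\tilde p]$ --- arguably a slightly cleaner route, since it sidesteps the question of whether the square root of an integrable function is integrable.
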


A proof is given in Appendix~\ref{sec:proof_int_y}. Lemma \ref{lemma:kktconds} and Corollary \ref{cor:yhat} together
imply that the behaviour of a CFMM that results from an optimal solution of \ref{eq:cop} is well-defined. \\

\subsubsection{Consequences of KKT Conditions} ~

The KKT conditions immediately imply the following facts about any optimal solution of \ref{eq:cop}.

\newcommand{\lemmakktobs}
{
\begin{enumerate}
	\item
		$\lambda_Y Y_0 = \int_0^{p_0} \frac{\varphi_\psi(\cot^{-1}(p))\sin(\cot^{-1}(p))}{L(p)} dp$
		and $\lambda_X X_0 = \int_{p_0}^\infty \frac{\varphi_\psi(\cot^{-1}(p))\sin(\cot^{-1}(p))}{L(p)} dp$.

	\item

		$Y_0>0$ implies $\lambda_Y>0$.  Similarly, $X_0 > 0$ implies $\lambda_X > 0$.

	\item
		$L(p)\neq 0$ if and only if $\lambda_{L(p)}=0$ (unless, for $p\leq p_0$, $\lambda_Y=0$ or for $p\geq p_0$, $\lambda_X=0$).

	\item
		The objective value is $\lambda_Y Y_0 + \lambda_X X_0$.

	\item 
		$\frac{\lambda_X}{P_X} = \frac{\lambda_Y}{P_Y}$.

\end{enumerate}

}

\begin{lemma}
\label{lemma:kktobs}
\lemmakktobs{}

\end{lemma}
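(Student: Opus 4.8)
The plan is to obtain all five items by combining the KKT stationarity equations of Lemma~\ref{lemma:kktconds}, the tightness of the three constraints from Lemma~\ref{lemma:basicopt}, complementary slackness for the constraints $L(p)\ge 0$, and the reformulation of the objective from Corollary~\ref{cor:varphi}. Items~1, 4, and 5 are essentially algebraic, so I would dispatch them first. Item~5 is immediate from the third KKT condition in Lemma~\ref{lemma:kktconds}: since $\lambda_X = P_X\lambda_B$ and $\lambda_Y = P_Y\lambda_B$, we get $\lambda_X/P_X = \lambda_B = \lambda_Y/P_Y$. For item~1, write $g_\psi(p) := \varphi_\psi(\cot^{-1}(p))\sin(\cot^{-1}(p))$ and establish the pointwise identity $\lambda_Y L(p)/p = g_\psi(p)/L(p)$ for a.e.\ $p\le p_0$ (with the convention $0/0=0$): on the set where $g_\psi(p)>0$ we have $L(p)>0$ a.e.\ by Corollary~\ref{corolary:phi_and_L}, hence $\lambda_{L(p)}=0$ by complementary slackness, so the relevant KKT equation reads $\lambda_Y/p = g_\psi(p)/L(p)^2$; on the set where $g_\psi(p)=0$ the right side vanishes, while the KKT equation reads $\lambda_Y/p=\lambda_{L(p)}$, giving $\lambda_Y L(p)/p = \lambda_{L(p)}L(p)=0$ by complementary slackness. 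Integrating this identity over $(0,p_0]$ and using $\int_0^{p_0}L(p)/p\,dp=Y_0$ from Lemma~\ref{lemma:basicopt} yields $\lambda_Y Y_0=\int_0^{p_0} g_\psi(p)/L(p)\,dp$; the same computation on $(p_0,\infty)$ with $1/p^2$ and $X_0$ in place of $1/p$ and $Y_0$ gives the second equation of item~1. Item~4 then follows by adding the two equations of item~1 and invoking Corollary~\ref{cor:varphi}: $\lambda_Y Y_0+\lambda_X X_0=\int_0^\infty g_\psi(p)/L(p)\,dp$, which is the objective value.

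For item~3, the forward direction is pure complementary slackness: $\lambda_{L(p)}L(p)=0$ forces $\lambda_{L(p)}=0$ whenever $L(p)\ne 0$. For the converse, suppose $\lambda_{L(p)}=0$ and, say, $p\ge p_0$. If $\varphi_\psi(\cot^{-1}(p))>0$, the first KKT equation gives $L(p)^2=p^2 g_\psi(p)/\lambda_X$, which is strictly positive whenever $\lambda_X>0$; if instead $\varphi_\psi(\cot^{-1}(p))=0$, the KKT equation forces $\lambda_X=p^2\lambda_{L(p)}=0$, precisely the case excluded by the ``unless'' clause. The argument for $p\le p_0$ is symmetric with $\lambda_Y$ in place of $\lambda_X$.

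The only item that does not follow from stationarity and complementary slackness alone is item~2, which needs a perturbation argument; I expect this to be the main obstacle. I would argue by contradiction: suppose $Y_0>0$ yet $\lambda_Y=0$. By item~1, $0=\lambda_Y Y_0=\int_0^{p_0} g_\psi(p)/L(p)\,dp$; since $\sin(\cot^{-1}(p))>0$ and, by tightness of \ref{eq:cop1}, $L(p)>0$ on a positive-measure subset of $(0,p_0]$, this forces $\varphi_\psi(\cot^{-1}(p))=0$ for a.e.\ $p\le p_0$. Combined with this, Definition~\ref{defn:belief}(2) forces $\varphi_\psi(\cot^{-1}(p))>0$ on a positive-measure subset of $(p_0,\infty)$. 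Now construct a strictly better feasible point with the same $p_0$: decrease $Y_0$ to $Y_0-\delta$ for small $\delta>0$ (rescaling $L$ on $(0,p_0]$ to keep \ref{eq:cop1} tight, which leaves the objective unchanged because $g_\psi$ vanishes there), increase $X_0$ to $X_0+\delta P_Y/P_X$ so that \ref{eq:cop3} stays tight, and distribute the freed $X$-capital as an increase of $L(p)$ over the region of $(p_0,\infty)$ where $\varphi_\psi(\cot^{-1}(p))>0$; there $g_\psi(p)>0$ and $L$ has strictly increased, so $\int_{p_0}^\infty g_\psi(p)/L(p)\,dp$ strictly decreases, contradicting optimality. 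The symmetric argument gives $X_0>0\Rightarrow\lambda_X>0$. The delicate points are checking that every constraint remains satisfiable after the perturbation and that the objective strictly drops, which is where finiteness of $L$ on $\{g_\psi>0\}$ (a consequence of Lemma~\ref{lemma:finite_soln} together with Corollary~\ref{corolary:phi_and_L}) is used.
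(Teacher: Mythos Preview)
Your proof is correct and, for items~1, 3, 4, and 5, takes essentially the same route as the paper: multiply the KKT stationarity equations by $L(p)$, integrate using the tight constraints of Lemma~\ref{lemma:basicopt}, and read off the remaining items from complementary slackness and the third KKT condition.

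The only point of divergence is item~2. The paper dispatches it in one line: ``If $Y_0>0$, then the right side of the equation in item~1 is nonzero, so $\lambda_Y$ must be nonzero.'' You instead mount a full contradiction-by-perturbation argument. Your route is not really different in substance---the paper's assertion that the right side is nonzero itself hides an optimality argument (capital allocated where $\varphi_\psi=0$ is wasted and could be reallocated to strictly improve the objective), and your perturbation is precisely that argument made explicit. So what you call ``the main obstacle'' is something the paper simply leaves to the reader. One small remark: in your item~2, the step ``$\int_0^{p_0} g_\psi/L\,dp=0$ forces $\varphi_\psi=0$ a.e.\ on $(0,p_0]$'' does not follow solely from ``$L>0$ on a positive-measure subset''; you need either Corollary~\ref{corolary:phi_and_L} or the paper's convention that $g_\psi/L=\infty$ when $g_\psi>0$ and $L=0$ to rule out the region $\{g_\psi>0,\,L=0\}$. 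You invoke exactly this corollary in item~1, so the fix is just to cite it again here.
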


\begin{proof}

\begin{enumerate}

	\item 
		Multiply each side of the first KKT condition in Lemma \ref{lemma:kktconds} by $L(p)$ (for $p$ with nonzero $\varphi_\psi(\cot^{-1}(p))$
		to get
		$\frac{\lambda_X L(p)}{p^2} = \frac{1}{L(p)}\varphi_\psi(\cot^{-1}(p))\sin(\cot^{-1}(p)))$, integrate from $p_0$ to $\infty$,
		and apply the second item of Lemma \ref{lemma:basicopt}.
		
		A similar argument (integrating from $0$ to $p_0$) gives the expression on $\lambda_Y Y_0$.
	\item

		If $Y_0>0$, then the right side of the equation in the previous part is nonzero, so $\lambda_Y$ must be nonzero.
		The case of $\lambda_X$ is identical.

	\item
		Follows from points 1 and 2 of Lemma \ref{lemma:kktconds}.

	\item

		The right sides of the equations in the first statement add up to the objective.

	\item
		Follows from point 3 of Lemma \ref{lemma:kktconds} \qedhere
\end{enumerate}
\extraqed{}
\end{proof}

Lemma \ref{lemma:kktobs} shows that the fraction of liquidity allocated to an exchange rate $p$ is a function only of the LP's
(relative) belief that the future exchange rate will be $p$.  Specifically, except through an overall scalar,
there is no interaction between the values of $L(\cdot)$ at different relative exchange rates.

\begin{proposition}
\label{prop:lp_dependencies}

At an optimum,
$L(p)$ is a function of $\lambda_X, \lambda_Y, \varphi_\psi(\cot^{-1}(p))\sin(\cot^{-1}(p))$, and $p$.

\end{proposition}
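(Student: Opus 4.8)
The plan is to read the claim directly off the KKT conditions of Lemma~\ref{lemma:kktconds}, using complementary slackness to pin down $L(p)$ pointwise. Fix an optimal solution of~\ref{eq:cop} together with its multipliers $\lambda_X,\lambda_Y,\lambda_B$ and $\{\lambda_{L(p)}\}$, whose existence is guaranteed by Lemma~\ref{lemma:kktconds}. First I would record two elementary facts. Since $p>0$ forces $\cot^{-1}(p)\in(0,\pi/2)$, we have $\sin(\cot^{-1}(p))>0$, so $\varphi_\psi(\cot^{-1}(p))\sin(\cot^{-1}(p))$ vanishes exactly when $\varphi_\psi(\cot^{-1}(p))=0$; this lets me pass freely between the two regimes of Lemma~\ref{lemma:kktconds} phrased in terms of either quantity. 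Second, because $\lambda_X=P_X\lambda_B$ and $\lambda_Y=P_Y\lambda_B$ (Lemma~\ref{lemma:kktconds}) and a rational LP sets $p_0=P_X/P_Y$, the threshold $p_0$ equals $\lambda_X/\lambda_Y$, so the test ``$p\geq p_0$ or $p\leq p_0$'' is itself a function of $\lambda_X,\lambda_Y,p$; hence it is harmless that the formula for $L(p)$ will be piecewise in $p_0$.

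Next I would dispatch the case $\varphi_\psi(\cot^{-1}(p))=0$. Here the relevant KKT equation reads $\lambda_{L(p)}=\lambda_X/p^2$ (if $p\geq p_0$) or $\lambda_{L(p)}=\lambda_Y/p$ (if $p\leq p_0$). Provided $\lambda_X,\lambda_Y>0$ --- which I would justify in one line from the fact that $\psi$ is nonzero on a set of positive measure (Definition~\ref{defn:belief}): the first-regime stationarity equations cannot hold with a finite positive $L$ and a zero multiplier, so $\lambda_B>0$, hence $\lambda_X,\lambda_Y>0$, consistent with Lemma~\ref{lemma:kktobs}(2) and Lemma~\ref{lemma:basicopt} --- we get $\lambda_{L(p)}>0$, so complementary slackness for $L(p)\geq 0$ gives $L(p)=0$. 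On this set $L(p)$ is therefore the constant $0$, trivially a function of the listed data.

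For the case $\varphi_\psi(\cot^{-1}(p))>0$, I would first argue $L(p)>0$: if $L$ vanished on a set of positive measure where $\varphi_\psi(\cot^{-1}(p))\sin(\cot^{-1}(p))>0$, then by Corollary~\ref{cor:varphi} the objective would be $+\infty$, contradicting optimality together with Lemma~\ref{lemma:finite_soln} (this is essentially Corollary~\ref{corolary:phi_and_L}). With $L(p)>0$, complementary slackness (Lemma~\ref{lemma:kktobs}(3)) gives $\lambda_{L(p)}=0$, and the stationarity condition of Lemma~\ref{lemma:kktconds} becomes solvable for $L(p)$:
\begin{equation*}
L(p) =
\begin{cases}
\sqrt{\,p^2\,\varphi_\psi(\cot^{-1}(p))\sin(\cot^{-1}(p))/\lambda_X\,}, & p \geq p_0,\\
\sqrt{\,p\,\varphi_\psi(\cot^{-1}(p))\sin(\cot^{-1}(p))/\lambda_Y\,}, & p \leq p_0,
\end{cases}
\end{equation*}
the two branches agreeing at $p=p_0=\lambda_X/\lambda_Y$. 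In both regimes the right-hand side depends on $p$ only through $p$ itself, through $\varphi_\psi(\cot^{-1}(p))\sin(\cot^{-1}(p))$, and through the scalars $\lambda_X,\lambda_Y$; no value $L(p')$ at another exchange rate enters. Combining the two cases proves the proposition.

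I expect the only genuine subtlety --- and hence the main thing to get right --- to be the justification that $\lambda_X,\lambda_Y>0$ (so the case split is clean and $L(p)=0$ genuinely follows where $\varphi_\psi=0$) together with the parallel claim that $L(p)>0$ wherever $\varphi_\psi(\cot^{-1}(p))>0$; both rest on finiteness of the optimal objective (Lemma~\ref{lemma:finite_soln}) and non-degeneracy of $\psi$ (Definition~\ref{defn:belief}). Everything else is algebraic rearrangement of the KKT identities already established in Lemmas~\ref{lemma:kktconds} and~\ref{lemma:kktobs}.
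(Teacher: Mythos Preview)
Your proposal is correct and takes essentially the same approach as the paper, which simply says ``Follows from Lemma~\ref{lemma:kktconds}.'' You have spelled out in full detail what that one-line invocation entails: splitting on whether $\varphi_\psi(\cot^{-1}(p))$ vanishes, invoking complementary slackness, and then solving the stationarity equation explicitly for $L(p)$ in each regime.
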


\begin{proof}
Follows from Lemma \ref{lemma:kktconds}. \extraqed{}
\end{proof}

Proposition \ref{prop:lp_dependencies} gives several important consequences.  First, it shows that an optimal liquidity allocation
can be inverted to give a set of belief functions that lead to that liquidity allocation.


\newcommand{\corlpequiv}
{
A liquidity allocation $L(\cdot)$ and an initial spot exchange rate $p_0$ are sufficient to uniquely specify 
an equivalence class of beliefs (as defined in Corollary \ref{cor:equivclass}) for which $L(\cdot)$ is optimal.
	
}

\begin{corollary}
\label{cor:lp_to_equiv}

\corlpequiv{}
\end{corollary}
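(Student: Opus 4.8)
The plan is to read off the claimed correspondence directly from the KKT conditions of \ref{eq:cop} as organized in Lemma~\ref{lemma:kktconds} and Proposition~\ref{prop:lp_dependencies}. Recall from Corollary~\ref{cor:equivclass} that a belief $\psi$ enters \ref{eq:cop} only through the single-variable function $\varphi_\psi(\theta)=\int_r\psi(r\cos\theta,r\sin\theta)\,dr$, and in fact (Corollary~\ref{cor:varphi}) only through the composite weight $w(p):=\varphi_\psi(\cot^{-1}(p))\sin(\cot^{-1}(p))$ attached to each exchange rate $p$. So it suffices to show that, given $L(\cdot)$ and $p_0$, there is a unique (up to positive scalar) choice of the weights $\{w(p)\}_{p>0}$ making $L(\cdot)$ optimal, and then to translate ``unique weight function up to scalar'' back into ``unique equivalence class of beliefs'' via the definition in Corollary~\ref{cor:equivclass}.

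First I would handle the direction ``$L$ determines the weights.'' Suppose $L(\cdot)$ is optimal for some belief in the sense of Theorem~\ref{thm:problem}, with the given $p_0$. By Lemma~\ref{lemma:basicopt} the budget and both asset-conservation constraints are tight, so $X_0=\int_{p_0}^\infty L(p)/p^2\,dp$ and $Y_0=\int_0^{p_0}L(p)/p\,dp$ are determined by $L$ and $p_0$ alone. On the set where $L(p)>0$ we have $\lambda_{L(p)}=0$ (Lemma~\ref{lemma:kktobs}, item~3), so the first two KKT equations of Lemma~\ref{lemma:kktconds} give, for $p\ge p_0$, $w(p)=\lambda_X L(p)^2/p^2$, and for $p\le p_0$, $w(p)=\lambda_Y L(p)^2/p$. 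By Lemma~\ref{lemma:kktobs} item~5, $\lambda_X/\lambda_Y = P_X/P_Y$ is pinned down (note $p_0=P_X/P_Y$ for a rational LP, so this ratio is $p_0$), and the overall scale of $(\lambda_X,\lambda_Y)$ is irrelevant because scaling all $w(p)$ by a positive constant scales the objective by that constant without changing the minimizer — equivalently it corresponds to scaling $\psi$. On the set where $L(p)=0$, Corollary~\ref{corolary:phi_and_L} forces $\psi(p_X,p_Y)=0$ a.e.\ on the corresponding cone, hence $w(p)=0$ there. Thus $w(\cdot)$ is determined by $L(\cdot)$ and $p_0$ up to a global positive scalar.

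Next, the converse direction: given this $w(\cdot)$ (any representative of the scalar class), the prescribed $L(\cdot)$ together with $X_0,Y_0$ as above and multipliers $\lambda_X=p_0\lambda_Y$, $\lambda_B=\lambda_X/P_X$ satisfies all KKT conditions of Lemma~\ref{lemma:kktconds} — one checks stationarity on $\{L>0\}$ by construction, sets $\lambda_{L(p)}=0$ there, and on $\{L=0\}$ sets $\lambda_{L(p)}=\lambda_X/p^2$ (resp.\ $\lambda_Y/p$), which is $\ge 0$; complementary slackness and primal feasibility hold by Lemma~\ref{lemma:basicopt}. Since \ref{eq:cop} is convex (Theorem~\ref{thm:problem}), KKT is sufficient for optimality, so $L(\cdot)$ is indeed optimal for any belief with that weight function. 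Finally I translate back: by Corollary~\ref{cor:equivclass}, two beliefs with $\varphi_{\psi_1}=\alpha\,\varphi_{\psi_2}$ for a constant $\alpha>0$ share the same optimal $L(\cdot)$; conversely the argument above shows any belief for which $L(\cdot)$ is optimal must have its $\varphi_\psi$ (equivalently $w$) proportional to the distinguished one. Hence the set of such beliefs is exactly one equivalence class, and it is uniquely determined by $(L,p_0)$.

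The main obstacle I anticipate is the edge behavior rather than the core algebra: namely being careful on the boundary between $\{L>0\}$ and $\{L=0\}$, and at $p=p_0$ where the two KKT branches must be consistent (this forces a compatibility relation between $\lambda_X/p_0^2$ and $\lambda_Y/p_0$, which is exactly $\lambda_X=p_0\lambda_Y$, matching item~5 of Lemma~\ref{lemma:kktobs}), together with the continuous-differentiability caveat in the proof of Lemma~\ref{lemma:kktconds} when $\varphi_\psi$ vanishes continuously. These are handled by the same $L(p)\mapsto L(p)+\varepsilon$ regularization already invoked there, so I would simply cite that discussion and note it does not affect the equivalence-class statement. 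The remaining steps are routine substitutions into the already-established KKT system.
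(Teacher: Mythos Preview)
Your proposal is correct and follows essentially the same approach as the paper: invert the KKT stationarity conditions of Lemma~\ref{lemma:kktconds} to express $\varphi_\psi(\cot^{-1}(p))\sin(\cot^{-1}(p))$ in terms of $L(p)$ and the multipliers, then use $\lambda_X/\lambda_Y=P_X/P_Y=p_0$ (Lemma~\ref{lemma:kktobs}, item~5) to reduce the remaining freedom to a single positive scalar. The paper's own proof is a two-sentence sketch of exactly this; you go further by explicitly handling the $L(p)=0$ region via Corollary~\ref{corolary:phi_and_L}, checking the converse (sufficiency of KKT by convexity), and noting the $p=p_0$ compatibility, all of which the paper either leaves implicit or defers to the subsequent Corollary~\ref{cor:inverse_problem}.
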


Second, Proposition \ref{prop:lp_dependencies} actually enables an explicit construction of a belief
that leads to $L(\cdot)$.


\newcommand{\corinverse}
{
Let $P_X$ and $P_Y$ be initial reference valuations, and let $L(\cdot)$ denote a liquidity allocation.
Define the belief $\psi(p_X, p_Y)$ to be $\frac{(L(p_X/p_Y))^2}{p_X/p_Y}$ when $p_X\in (0, P_X]$ and
$p_Y\in (0, P_Y]$, and to be $0$ otherwise.
Then $L(\cdot)$ is the optimal allocation for $\psi(\cdot, \cdot)$.
}
\begin{corollary}
\label{cor:inverse_problem}
\corinverse{}

\end{corollary}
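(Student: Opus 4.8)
The plan is to verify that the proposed belief $\psi(p_X, p_Y) = \frac{(L(p_X/p_Y))^2}{p_X/p_Y}$ on the box $(0,P_X]\times(0,P_Y]$ actually makes the given $L(\cdot)$ satisfy the KKT conditions of \ref{eq:cop}, hence (by convexity) is optimal. The first step is to compute $\varphi_\psi(\theta) = \int_r \psi(r\cos\theta, r\sin\theta)\,dr$ for this particular $\psi$. Writing $p = \cot\theta$, the integrand $\psi(r\cos\theta, r\sin\theta)$ equals $\frac{(L(p))^2}{p}$ on the range of $r$ for which $(r\cos\theta, r\sin\theta)$ lies in the box, and $0$ elsewhere; since $(L(p))^2/p$ does not depend on $r$, the integral is just $\frac{(L(p))^2}{p}$ times the length of that $r$-interval. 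The $r$-interval is $\{r : r\cos\theta \le P_X,\ r\sin\theta \le P_Y\}$, i.e. $r \in (0, \min(P_X/\cos\theta, P_Y/\sin\theta)]$, so $\varphi_\psi(\theta) = \frac{(L(p))^2}{p}\cdot \min\!\left(\frac{P_X}{\cos\theta}, \frac{P_Y}{\sin\theta}\right)$. Equivalently $\varphi_\psi(\cot^{-1}(p))\sin(\cot^{-1}(p)) = \frac{(L(p))^2}{p}\cdot \min(P_X, P_Y/p)$ after using $\cos\theta = p\sin\theta$ and simplifying.

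The second step is to plug this into the KKT conditions of Lemma \ref{lemma:kktconds}. For $p \ge p_0$ the condition reads $\frac{\lambda_X}{p^2} = \frac{1}{L(p)^2}\varphi_\psi(\cot^{-1}(p))\sin(\cot^{-1}(p)) + \lambda_{L(p)}$. Here one must choose $p_0$ appropriately — the natural choice, matching the box, is $p_0 = P_X/P_Y$, so that $\min(P_X, P_Y/p) = P_Y/p$ exactly when $p \ge p_0$ and $= P_X$ when $p \le p_0$. Substituting, for $p \ge p_0$ the first term on the right becomes $\frac{1}{L(p)^2}\cdot\frac{(L(p))^2}{p}\cdot\frac{P_Y}{p} = \frac{P_Y}{p^2}$, so the condition is satisfied with $\lambda_X = P_Y$ and $\lambda_{L(p)} = 0$ wherever $L(p) > 0$. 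Symmetrically, for $p \le p_0$ the KKT condition $\frac{\lambda_Y}{p} = \frac{1}{L(p)^2}\varphi_\psi(\cot^{-1}(p))\sin(\cot^{-1}(p)) + \lambda_{L(p)}$ becomes $\frac{1}{L(p)^2}\cdot\frac{(L(p))^2}{p}\cdot P_X = \frac{P_X}{p}$, so it holds with $\lambda_Y = P_X$ and $\lambda_{L(p)} = 0$. Then we must check the third condition, $\lambda_X = P_X\lambda_B$ and $\lambda_Y = P_Y\lambda_B$: with $\lambda_X = P_Y$, $\lambda_Y = P_X$ this forces $\lambda_B = P_Y/P_X = P_X/P_Y$, which is only consistent if $P_X = P_Y$. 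So one actually needs to rescale: set $\lambda_B = 1$, $\lambda_X = P_X$, $\lambda_Y = P_Y$, and correspondingly use the belief $\psi$ scaled by a positive constant (which by Corollary \ref{cor:equivclass} does not change the optimal allocation) — or, more cleanly, observe that $L(\cdot)$ and $c\cdot L(\cdot)^2$-based beliefs for any $c>0$ are all in the same equivalence class, and pick $c$ to make the multipliers consistent. I would phrase the argument so this scaling freedom is invoked up front.

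The third step is to supply the nonnegativity of the multipliers and handle the boundary/degenerate cases: $\lambda_X, \lambda_Y, \lambda_B > 0$ as chosen; $\lambda_{L(p)} = 0$ where $L(p)>0$ (complementary slackness with the $L(p)\ge 0$ constraint, consistent with Lemma \ref{lemma:kktobs} item 3); and where $\varphi_\psi(\cot^{-1}(p)) = 0$ — which happens precisely outside the cone swept by the box, i.e. for $p$ such that the ray at angle $\cot^{-1}(p)$ misses $(0,P_X]\times(0,P_Y]$, but actually every ray from the origin with positive slope does hit the box, so $\varphi_\psi > 0$ for all $p>0$ as long as $L(p)>0$; if $L(p)=0$ on some set then $\psi = 0$ there too and the "$\varphi_\psi = 0$" branch of the KKT conditions applies with $\lambda_{L(p)} = \lambda_X/p^2 > 0$, which is fine. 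One should also confirm the budget constraints \ref{eq:cop1}–\ref{eq:cop3} can be met: given $L(\cdot)$, define $Y_0 = \int_0^{p_0} L(p)/p\,dp$ and $X_0 = \int_{p_0}^\infty L(p)/p^2\,dp$ (assumed finite, else the statement is vacuous/needs $L$ admissible), and set $B = P_X X_0 + P_Y Y_0$; then all three hold with equality as in Lemma \ref{lemma:basicopt}. Finally, since \ref{eq:cop} is a convex program (Theorem \ref{thm:problem}) and the KKT conditions are sufficient for optimality over the Banach space (Proposition 1, \S 4.14, \cite{zeidler1995}), $L(\cdot)$ is optimal for $\psi$.

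The main obstacle I anticipate is bookkeeping around the Lagrange multipliers and the $\min(P_X, P_Y/p)$ term: getting the split at $p_0 = P_X/P_Y$ to line up exactly with the two cases of the KKT conditions, and reconciling $\lambda_X = P_X\lambda_B$, $\lambda_Y = P_Y\lambda_B$ with the values forced by the substitution — which, as noted, requires either rescaling $\psi$ by a constant (harmless by Corollary \ref{cor:equivclass}) or being careful that the constant $c$ in $\psi = c\,L^2/(p_X/p_Y)$ is a free parameter of the construction. A secondary technical point is the differentiability caveat from the proof of Lemma \ref{lemma:kktconds} when $\varphi_\psi$ touches zero; here $\psi$ as defined inherits continuous differentiability from $L$ wherever $L>0$, and the construction can be restricted to that regime, so this should not cause real trouble.
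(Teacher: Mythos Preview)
Your approach is the same as the paper's: compute $\varphi_\psi$ for this belief and verify the KKT conditions of Lemma~\ref{lemma:kktconds}. The paper phrases it by letting $\hat L(\cdot)$ denote the optimizer for $\psi$, deducing from the KKT conditions that $L(p)^2/\hat L(p)^2 = \lambda_B$ is constant in $p$, and then observing that since $L$ and $\hat L$ satisfy the same budget constraint one must have $\lambda_B = 1$, hence $\hat L = L$.

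There is, however, a computational slip in your first step that creates the apparent inconsistency you spend the second paragraph working around. Multiplying the $r$-length $\min(P_X/\cos\theta,\,P_Y/\sin\theta)$ by $\sin\theta$ and using $\cos\theta = p\sin\theta$ gives $\min(P_X/p,\,P_Y)$, not $\min(P_X,\,P_Y/p)$. With the correct expression, for $p\ge p_0 = P_X/P_Y$ one has $\min(P_X/p,\,P_Y) = P_X/p$, so the KKT condition reads $\lambda_X/p^2 = P_X/p^2$, i.e.\ $\lambda_X = P_X$; symmetrically $\lambda_Y = P_Y$ for $p\le p_0$. Then $\lambda_X = P_X\lambda_B$ and $\lambda_Y = P_Y\lambda_B$ are satisfied with $\lambda_B = 1$, with no need to invoke Corollary~\ref{cor:equivclass} or rescale. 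Your anticipated ``main obstacle'' is thus an artifact of the swapped $P_X$ and $P_Y$; once corrected, the multipliers line up exactly and the remainder of your outline (feasibility via the choice of $X_0,Y_0,B$, complementary slackness, and sufficiency of KKT by convexity) goes through as written.
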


Finally, the KKT conditions (Lemma \ref{lemma:kktconds}) imply that linear combinations of beliefs result in predictable
combinations of liquidity allocations. Towards this, we have the following result, which will also be useful in further proofs. Proofs of Corollaries~\ref{cor:lp_to_equiv},~\ref{cor:inverse_problem}, and~\ref{cor:addbeliefs} are in Appendix~\S\ref{sec:corr_proofs}.

\newcommand{\thmaddbeliefs}
{
Let $\psi_1, \psi_2$ be any two belief functions (that give $\varphi_{\psi_1}$ and $\varphi_{\psi_2}$)
with optimal allocations $L_1(\cdot)$ and $L_2(\cdot)$, and let $L(\cdot)$ be the optimal allocation for
$\psi_1+\psi_2$.
Then $L^2(\cdot)$ is a linear combination of $L_1^2(\cdot)$ and $L_2^2(\cdot)$.

Further, when $\varphi_{\psi_1}$ and $\varphi_{\psi_2}$ have disjoint support,
$L(\cdot)$ is a linear combination of $L_1(\cdot)$ and $L_2(\cdot)$.
	
}

\begin{corollary}

\label{cor:addbeliefs}

\thmaddbeliefs{}

\end{corollary}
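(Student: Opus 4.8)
The plan is to extract from the KKT conditions (Lemma~\ref{lemma:kktconds}) an explicit pointwise formula for $L(\cdot)^2$ in terms of $\varphi_\psi$, exploit the linearity of $\varphi$ in $\psi$, and then verify that the proportionality constants arising from the Lagrange multipliers agree across the two regions $p\le p_0$ and $p\ge p_0$.

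First I would fix notation: write $g_\psi(p):=\varphi_\psi(\cot^{-1}(p))\sin(\cot^{-1}(p))$, which is nonnegative since $\varphi_\psi$ is an integral of $\psi\ge 0$ and $\sin(\cot^{-1}(p))>0$ for $p>0$, and recall that $p_0=P_X/P_Y$ is fixed by the (common) reference prices, hence identical in all three optimization problems. For a belief $\psi$ with optimal allocation $L(\cdot)$ and multipliers $\lambda_X,\lambda_Y,\lambda_B$, I claim Lemma~\ref{lemma:kktconds} yields
\[
L(p)^2=\begin{cases}p^2 g_\psi(p)/\lambda_X,& p\ge p_0,\\ p\,g_\psi(p)/\lambda_Y,& p\le p_0,\end{cases}\qquad\text{whenever }g_\psi(p)>0,
\]
and $L(p)=0$ whenever $g_\psi(p)=0$. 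Indeed, where $g_\psi(p)>0$ one must have $L(p)\ne 0$ (otherwise the first or second equation of Lemma~\ref{lemma:kktconds} would force the finite quantity $\lambda_X/p^2$ or $\lambda_Y/p$ to equal $+\infty$), so complementary slackness gives $\lambda_{L(p)}=0$ and the displayed formula follows; where $g_\psi(p)=0$, the $\varphi_\psi=0$ branch gives $\lambda_{L(p)}=\lambda_X/p^2>0$ (for $p\ge p_0$) and hence $L(p)=0$ by complementary slackness (the remaining degenerate case $\lambda_X=0$ forces $X_0=0$ by Lemma~\ref{lemma:kktobs}, whence $L\equiv 0$ on $[p_0,\infty)$ by Lemma~\ref{lemma:basicopt}; the region $p\le p_0$ is symmetric).

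Next I would use that $\varphi_\psi=\int_r\psi(r\cos\theta,r\sin\theta)\,dr$ is linear in $\psi$, so $g_{\psi_1+\psi_2}=g_{\psi_1}+g_{\psi_2}$, and since both summands are nonnegative, $g_{\psi_1+\psi_2}(p)=0$ iff $g_{\psi_1}(p)=g_{\psi_2}(p)=0$. Writing $\lambda_X^{(i)},\lambda_Y^{(i)},\lambda_B^{(i)}$ for the multipliers of the problem with belief $\psi_i$ and $\lambda_X,\lambda_Y,\lambda_B$ for those of $\psi_1+\psi_2$, and applying the boxed identity to each of the three problems, a pointwise combination gives, for $p\ge p_0$,
\[
L(p)^2=\frac{p^2\big(g_{\psi_1}(p)+g_{\psi_2}(p)\big)}{\lambda_X}=\frac{\lambda_X^{(1)}}{\lambda_X}L_1(p)^2+\frac{\lambda_X^{(2)}}{\lambda_X}L_2(p)^2,
\]
where the cases in which one or both $g_{\psi_i}(p)$ vanish are handled directly (every term that should vanish does). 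Since item~3 of Lemma~\ref{lemma:kktconds} gives $\lambda_X^{(i)}=P_X\lambda_B^{(i)}$ and $\lambda_Y^{(i)}=P_Y\lambda_B^{(i)}$, the ratios satisfy $\lambda_X^{(i)}/\lambda_X=\lambda_B^{(i)}/\lambda_B=\lambda_Y^{(i)}/\lambda_Y$, so the same nonnegative constants $c_i:=\lambda_B^{(i)}/\lambda_B$ also serve for $p\le p_0$, yielding $L(p)^2=c_1L_1(p)^2+c_2L_2(p)^2$ for all $p>0$, the first claim. For the disjoint-support case, at every $p$ at most one of $g_{\psi_1}(p),g_{\psi_2}(p)$ is nonzero, so at most one of $L_1(p),L_2(p)$ is nonzero; taking square roots in that identity and using $L,L_1,L_2\ge 0$ (so the cross term vanishes pointwise) gives $L(p)=\sqrt{c_1}\,L_1(p)+\sqrt{c_2}\,L_2(p)$, a linear combination of $L_1(\cdot)$ and $L_2(\cdot)$.

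The main obstacle is the bookkeeping at degenerate points: establishing that the boxed KKT formula is valid everywhere — in particular that $L(\cdot)$ is nonzero exactly on the support of $g_\psi$ and that the $\varphi_\psi=0$ branch genuinely forces $L(p)=0$ — and checking that $p_0$ and the multiplier ratios are shared across the three problems, so that a single pair of constants $c_1,c_2$ works on both sides of $p_0$.
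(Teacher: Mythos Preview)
Your proof is correct and follows the same approach as the paper: use the KKT conditions of Lemma~\ref{lemma:kktconds} to write each $L_i(p)^2$ as $p^2 g_{\psi_i}(p)/\lambda_X^{(i)}$ (or the $\lambda_Y$ analogue), invoke linearity of $\varphi$ in $\psi$, and read off the coefficients. You are in fact more careful than the paper on two points it leaves implicit: the bookkeeping where $g_\psi(p)=0$, and the verification via $\lambda_X^{(i)}/\lambda_X=\lambda_B^{(i)}/\lambda_B=\lambda_Y^{(i)}/\lambda_Y$ that a single pair of constants works on both sides of $p_0$.
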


\section{Common CFMMs and Beliefs}
\label{sec:beliefs}

We turn now to the CFMMs deployed in practice.  What do the choices of trading functions in large CFMMs reveal about practitioners' beliefs 
about future asset prices?  In fact, the optimal beliefs for several widely-used trading functions closely match 
the widespread but informal intuition about these systems. Recall that $P_X$ and $P_Y$ are the initial reference exchange rates. Also, recall the assumption that all trades are for the worth of 1 unit of the base numeraire currency (Proposition~\ref{prop:failchance}).

\subsection{The Uniform, Independent Belief: Constant Product Market Makers}
\label{sec:uniswapv2}

\begin{proposition}
\label{prop:uniswapv2}

Let $\psi(p_X, p_Y) =1 $ on $(0, P_X] \times (0, P_Y]$ and $0$ otherwise.
%
The liquidity allocation $L(\cdot)$ that minimizes the CFMM inefficiency is the allocation implied
by the trading function $f(x,y) = xy$.

\end{proposition}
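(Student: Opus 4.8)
The plan is to prove this by verifying a guessed optimum: I would first compute the liquidity allocation $L(\cdot)$ induced by the constant-product function $f(x,y)=xy$, and then invoke Corollaries~\ref{cor:inverse_problem} and~\ref{cor:equivclass} to conclude that this $L$ is precisely the allocation the program returns for the uniform belief $\psi\equiv 1$.

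First I would identify the allocation of $f=xy$. On the level set $xy=K$ that matches the LP's budget, the reserve point at which the spot exchange rate equals $p=y/x$ --- unique by Observation~\ref{obs:y_fn_p} --- satisfies $y=px$ and $xy=K$, so $x=\sqrt{K/p}$, $y=\sqrt{Kp}$, and hence $\mathcal Y(p)=\sqrt K\,p^{1/2}$. By Lemma~\ref{lemma:y_diffable} (equivalently Observation~\ref{obs:alloc}), $L(p)=\frac{d\mathcal Y(p)}{d\ln p}=p\,\mathcal Y'(p)=\frac{\sqrt K}{2}\,p^{1/2}$, so $f=xy$ induces exactly the family $L(p)=c\sqrt p$. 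For such an $L$ we have $\frac{L(p_X/p_Y)^2}{p_X/p_Y}=\frac{c^2\,(p_X/p_Y)}{p_X/p_Y}=c^2$, a positive constant, so by Corollary~\ref{cor:inverse_problem} the belief equal to $c^2$ on $(0,P_X]\times(0,P_Y]$ and $0$ elsewhere has $L(p)=c\sqrt p$ as its optimal allocation. Since that belief is $c^2$ times the uniform belief on the same square, the two have proportional $\varphi_\psi$ functions, so by Corollary~\ref{cor:equivclass} they share the same optimal allocation; hence $\psi\equiv 1$ is optimized by a constant-product CFMM. The constant $c$ (equivalently the level set $K=4c^2$) is then pinned down by the tight budget constraint of Lemma~\ref{lemma:basicopt}: with $p_0=P_X/P_Y$ one computes $Y_0=\int_0^{p_0}\frac{L(p)}{p}\,dp=2c\sqrt{p_0}$ and $X_0=\int_{p_0}^{\infty}\frac{L(p)}{p^2}\,dp=\frac{2c}{\sqrt{p_0}}$, so that $X_0P_X+Y_0P_Y=4c\sqrt{P_XP_Y}=B$ determines $c>0$ uniquely. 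Uniqueness of the minimizer, which makes ``the'' allocation well defined, holds because $1/L$ is strictly convex and, for $\psi\equiv 1$, $\varphi_\psi(\cot^{-1}(p))>0$ for every $p>0$ (every ray into the positive quadrant meets the square), so Corollary~\ref{corolary:phi_and_L} forces $L>0$ everywhere.

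I expect the work to be bookkeeping rather than anything conceptually hard; the one point deserving care is the use of the rational initialization $p_0=P_X/P_Y$, since only with this choice do the two regimes $p\le p_0$ and $p\ge p_0$ of the KKT conditions (Lemma~\ref{lemma:kktconds}) glue into a single power law. A fully self-contained alternative to the inversion step is to compute directly, for $\psi\equiv 1$ on the square, that $\varphi_\psi(\cot^{-1}(p))\sin(\cot^{-1}(p))=\min(P_X/p,\,P_Y)$, substitute this into the KKT conditions to obtain $L(p)=\sqrt{P_Y/\lambda_Y}\,\sqrt p$ for $p\le p_0$ and $L(p)=\sqrt{P_X/\lambda_X}\,\sqrt p$ for $p\ge p_0$, and observe that $\lambda_X/P_X=\lambda_Y/P_Y$ makes these pieces agree, again recovering $L(p)\propto\sqrt p$.
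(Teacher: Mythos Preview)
Your proposal is correct and follows essentially the same approach as the paper: compute $L(p)=\tfrac{1}{2}\sqrt{K}\,\sqrt{p}$ from the constant-product level set, apply Corollary~\ref{cor:inverse_problem} to see that $L(p)^2/p$ is constant on the square, and then invoke Corollary~\ref{cor:equivclass} to rescale to $\psi\equiv 1$. Your additional remarks on pinning down $c$ via the budget constraint and the alternative direct KKT computation are fine but go beyond what the paper's proof includes.
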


Of course, by Corollary \ref{cor:equivclass}, the belief that gives the constant product market maker is not unique.  Importantly, 
rescaling the belief to one defined analogously on the rectangle $(0, \alpha P_X]\times (0, \alpha P_Y]$ 
for any constant $\alpha >0$ does not change the optimal liquidity allocation. This invariance of the optimal liquidity allocation to such transformations of the belief applies to all results in this section.

\begin{proof}

For the CFMM $f(x,y) = xy,$ we have $\mathcal X(p) \mathcal Y(p)=X_0Y_0$ and $p= \mathcal Y(p)/ \mathcal X(p)$. This implies $ \mathcal Y(p)^2 = pX_0Y_0$ and $\mathcal Y(p)=\sqrt{pX_0Y_0}$.  
Recall that $L(p) = \frac{d \mathcal Y(p)}{d \ln(p)}.$
This gives $L(p)=\frac{\sqrt{pX_0Y_0}}{2}$.

Corollary \ref{cor:inverse_problem} shows that a belief that leads to this liquidity allocation is
$\frac{X_0Y_0}{4}$ on the rectangle $(0, P_X] \times (0, P_Y]$ and $0$ elsewhere.  The result follows by rescaling
the belief (Corollary \ref{cor:equivclass}).
\extraqed{}
\end{proof}

Proposition \ref{prop:uniswapv2} captures the folklore intuition within Decentralized Finance regarding the circumstances
in which constant product market makers are optimal.
If an LP has no information regarding correlations in the reference 
valuations of assets (in terms of the numeraire),
 then the LP 
should choose one of these CFMMs
because it
allocates liquidity evenly across the entire range of exchange rates.
To be specific, the liquidity available at a given exchange rate for purchasing $Y$ from the CFMM is always
 proportional to the amount of $Y$ in the reserves at that exchange rate.

\subsection{Uniform Beliefs on Exchange Rate Ranges: Concentrated Liquidity Positions}
\label{sec:uniswapv3}

Some CFMMs (e.g. \cite{uniswapv3}) allow LPs to create piecewise-defined trading strategies,
often called ``concentrated liquidity CFMMs.''  

\begin{definition}[Concentrated Liquidity CFMM]
A trading function $f'(x,y)= f(x+\hat{x},y+\hat{y})$
for some constants $\hat{x} > 0, \hat{y} > 0$, has nonzero liquidity on a smaller range of exchange rates than $f$.
\end{definition}

While a ``concentrated liquidity CFMM'' $f'$ can be designed with any $f,$ we focus on those which use $f(x,y) = xy$ since these have been 
widely adopted in practice after being introduced by \cite{uniswapv3}. 
 
Observe that this trading function differs from the constant product trading rule when $x$ or $y$ reaches $0$.
There exists a range of exchange rates $(p_{\min}, p_{\max})$ on which this CFMM makes the same trades as one based on the constant product rule.
Outside of this range, the CFMM makes no trades.  
There is a direct mapping from $(\hat{x}, \hat{y})$ to $(p_{\min}, p_{\max}), $ we omit it here for clarity of exposition.


This trading function corresponds to a belief pattern that is restricted in a similar way;
on the specified range of exchange rates, the belief is the same as that of Proposition \ref{prop:uniswapv2},
and $0$ otherwise.

\newcommand{\univthreethm}
{
Let $p_{min} < p_{max}$ be two arbitrary exchange rates,
and let $\psi(p_X, p_Y) = 1$ if and only if 
$0\leq p_X \leq P_X$, $0 \leq p_Y \leq P_Y$, and $p_{min} \leq p_X / p_Y \leq p_{max}$, and $0$ otherwise.
%
%
The allocation $L(\cdot)$ that maximizes the fraction of successful trades is the allocation implied
by a concentrated liquidity position with price range defined by $p_{min}$ and $p_{max}$.
}
\begin{proposition}
\label{prop:uniswapv3}

\univthreethm{}

\end{proposition}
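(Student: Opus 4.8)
The plan is to mirror the proof of Proposition~\ref{prop:uniswapv2}, using the inverse construction of Corollary~\ref{cor:inverse_problem}, but now carrying along the truncation to the price band $[p_{min},p_{max}]$. First I would identify the liquidity allocation attached to a concentrated-liquidity position. Writing the trading function as $f'(x,y)=(x+\hat x)(y+\hat y)$ with level value $K=(X_0+\hat x)(Y_0+\hat y)$, the relations $(\mathcal X(p)+\hat x)(\mathcal Y(p)+\hat y)=K$ and $p=(\mathcal Y(p)+\hat y)/(\mathcal X(p)+\hat x)$ give $\mathcal Y(p)=\sqrt{pK}-\hat y$, which keeps both reserves nonnegative exactly for $p\in[\hat y^2/K,\,K/\hat x^2]$; this is the mapping $(\hat x,\hat y)\mapsto(p_{min},p_{max})$ mentioned after the definition. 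By Lemma~\ref{lemma:y_diffable}, $L(p)=\frac{d\mathcal Y(p)}{d\ln p}=\tfrac12\sqrt{pK}$ on $[p_{min},p_{max}]$ and $L(p)=0$ outside, since the CFMM can make no further trades once a reserve hits $0$.

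Next I would apply Corollary~\ref{cor:inverse_problem} to this $L(\cdot)$: the belief $\psi'(p_X,p_Y)=(L(p_X/p_Y))^2/(p_X/p_Y)$ on $(0,P_X]\times(0,P_Y]$ (and $0$ elsewhere) has $L(\cdot)$ as its optimal allocation. The key computation is that on the portion of the rectangle with $p_X/p_Y\in[p_{min},p_{max}]$ one gets $(L(p_X/p_Y))^2/(p_X/p_Y)=\tfrac{(p_X/p_Y)K/4}{p_X/p_Y}=K/4$, a constant, while $L\equiv 0$ (hence $\psi'\equiv 0$) on the rest of the rectangle. So $\psi'$ is precisely $K/4$ times the indicator belief in the statement, and by Corollary~\ref{cor:equivclass} scaling a belief by a positive constant leaves the optimal allocation unchanged (the budget constraints are untouched and the objective is merely rescaled). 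Thus $L(\cdot)$ is optimal for $\psi$ as well. It remains to choose $K$ so that the induced budget $P_X\int_{p_0}^{\infty}L(p)/p^2\,dp+P_Y\int_0^{p_0}L(p)/p\,dp$ equals the prescribed $B$; this quantity is continuous and strictly increasing in $K$ with range $(0,\infty)$, so a unique such $K$ exists, and $\hat x=\sqrt{K/p_{max}}$, $\hat y=\sqrt{Kp_{min}}$ recover the desired range.

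A self-contained KKT verification is also available and I would include it if preferred. Since $\varphi_\psi(\cot^{-1}(p))=0$ for $p\notin[p_{min},p_{max}]$ and $\lambda_X,\lambda_Y>0$ (Lemma~\ref{lemma:kktobs}(2), using $X_0,Y_0>0$), the second group of conditions in Lemma~\ref{lemma:kktconds} forces $\lambda_{L(p)}>0$, hence $L(p)=0$ off the band by complementary slackness (Lemma~\ref{lemma:kktobs}(3)). On $[p_0,p_{max}]$ and $[p_{min},p_0]$ the first group of conditions, with $\varphi_\psi(\cot^{-1}(p))\sin(\cot^{-1}(p))=P_X/p$ (resp.\ $P_Y$), computed exactly as for Proposition~\ref{prop:uniswapv2}, and the identity $\lambda_X/P_X=\lambda_Y/P_Y=\lambda_B$, yield $L(p)=\sqrt{p/\lambda_B}$ throughout $[p_{min},p_{max}]$, i.e.\ the concentrated-liquidity form with $K=4/\lambda_B$.

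The main obstacle is not any hard estimate but the bookkeeping around where $p_0=P_X/P_Y$ lies relative to $[p_{min},p_{max}]$: both the clean statement and the reading of the optimum as ``a concentrated-liquidity CFMM initialized at the reference price'' are cleanest when $p_0\in(p_{min},p_{max})$, and I would either restrict to that case or note that when $p_0$ falls outside the band one of $X_0,Y_0$ is zero and Corollary~\ref{cor:inverse_problem} still delivers the same conclusion (zero liquidity except for the $\tfrac12\sqrt{pK}$ band), at the cost of a slightly degenerate interpretation. A minor secondary point is that $\mathcal Y(\cdot)$ has kinks at $p_{min},p_{max}$ where $L$ jumps, so Lemma~\ref{lemma:y_diffable} is applied only on the open interval; this is handled exactly as the analogous differentiability technicality in the proof of Lemma~\ref{lemma:kktconds}.
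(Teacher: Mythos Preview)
Your proof is correct. Your primary route---computing $L(p)=\tfrac12\sqrt{pK}$ on $[p_{min},p_{max}]$ and $0$ elsewhere, then invoking Corollary~\ref{cor:inverse_problem} to recover the truncated-constant belief up to scaling---mirrors the paper's proofs of Propositions~\ref{prop:uniswapv2},~\ref{prop:balancer}, and~\ref{prop:lmsr}. The paper's own proof of this proposition instead goes the forward direction: it observes that $\varphi_\psi$ vanishes outside $[p_{min},p_{max}]$ so the optimal $L(p)=0$ there, and then appeals to Proposition~\ref{prop:lp_dependencies} to conclude that on the band the optimal $L(p)$ agrees with the constant-product allocation up to a single multiplicative constant. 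Your secondary KKT verification is essentially this same argument spelled out in more detail. Both routes are short; the paper's version is terser because Proposition~\ref{prop:lp_dependencies} packages the ``same $\varphi_\psi$ up to a constant $\Rightarrow$ same $L$ up to a constant'' step, while your inverse-construction approach has the virtue of being uniform with the other examples in \S\ref{sec:beliefs} and making the budget-matching step (choice of $K$) explicit.
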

A proof is given in Appendix~\S\ref{proof_v3}.
LPs who make markets using concentrated liquidity CFMMs implicitly expect that while the valuations of the two assets
may move up and down, their movements are correlated; that is, the exchange rate always stays within 
some range.  This belief exactly matches that intuition.  
\begin{corollary} \label{corr:comb_liq}
The belief corresponding to multiple concentrated liquidity positions, which are defined on disjoint ranges of exchange rates,
is a linear combination of the beliefs that correspond to the individual concentrated 
liquidity positions (as specified in Proposition \ref{prop:uniswapv3}).
%
%
%
\end{corollary}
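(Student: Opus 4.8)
The plan is to invert the multi-position CFMM using the explicit construction of Corollary~\ref{cor:inverse_problem}. Let the CFMM hold concentrated liquidity positions on pairwise disjoint exchange-rate ranges $R_1,\dots,R_n$ with $R_i=(p_{\min}^i,p_{\max}^i)$, let $L_i(\cdot)$ be the liquidity allocation of position $i$ (supported on $R_i$), and write $\psi_i$ for the belief of Proposition~\ref{prop:uniswapv3} associated with the range $R_i$. The combined allocation is $L=\sum_i L_i$, and since the $R_i$ are disjoint, at every exchange rate $p$ at most one summand is nonzero, so $L(p)^2=\sum_i L_i(p)^2$ pointwise.

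First I would apply Corollary~\ref{cor:inverse_problem} to $L$: the belief $\psi(p_X,p_Y)=L(p_X/p_Y)^2/(p_X/p_Y)$ on $(0,P_X]\times(0,P_Y]$, and $0$ elsewhere, has $L$ as its optimal allocation. By the pointwise square identity, $\psi=\sum_i\psi^{(i)}$ with $\psi^{(i)}(p_X,p_Y)=L_i(p_X/p_Y)^2/(p_X/p_Y)$ on the rectangle and $0$ otherwise. Next I would identify each $\psi^{(i)}$ with $c_i\psi_i$ for a positive constant $c_i$. On its active range a concentrated liquidity position executes exactly the trades of a constant product CFMM, so the computation in the proof of Proposition~\ref{prop:uniswapv2} gives $L_i(p)=\frac{1}{2}\sqrt{p\,K_i}$ for $p\in R_i$, where $K_i>0$ is the value of the (virtual) constant-product invariant of position $i$; hence $L_i(p)^2/p=K_i/4$ is constant on $R_i$. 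Therefore $\psi^{(i)}$ equals $K_i/4$ times the indicator of $\{(p_X,p_Y)\in(0,P_X]\times(0,P_Y]:p_X/p_Y\in R_i\}$, which up to a set of measure zero is precisely $\psi_i$. Consequently $\psi=\sum_i(K_i/4)\psi_i$, so the belief for which the multi-position CFMM is optimal is a nonnegative linear combination of the beliefs of the individual positions. The reverse reading — that starting from the beliefs $\psi_i$ and forming their sum (or any positive combination) yields a CFMM built from the corresponding positions — follows from the disjoint-support case of Corollary~\ref{cor:addbeliefs}, because disjointness of the $R_i$ makes the supports of the $\varphi_{\psi_i}$ pairwise disjoint.

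The square identity and the constant-product profile on each active range are routine. The step that needs care is verifying that the support of $\psi^{(i)}$ really does coincide with the defining region of $\psi_i$ in every configuration: in particular when the starting rate $p_0=P_X/P_Y$ lies outside some $R_i$ (so position $i$ is single-sided, and the KKT conditions of Lemma~\ref{lemma:kktconds} force $L_i$ to vanish on the ``gap'' between $p_0$ and $R_i$, consistent with $\psi_i=0$ there), and at the zero-measure boundary of the rectangle, where the openness requirement in Definition~\ref{defn:belief} is treated exactly as in Propositions~\ref{prop:uniswapv2} and~\ref{prop:uniswapv3}.
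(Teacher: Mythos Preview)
Your argument is correct, but it takes a different route from the paper. The paper's proof is a one-line citation of Corollary~\ref{cor:addbeliefs}: since the beliefs $\psi_i$ from Proposition~\ref{prop:uniswapv3} have pairwise disjoint $\varphi$-support, the disjoint-support clause of that corollary says the optimal allocation for any positive combination $\sum_i c_i\psi_i$ is a linear combination of the individual concentrated positions, and (together with the uniqueness in Corollary~\ref{cor:lp_to_equiv}) this identifies the belief class attached to the combined CFMM.

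Your approach instead inverts the combined allocation $L$ directly via Corollary~\ref{cor:inverse_problem}, uses disjointness of the $R_i$ to split $L^2=\sum_i L_i^2$, and then recognises each $L_i(p)^2/p$ as a constant on $R_i$ from the constant-product computation of Proposition~\ref{prop:uniswapv2}. This is more constructive: it hands you the actual coefficients $K_i/4$ in the linear combination rather than merely asserting their existence, at the price of redoing the constant-product liquidity calculation. The paper's proof is shorter because Corollary~\ref{cor:addbeliefs} has already absorbed exactly this structure. Your final caveat about the case $p_0\notin R_i$ is unnecessary: Corollary~\ref{cor:inverse_problem} applies to the combined $L$ as given, and the support of each $\psi^{(i)}$ coincides with the defining region of $\psi_i$ regardless of where $p_0$ lies.
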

\begin{proof}
Application of Corollary \ref{cor:addbeliefs}. \extraqed{}
\end{proof}

\subsection{Skew in Belief Function: Weighted Product Market Makers}
\label{sec:balancer}

Weighted Constant Product Market Makers add weights to a constant product curve to get
trading functions of the form $f(x,y)=x^\alpha y$, for some constant $\alpha>0$.

\newcommand{\balancerthm}
{
%
The belief function $\psi(p_X, p_Y)=\left(\frac{p_X}{p_Y}\right)^{\frac{\alpha - 1}{\alpha + 1}}$ when 
$(p_X, p_Y) \in (0, P_X] \times (0, P_Y]$ and $0$ otherwise corresponds to the weighted product market
maker $f(x,y)=x^\alpha y$.
}

\begin{proposition}
\label{prop:balancer}

\balancerthm{}

\end{proposition}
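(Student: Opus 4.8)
The plan is to mirror the proof of Proposition~\ref{prop:uniswapv2}: first compute the liquidity allocation $L(\cdot)$ implied by the trading function $f(x,y)=x^\alpha y$, then apply Corollary~\ref{cor:inverse_problem} to read off a belief for which that allocation is optimal, and finally use Corollary~\ref{cor:equivclass} to rescale that belief to the clean form stated in the proposition.

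First I would compute the spot exchange rate for $f(x,y)=x^\alpha y$. Differentiating, the spot rate is $p = \alpha x^{\alpha-1}y / x^\alpha = \alpha y / x$, so $x = \alpha y/p$. Substituting into the level set $x^\alpha y = K$ gives $\alpha^\alpha y^{\alpha+1}/p^\alpha = K$, hence $\mathcal Y(p) = (K/\alpha^\alpha)^{1/(\alpha+1)}\, p^{\alpha/(\alpha+1)}$. By Lemma~\ref{lemma:y_diffable}, $L(p) = \frac{d\mathcal Y(p)}{d\ln p} = p\,\mathcal Y'(p) = \frac{\alpha}{\alpha+1}\,(K/\alpha^\alpha)^{1/(\alpha+1)}\, p^{\alpha/(\alpha+1)}$; that is, $L(p) = C\, p^{\alpha/(\alpha+1)}$ for a positive constant $C$ fixed by $K$ (equivalently, by the budget $B$ and the initial reserves, exactly as in Proposition~\ref{prop:uniswapv2}).

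Next I would invoke Corollary~\ref{cor:inverse_problem}: for this $L(\cdot)$, the belief $\psi(p_X,p_Y) = (L(p_X/p_Y))^2/(p_X/p_Y)$ on $(0,P_X]\times(0,P_Y]$ and $0$ elsewhere has $L(\cdot)$ as its optimal allocation. Plugging in and simplifying the exponent, $\psi(p_X,p_Y) = C^2 (p_X/p_Y)^{2\alpha/(\alpha+1)-1} = C^2 (p_X/p_Y)^{(\alpha-1)/(\alpha+1)}$. Then Corollary~\ref{cor:equivclass} tells us that scaling a belief by the positive constant $C^2$ does not change the optimal allocation, so $\psi(p_X,p_Y) = (p_X/p_Y)^{(\alpha-1)/(\alpha+1)}$ on $(0,P_X]\times(0,P_Y]$ (and $0$ otherwise) also has $L(\cdot)$ as its optimum, and $L(\cdot)$ is precisely the allocation implied by $f(x,y)=x^\alpha y$. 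Before the rescaling step one should check this $\psi$ satisfies Definition~\ref{defn:belief}: since it depends only on the ratio $p_X/p_Y$ and the exponent $(\alpha-1)/(\alpha+1)$ lies strictly in $(-1,1)$, the ray integrals $\varphi_\psi$ of Lemma~\ref{lemma:polar} are finite and positive on the relevant range, which is routine.

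There is no genuinely hard step here: the argument is a short level-set computation followed by two applications of already-established corollaries. The only points requiring care are the exponent bookkeeping (so that $2\alpha/(\alpha+1)-1 = (\alpha-1)/(\alpha+1)$) and tracking the constant $C$ faithfully so that the invocation of Corollary~\ref{cor:equivclass} is legitimate; as with Proposition~\ref{prop:uniswapv2}, ``corresponds to'' is understood up to the belief equivalence class and up to the choice of initial reserves/budget. Note also the sanity check that setting $\alpha = 1$ recovers $\psi \equiv 1$ and the constant product market maker of Proposition~\ref{prop:uniswapv2}.
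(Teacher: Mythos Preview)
Your proposal is correct and follows essentially the same approach as the paper's proof: compute $\mathcal Y(p)$ from the level set of $f(x,y)=x^\alpha y$, differentiate to obtain $L(p)=C\,p^{\alpha/(\alpha+1)}$, apply Corollary~\ref{cor:inverse_problem} to get $\psi = L(p)^2/p = C^2 p^{(\alpha-1)/(\alpha+1)}$, and rescale via Corollary~\ref{cor:equivclass}. Your write-up is in fact slightly more careful than the paper's in tracking the constant and noting the $\alpha=1$ sanity check.
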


A proof is given in Appendix~\ref{sec:expected_future_value}. This proposition shows that LPs who use weighted product market makers
 expect that the value of one asset will typically be much higher than that of the other.  
Informally, 
more liquidity is allocated towards the higher ranges of exchange rates than the lower ranges when $\alpha> 1$
(and vice versa for $\alpha < 1$).  This CFMM, therefore, can satisfy a higher fraction of trades when the exchange rate is high 
(resp. low).
On the other hand, a skewed allocation means that some price ranges suffer from much higher slippage for a fixed-size trade.
This intuition mirrors the description of how LPs
should choose the weight $\alpha$ \cite{balancerintuition}
in a public deployment \cite{balancer} of weighted product market makers.
This theorem reduces to Proposition \ref{prop:uniswapv2} when $\alpha=1$.

\subsection{Logarithmic Market Scoring Rule}
\label{sec:lmsr}

The logarithmic market scoring rule (LMSR) \cite{hanson2007logarithmic}, which has been 
used extensively in the context of prediction markets,
 corresponds to a CFMM
with trading function $f(x,y)=2 -e^{-x} - e^{-y}$ \cite{univ3paradigm}.

\newcommand{\lmsrthm}
{

The optimal trading function to minimize the expected CFMM inefficiency for the belief $\psi(p_X, p_Y)=\frac{p_Xp_Y}{(p_X+p_Y)^2}$
 when $(p_X, p_Y)\in (0, P_X]\times (0, P_Y]$ and $0$ otherwise,
 is $f(x,y)=2 -e^{-x} - e^{-y}$.

}

\begin{proposition}
\label{prop:lmsr}
\lmsrthm{}

\end{proposition}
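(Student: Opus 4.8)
The plan is to follow the template of Proposition \ref{prop:uniswapv2}: compute the liquidity allocation $L(\cdot)$ that the trading function $f(x,y)=2-e^{-x}-e^{-y}$ induces, then apply the inverse construction of Corollary \ref{cor:inverse_problem} to produce a belief for which that allocation is optimal, and finally check that this belief is exactly $\psi$. Note first that $f$ satisfies Assumption \ref{ass:fn_form}: it is smooth, increasing in both coordinates, and strictly concave (its Hessian is diagonal with entries $-e^{-x},-e^{-y}$), hence strictly quasi-concave.

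For the liquidity computation, fix the level set $f(x,y)=K$ determined by the LP's initial reserves, with $K\in[1,2)$ so that the CFMM spans all exchange rates (this is forced by Corollary \ref{corolary:phi_and_L}, since the belief in question has $\varphi_\psi>0$ everywhere). The spot exchange rate at $(x,y)$ is $\partial_x f/\partial_y f=e^{-x}/e^{-y}=e^{y-x}$, so along the level set $x=y-\ln p$. Substituting into $e^{-x}+e^{-y}=2-K$ yields $e^{-y}(1+p)=2-K$, hence $\mathcal Y(p)=\ln(1+p)-\ln(2-K)$. Since $\mathcal Y$ is differentiable, Lemma \ref{lemma:y_diffable} gives $L(p)=\frac{d\mathcal Y(p)}{d\ln p}=\frac{p}{1+p}$, independent of the level-set constant $K$.

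For the inversion, Corollary \ref{cor:inverse_problem} states that the belief $\psi'(p_X,p_Y)=\frac{(L(p_X/p_Y))^2}{p_X/p_Y}$ on $(0,P_X]\times(0,P_Y]$ (and $0$ elsewhere) has $L(\cdot)$ as its optimal allocation. Writing $p=p_X/p_Y$ and using $L(p)=p/(1+p)$, one gets $\frac{L(p)^2}{p}=\frac{p}{(1+p)^2}=\frac{p_X p_Y}{(p_X+p_Y)^2}$, so $\psi'$ coincides with $\psi$. Hence $L(\cdot)$ — equivalently, the trading function $f(x,y)=2-e^{-x}-e^{-y}$ — is the inefficiency-minimizing allocation for $\psi$; by Corollary \ref{cor:equivclass} the same conclusion holds for the whole equivalence class of $\psi$ (in particular under rescaling of $P_X,P_Y$), with the budget merely selecting which level set (equivalently, which member of the LMSR scale family) is used.

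All the steps are routine calculation; the only point requiring any care is the inversion producing $\mathcal Y(p)=\ln(1+p)-\ln(2-K)$, together with the observation that the $K$-dependent additive constant disappears under differentiation, so that every admissible level set of the LMSR curve carries the same liquidity profile $p/(1+p)$. Once that is in hand, the statement is an immediate consequence of the inverse-problem machinery of \S\ref{sec:optimization}.
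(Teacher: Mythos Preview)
Your proof is correct and follows essentially the same route as the paper: compute the spot price $p=e^{y-x}$, solve the level-set equation for $\mathcal Y(p)$, differentiate to get $L(p)=p/(1+p)$, and then invoke Corollary~\ref{cor:inverse_problem} to recover $\psi(p_X,p_Y)=\frac{p_Xp_Y}{(p_X+p_Y)^2}$. The only cosmetic difference is that the paper uses $K$ for the constant value of $e^{-x}+e^{-y}$ rather than of $f$ itself, and you add the (helpful but inessential) remarks verifying Assumption~\ref{ass:fn_form} and constraining the admissible level sets via Corollary~\ref{corolary:phi_and_L}.
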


A proof is given in Appendix~\ref{proof_lmsr}. Observe that $\psi$ is symmetric about the line $p_Y=p_X$ --- that is to say, $\psi(p_X, p_Y)=\psi(p_Y, p_X)$.
 We analyse this belief function in
polar coordinates to get a better intuition. Note
the term $\frac{r\cos(\theta)r\sin(\theta)}{(r\cos(\theta) + r\sin(\theta))^2}= \frac{\sin(2\theta)}{2(1+\sin(2\theta))}$, which is maximized at $\theta=\pi/4$.
For initial exchange rates with $P_X=P_Y$, 
the LMSR expects the relative exchange rate to concentrate about $\frac{p_X}{p_Y}=1$.  
At extreme exchange rates ($\frac{p_X}{p_Y}\rightarrow 0$ or
$\frac{p_X}{p_Y}\rightarrow\infty$), the belief goes to $0$.  
The LMSR-based CFMM correspondingly allocates very little liquidity at extreme exchange rates.

\section{Market-Maker Profit and Loss}
\label{sec:net_profit}

Deploying assets within a CFMM has a cost,
and LPs naturally want to understand the financial tradeoffs involved (instead of just minimising CFMM inefficiency).
Specifically, LPs in CFMMs trade off revenue from transaction fees against losses due to adverse selection.

\subsection{Fee Revenue}
\label{sec:fees}

Many CFMM deployments charge a fixed-rate fee on every transaction.  
While some early instantiations automatically reinvested fee revenue within the CFMM reserves \cite{uniswapv2}, 
more recent deployments choose not to \cite{uniswapv3}.  We consider the case where fees are not automatically reinvested and,
for simplicity, the case where fees are immediately converted into the numeraire currency.


Observe that a percentage-based fee on a trade is, from the trader's point of view, equivalent to a
 multiplicative factor in the exchange rate.
Slippage also measures the deviation of a trader's received exchange rate from the reference exchange rate.
A transaction fee, therefore, has the same effect in our model on a CFMM's ability to settle trades as a
reduction in a user's tolerated slippage.


The fee revenue depends on the rate of trade requests and the fraction of trades the CFMM can settle. Recall the probability $q$ of a trade request arriving at a time step from the system model in Definition~\ref{defn:system_model} and the trade size distribution $size(\cdot)$
 from the trade model in Definition~\ref{defn:trademodel}. These can be compiled into a belief on the ``rate'' of trade requests that the LP expects the CFMM will see.

\begin{definition}[Transaction Rate Model]
\label{defn:rate}
Denote as $rate_{\delta}(p_X, p_Y)$ denote an LP's prediction on the expected volume of trades 
(in terms of the numeraire) attempted on the CFMM when the reference prices are $p_X$ and $p_Y$ and the trading fee is set to $\delta$.
\end{definition}

\newcommand{\proprevenue}
{
The expected revenue of a CFMM in one unit of time, using transaction fee $\delta$, when the mean trade-size is worth $s$ units of the numeraire, is
\begin{equation*}
\mu(rate,\psi,\delta,s) = \frac{\delta}{N_\psi} \iint_{p_X, p_Y} rate_{\delta}(p_X, p_Y) \psi(p_X, p_Y)\left(1-\frac{s}{p_YL(p_X/p_Y)}\right) dp_X~dp_Y.
\end{equation*}
}
\begin{proposition}
\label{prop:exp_revenue}

\proprevenue{}
\end{proposition}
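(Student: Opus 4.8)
The plan is to compute the expected fee revenue by conditioning on the reference price pair $(p_X,p_Y)$, decomposing the per-price revenue as $(\text{attempted trade volume}) \times (\text{fee rate}) \times (\text{fraction of that volume the CFMM actually settles})$, and then averaging this quantity against the LP's (normalized) belief density.

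First I would fix reference prices $p_X,p_Y$. By Definition~\ref{defn:rate}, the expected volume of attempted trades per unit time, measured in the numeraire, is $rate_\delta(p_X,p_Y)$; since the fee $\delta$ is collected (in the numeraire) as a fixed fraction of the value of each \emph{successful} trade, the expected fee revenue accrued per unit time at these prices is $\delta\cdot rate_\delta(p_X,p_Y)\cdot\bigl(1-I(p_X,p_Y)\bigr)$, where $I(p_X,p_Y)$ is the expected fraction of the attempted trade requests (equivalently, under the small-trade regime, of attempted volume) that fail. To evaluate $I(p_X,p_Y)$ I would use the CFMM inefficiency machinery: a trade worth one unit of the numeraire, when $Y$ is priced at $p_Y$, is a trade of $1/p_Y$ units of $Y$, so by Definition~\ref{defn:yfail} its failure chance is $\tfrac{1}{p_Y L(p_X/p_Y)}$ --- exactly the pointwise integrand of Proposition~\ref{prop:failchance}, which also tells us this value is the same whether the trade is denominated in $X$ or in $Y$. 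Because CFMM inefficiency is linear in trade size, averaging over a size distribution with mean $s$ gives $I(p_X,p_Y)=\tfrac{s}{p_Y L(p_X/p_Y)}$. I would also remark that although a percentage fee acts, from a trader's perspective, like a reduction in the tolerated slippage $\varepsilon$ (as noted in \S\ref{sec:fees}), the normalized inefficiency $k/L(\hat p)$ of Definition~\ref{defn:yfail} carries no dependence on $\varepsilon$, so this reinterpretation leaves $I(p_X,p_Y)$ unchanged.

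Finally I would integrate over the belief: the LP views $(p_X,p_Y)$ as occurring with density $\psi(p_X,p_Y)/N_\psi$ (Definition~\ref{defn:belief}), so linearity of expectation yields $\mu(rate,\psi,\delta,s)=\frac{\delta}{N_\psi}\iint_{p_X,p_Y} rate_\delta(p_X,p_Y)\,\psi(p_X,p_Y)\bigl(1-\tfrac{s}{p_Y L(p_X/p_Y)}\bigr)\,dp_X\,dp_Y$, as claimed; note $L(p_X/p_Y)>0$ wherever $\psi>0$ by Corollary~\ref{corolary:phi_and_L}, so the integrand is well defined.

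The step I expect to be the main obstacle is justifying the multiplicative split ``volume $\times$ fraction settled'', since it implicitly treats a trade's probability of success as decoupled from its size, whereas larger trades are genuinely more likely to fail. This decoupling is exact only in the limit of vanishingly small trades; under Assumption~\ref{assumption:small_trade_size} (trade sizes small relative to $|L_\varepsilon(\hat p)|$) the correlation between trade size and failure contributes only a higher-order correction --- of order $O\!\bigl(s^2/(p_YL(p_X/p_Y))^2\bigr)$, by the error estimate following Lemma~\ref{lemma:approxfail} --- which I would absorb into the approximation already inherent in the very definition of CFMM inefficiency, so that the formula holds to the same order of accuracy as Proposition~\ref{prop:failchance} itself.
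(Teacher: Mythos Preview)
Your proposal is correct and follows essentially the same approach as the paper: the paper's own proof is a single sentence stating that revenue equals the fee times the settled volume, i.e.\ attempted volume times $(1-\text{failure fraction})$, and you have carefully filled in exactly these details (including the derivation of the $s/(p_YL(p_X/p_Y))$ term from Definition~\ref{defn:yfail} and the averaging over $\psi/N_\psi$). Your additional remarks on the fee-as-slippage reinterpretation and on the small-trade approximation are more scrupulous than the paper itself, but do not diverge from its argument.
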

\begin{proof}
The expected revenue of the CFMM is exactly the transaction fee times the volume of trading that goes through the CFMM, which
is equal to the predicted input trade volume, less the number of trades that the CFMM cannot settle. \extraqed{}
\end{proof}

In any real-world setting, the fee $\delta$ influences the predilection of traders to use the CFMM.
However, the transaction fee is a predetermined constant in many of the most widely used CFMM deployments.  Uniswap V2, for example, sets
a fixed $0.3\%$ fee \cite{uniswapv2}, and Uniswap V3 lets LPs choose between three choices of fee rates \cite{uniswapv3}.
In this model, LPs would consider each fee rate they are allowed to set (or perform a grid search over many fee schedules),
and then predict (through some external knowledge, outside the scope of this work) the transaction rate at that fee schedule.  

\begin{corollary}
\label{cor:fee_maximize}

The allocation $L(\cdot)$ that maximizes fee revenue is same as the $L(\cdot)$ that minimizes
\begin{equation*}
\iint_{p_X, p_Y} \frac{rate_{\delta}(p_X, p_Y)\psi(p_X, p_Y)}{p_YL(p_X/p_Y)} dP_X~dP_Y
\end{equation*}
which is in turn equal to the liquidity allocation $L(\cdot)$ which is optimal for an LP with belief
$rate_{\delta}(p_X, p_Y) * \psi(p_X, p_Y)$ and is concerned only with minimizing CFMM inefficiency.

\end{corollary}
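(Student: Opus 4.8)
The plan is to reduce the claim to Proposition~\ref{prop:exp_revenue} by separating out the only part of the revenue that depends on the decision variable $L(\cdot)$. Write the expected revenue of Proposition~\ref{prop:exp_revenue} as
\[
\mu(rate,\psi,\delta,s) = C_0 \;-\; \frac{\delta s}{N_\psi}\iint_{p_X,p_Y}\frac{rate_{\delta}(p_X,p_Y)\,\psi(p_X,p_Y)}{p_Y L(p_X/p_Y)}\,dp_X\,dp_Y,
\]
where $C_0 := \frac{\delta}{N_\psi}\iint_{p_X,p_Y} rate_{\delta}(p_X,p_Y)\,\psi(p_X,p_Y)\,dp_X\,dp_Y$ does not involve $L(\cdot)$ and is a finite constant, and $\delta,s,N_\psi>0$. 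The LP optimizes $\mu$ over the feasible set cut out by the budget and nonnegativity constraints \eqref{eq:cop1}--\eqref{eq:cop4}, which involve neither $\psi$ nor $rate_{\delta}$. Hence maximizing $\mu$ over that set is equivalent to minimizing $\iint_{p_X,p_Y}\frac{rate_{\delta}(p_X,p_Y)\,\psi(p_X,p_Y)}{p_Y L(p_X/p_Y)}\,dp_X\,dp_Y$, which is the first assertion.

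For the second assertion, I would observe that this last integral is exactly the objective of \ref{eq:cop} (equivalently, the expected CFMM inefficiency of Proposition~\ref{prop:failchance}) with the belief $\psi$ replaced by the pointwise product $\tilde\psi(p_X,p_Y) := rate_{\delta}(p_X,p_Y)\,\psi(p_X,p_Y)$. Since the constraints \eqref{eq:cop1}--\eqref{eq:cop4} are identical for both beliefs, the two convex programs have the same feasible region and, up to the irrelevant positive multiplier $\delta s/N_\psi$ and the additive constant $C_0$, the same objective, so they have the same unique optimal allocation. In particular the KKT characterization of the preceding section transfers verbatim to $\tilde\psi$.

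The only step that needs a little care is checking that $\tilde\psi$ is an admissible belief in the sense of Definition~\ref{defn:belief}: nonnegativity is immediate since $rate_{\delta}\ge 0$; on the open set where $rate_{\delta}>0$ and $\psi>0$ the product is strictly positive and continuously differentiable, provided $rate_{\delta}$ is taken continuously differentiable there; and integrability over the wedges $\{p_1\le p_X/p_Y\le p_2\}$ together with finiteness of $N_{\tilde\psi}$ follows from the corresponding properties of $\psi$ once $rate_{\delta}$ is, say, bounded (and if $rate_{\delta}$ degenerates to the zero function the statement is vacuous). I expect this regularity bookkeeping to be essentially the only obstacle; the substantive content is just the algebraic separation of $\mu$ above combined with the fact that \ref{eq:cop} has already been solved for an arbitrary belief.
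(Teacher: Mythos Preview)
Your proposal is correct and is essentially the same argument the paper intends: the corollary is stated without proof immediately after Proposition~\ref{prop:exp_revenue}, and the paper treats it as following directly from expanding $\mu$ and recognizing the remaining $L$-dependent term as the objective of \ref{eq:cop} for the product belief. If anything, you are more careful than the paper in flagging the regularity requirements on $rate_{\delta}\cdot\psi$; the paper silently assumes these hold.
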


In simpler terms, an expected distribution on future transaction rates is equivalent, in the eyes of the optimization framework,
to a belief on future prices.  A revenue-maximizing LP can therefore use the same optimization problem
(with an adjusted input) as in Theorem \ref{thm:problem}.

Definition \ref{defn:rate} assumes traders are attempting trades for exogenous reasons. However, in the real world, trade volume might depend on how effectively a market-maker provides liquidity.
Note that when the trade input rate is independent of $p_X$ and $p_Y,$ then the objective of minimizing CFMM inefficiency produces the same
 liquidity allocation as an objective of maximizing the fee collected.


\subsection{Liquidity Provider Loss}
\label{sec:loss}

LPs may also suffer losses when asset prices change.
As discussed in Milionis et al. \cite{milionis2022automated} and Cartea~\cite{cartea2022decentralised},
these losses come from two sources: first, from the asset price movements directly (exposure to market-risk), 
and second, from shifts in the relative exchange rate between the assets.
As the relative exchange rate on the external market shifts, arbitrageurs can trade with the CFMM to realign the CFMM's spot exchange rate
with the external market's exchange rate, making a profit in the process at the CFMM's expense. Milionis et al. \cite{milionis2022automated} show that the expected lose due to arbitrage is higher for CFMMs trading more volatile assets.

The CFMM's loss can only be defined relative to a counterfactual choice; instead of engaging in market-making, an LP could deploy
their capital in some other manner.
For example, a simple counterfactual would be to hold a fixed amount of the numeraire or a fixed quantity of each asset and, at a future time, compare the value of this strategy with the value of the CFMM's asset reserves. 

\newcommand{\thmdivergence}
{
	The expected future value of the CFMM's reserves, as per belief $\psi(p_X, p_Y)$, is
\begin{align*}
&\nu(\psi) = \frac{1}{N_\psi} \iint\limits_{p_X, p_Y}\psi(p_X, p_Y)\left({p_X \mathcal X(p_X/p_Y) + p_Y \mathcal Y(p_X/p_Y)}\right)dp_X~dp_Y \\
& = \frac{1}{N_\psi} \int\limits_{0}^\infty \left( \frac{L(p)}{p^2} \iint\limits_{p_X, p_Y}p_X\psi(p_X, p_Y)\mathds{1}_{\lbrace \frac{p_X}{p_Y} \leq p\rbrace}dp_X~dp_Y
+ \frac{L(p)}{p} \iint\limits_{p_X, p_Y}p_Y\psi(p_X, p_Y)\mathds{1}_{\lbrace \frac{p_X}{p_Y} \geq p\rbrace}dp_X~dp_Y\right)dp
\end{align*}

where $\mathcal X(p)$ and $\mathcal Y(p)$ denote the amounts of $X$ and $Y$ held in the reserves at spot exchange rate $p$, and $\mathds{1}_{E}$ is the characteristic function of the event $E$. 

This expression for $\nu(\psi)$ is a linear function of each $L(p)$.
}
\begin{proposition}
\label{prop:divergence}

\thmdivergence{}

\end{proposition}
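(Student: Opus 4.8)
The plan is to derive the first displayed identity directly from the system model, obtain the second by substituting the integral representations of $\mathcal X(\cdot)$ and $\mathcal Y(\cdot)$ and exchanging the order of integration, and then read off linearity.

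First I would use point~\ref{item:arb} of the System Model (Definition~\ref{defn:system_model}): arbitrageurs keep the CFMM's spot exchange rate equal to the external reference rate, so when the reference prices are $p_X$ and $p_Y$ the CFMM's spot rate is $p_X/p_Y$, and by Observations~\ref{obs:y_fn_p} and~\ref{obs:y_fn_x} this pins the reserves to exactly $\mathcal X(p_X/p_Y)$ units of $X$ and $\mathcal Y(p_X/p_Y)$ units of $Y$. Their value in the numeraire is $p_X\mathcal X(p_X/p_Y)+p_Y\mathcal Y(p_X/p_Y)$, and averaging this against the normalized belief $\psi/N_\psi$ (with $N_\psi$ as in Definition~\ref{defn:belief}) gives the first expression for $\nu(\psi)$.

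Next I would invoke points 4 and 5 of Observation~\ref{obs:alloc}, namely $\mathcal Y(q)=\int_0^q \frac{L(p)}{p}\,dp$ and $\mathcal X(q)=\int_q^\infty \frac{L(p)}{p^2}\,dp$. Taking $q=p_X/p_Y$ and inserting indicators, $p_Y\mathcal Y(p_X/p_Y)=\int_0^\infty \frac{L(p)}{p}\,p_Y\,\mathds{1}_{\{p\le p_X/p_Y\}}\,dp$ and $p_X\mathcal X(p_X/p_Y)=\int_0^\infty \frac{L(p)}{p^2}\,p_X\,\mathds{1}_{\{p\ge p_X/p_Y\}}\,dp$. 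Substituting these into the double integral and swapping the order of integration moves the $dp$ integral outside, collecting the $L(p)$ terms, while the inner double integrals collect the $\psi$-weighted moments of $p_X$ and $p_Y$ restricted to $\{p_X/p_Y\le p\}$ and $\{p_X/p_Y\ge p\}$ respectively; the measure-zero set $\{p_X/p_Y=p\}$ is irrelevant. This is exactly the second displayed expression.

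The one step that needs care is the interchange of the order of integration, and that is the only potential obstacle. Since $\psi\ge 0$ (Definition~\ref{defn:belief}), $L(\cdot)\ge 0$ (constraint~\ref{eq:cop4}), and $p_X,p_Y,p>0$ on the domain of integration, every integrand is non-negative, so Tonelli's theorem applies and the identity holds in $[0,\infty]$ regardless of whether $\nu(\psi)$ is finite. Finally, writing $\nu(\psi)=\frac{1}{N_\psi}\int_0^\infty L(p)\left(\frac{A_\psi(p)}{p^2}+\frac{B_\psi(p)}{p}\right)dp$ with $A_\psi(p)=\iint p_X\psi(p_X,p_Y)\,\mathds{1}_{\{p_X/p_Y\le p\}}\,dp_X\,dp_Y$ and $B_\psi(p)=\iint p_Y\psi(p_X,p_Y)\,\mathds{1}_{\{p_X/p_Y\ge p\}}\,dp_X\,dp_Y$ depending only on $p$ and $\psi$, exhibits $\nu(\psi)$ as a linear functional of $L(\cdot)$; everything else is routine bookkeeping.
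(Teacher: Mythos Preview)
Your proposal is correct and follows essentially the same approach as the paper: substitute the integral representations of $\mathcal X(\cdot)$ and $\mathcal Y(\cdot)$ from Observation~\ref{obs:alloc} and swap the order of integration to group by $L(p)$. You are somewhat more careful than the paper in two respects---you explicitly derive the first line from the arbitrage assumption rather than taking it as the definition of expected reserve value, and you justify the interchange via Tonelli using non-negativity of all integrands---but the core argument is identical.
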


Proposition~\ref{prop:divergence} follows from representing $\mathcal X(p_X/p_Y)$ and $\mathcal Y(p_X/p_Y)$ in terms of integration of $L(p)$ per Observation~\ref{obs:alloc} and then changing the order of integration. A full proof is in Appendix~\ref{sec:expected_future_value}.

\subsubsection{Divergence Loss} 
\label{sec:divloss}~ 

Observe that the expected value of a ``buy-and-hold" counterfactual strategy does not depend on a chosen allocation $L(\cdot)$.
In this case, the expected loss of the market maker (relative to a counterfactual strategy with expected payoff $C$) is $C - \nu(\psi)$ and is linear in each $L(p)$.  This measurement of loss is typically called ``divergence loss''.

\subsubsection{Loss-Vs-Rebalancing}
\label{sec:lvr}~

Alternatively, Milionis et al. \cite{milionis2022automated} propose the ``LVR'' metric that compares a CFMM's performance against that 
of a counterfactual strategy
that makes the same trades in asset $X$ as a CFMM but at the reference exchange rate, not at the CFMM's spot rate.  
This LVR is an expectation over a model of price dynamics but depends only on that model and the so-called ``instantaneous LVR'', which
is proportional to $-p^2\frac{ \mathcal X(p)}{dp} = L(p)$ (Theorem 1 and Lemma 1, \cite{milionis2022automated}). Milionis et al. \cite{milionis2022automated} show that the ``instantaneous LVR'' is equivalent to the loss to the CFMM due to arbitrage.

The overall LVR can be given by integrating the ``instantaneous LVR'' over the time period for which the LP operates the CFMM.
Since the overall LVR is, therefore,
a linear function of $L(\cdot)$, it can be incorporated directly into our optimization framework.

\subsection{Net Profit}

Adding capital to a CFMM is profitable in expectation if and only if the expected fee revenue
outweighs the expected loss. In the following, we analyze the net expected profit of an LP.

\begin{lemma}
\label{lemma:net_profit}

Let $\mu(\cdot)$ be as defined in Proposition \ref{prop:exp_revenue} and $\Gamma(\cdot)$ be  loss function which is linear in $L(p)$ for each $p$. 
An LP's net expected profit,  
$\mu(rate, \psi,\delta,s)-\Gamma(L(\cdot))$, is a concave function 
in $L(p)$ for each $p$.


\end{lemma}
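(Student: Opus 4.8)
The plan is to decompose the revenue functional of Proposition~\ref{prop:exp_revenue} into a part that does not depend on the liquidity allocation and a part that is manifestly concave in $L(\cdot)$, then add the concave $-\Gamma$ and close under addition. Concretely, I would first write
\[
\mu(rate,\psi,\delta,s) = \frac{\delta}{N_\psi}\iint_{p_X,p_Y} rate_{\delta}(p_X,p_Y)\,\psi(p_X,p_Y)\,dp_X\,dp_Y \;-\; \frac{\delta s}{N_\psi}\iint_{p_X,p_Y} \frac{rate_{\delta}(p_X,p_Y)\,\psi(p_X,p_Y)}{p_Y\,L(p_X/p_Y)}\,dp_X\,dp_Y.
\]
The first double integral involves no $L(p)$ whatsoever, so it is a finite additive constant (finite because $rate_{\delta}$ is an integrable trade-volume prediction against $\psi$, and $N_\psi$ is finite and positive by Definition~\ref{defn:belief}). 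Hence only the second term carries the dependence on the liquidity allocation, and it is precisely $-\delta s$ times the CFMM-inefficiency-type integral already studied in Proposition~\ref{prop:failchance} (with belief $rate_{\delta}\cdot\psi$).

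Next I would establish that this second integral is a convex functional of $L(\cdot)$. For each fixed pair $(p_X,p_Y)$ on the support of $\psi$, the map $t\mapsto \frac{rate_{\delta}(p_X,p_Y)\psi(p_X,p_Y)}{p_Y}\cdot\frac1t$ is convex on $(0,\infty)$, since $1/t$ is convex there and the prefactor is nonnegative; with the convention from Proposition~\ref{prop:failchance} that the quantity is $+\infty$ when $t=0$, convexity extends to $[0,\infty)$. Integrating a family of convex functions against the nonnegative measure $dp_X\,dp_Y$ preserves convexity, so the second double integral is convex in $L(\cdot)$ (indeed jointly, hence also in each coordinate $L(p)$), over the Banach space of liquidity allocations as in Theorem~\ref{thm:problem}. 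Consequently its negation is concave, so $\mu(rate,\psi,\delta,s)$ — being a constant plus a concave functional — is concave in each $L(p)$.

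Finally, since $\Gamma(L(\cdot))$ is linear in $L(p)$ for each $p$ by hypothesis (and both the divergence loss of \S\ref{sec:divloss} and the LVR of \S\ref{sec:lvr} are of this form), $-\Gamma$ is concave; the sum of the concave $\mu$ and the concave $-\Gamma$ is then concave in each $L(p)$, as claimed. I expect the only mild subtleties to be the bookkeeping at the boundary $L(p)=0$ and the infinite-dimensional setting, but both are handled exactly as for~\ref{eq:cop}; the substance of the argument is just the elementary convexity of $1/t$.
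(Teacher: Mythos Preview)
Your proof is correct and follows essentially the same approach as the paper, which simply asserts that $\mu(\cdot)$ is ``clearly concave'' and that subtracting a linear loss preserves concavity. You have merely unpacked the ``clearly'' by exhibiting the constant-plus-$1/t$ decomposition and invoking convexity of $1/t$; this is exactly the intended reasoning, just spelled out in more detail than the paper bothers with.
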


\begin{proof}

$\mu(\cdot)$ is clearly concave, and for any loss function which is linear in each variable $L(p)$, the whole function is concave in each $L(p)$.
\extraqed{}
\end{proof}
Crucially, Lemma~\ref{lemma:net_profit} is applicable to both the divergence loss and the Loss-vs-Rebalancing.

\begin{observation}
A profit-maximizing liquidity allocation
by maximizing $\mu(rate, \psi,\delta,s)-\Gamma(L(\cdot))$, and is computable via
a convex optimization problem. 

The objective is the expression of Lemma \ref{lemma:net_profit}, and the constraints
are the same as in Theorem \ref{thm:problem}.

\end{observation}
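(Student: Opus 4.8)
The plan is to check that maximizing the net expected profit over the feasible region of Theorem~\ref{thm:problem} is a genuine convex program, that its optimum is attained, and that it is characterized by KKT conditions exactly as in the inefficiency-only case. First I would upgrade the coordinatewise concavity of Lemma~\ref{lemma:net_profit} to joint concavity of the objective $\mu(rate,\psi,\delta,s)-\Gamma(L(\cdot))$ in the collection of variables $\lbrace L(p)\rbrace$. The key observation is that the integrand defining $\mu$ at a point $(p_X,p_Y)$ depends on the allocation only through the single value $L(p_X/p_Y)$, so $\mu$ is an integral (a ``sum'') of functions each of which is concave and depends on exactly one coordinate; a sum of such separable concave terms is jointly concave. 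Since $\Gamma$ is linear in each $L(p)$ by hypothesis, it is jointly concave as well, and the objective, being the sum of the two, is jointly concave.

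Next I would record that the constraint set carried over from Theorem~\ref{thm:problem} is convex: \ref{eq:cop1} and \ref{eq:cop2} are linear inequalities in $L(\cdot)$ and in $X_0,Y_0$ (the maps $L\mapsto\int_0^{p_0}L(p)/p\,dp$ and $L\mapsto\int_{p_0}^{\infty}L(p)/p^2\,dp$ are linear functionals), \ref{eq:cop3} is linear, and \ref{eq:cop4} cuts out the nonnegative cone. Hence maximizing the concave objective over this convex set is a convex optimization problem, with objective the expression in Lemma~\ref{lemma:net_profit} and the constraints \ref{eq:cop1}--\ref{eq:cop4} as in Theorem~\ref{thm:problem}.

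It then remains to dispatch the same three technical points as in the proof of Theorem~\ref{thm:problem}. Feasibility: Lemma~\ref{lemma:finite_soln} supplies a feasible allocation with finite expected inefficiency, and on the feasible set the budget constraint bounds the initial (and, via the belief weighting, the expected future) value of the reserves, so $\Gamma$ is bounded there and the supremum is finite --- for the divergence-loss and LVR instantiations of $\Gamma$ this boundedness is immediate, and for a general linear loss one assumes it. Existence of an optimum and well-definedness of the Lagrange multipliers: the optimization is again over a Banach space with one coordinate per exchange rate $p>0$, so the same results from the theory of optimization over Banach spaces (Theorem~47.C of \cite{zeidler1985}, and \S4.14, Proposition~1 of \cite{zeidler1995}) apply, yielding an optimal solution and well-defined KKT conditions. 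Finally, by Observation~\ref{obs:alloc} the resulting $L(\cdot)$ fully specifies a CFMM trading function, so the profit-maximizing liquidity allocation is computable as claimed.

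The main obstacle is precisely the step from the per-coordinate concavity asserted in Lemma~\ref{lemma:net_profit} to the joint concavity actually needed for convexity of the program; the separability remark above --- that each infinitesimal contribution to $\mu$ touches only one value $L(p)$ --- is the crux, and once it is in hand the remainder is a direct transcription of the argument already carried out for \ref{eq:cop}. A secondary, minor point is confirming that $\Gamma$ is bounded below on feasible allocations so the maximum is not $+\infty$, which follows from the budget constraint in the cases of interest.
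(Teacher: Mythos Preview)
Your proposal is correct and in fact more careful than the paper's own treatment: the paper states this observation without proof, relying only on Lemma~\ref{lemma:net_profit}'s assertion of concavity ``in $L(p)$ for each $p$.'' Your separability remark --- that each infinitesimal contribution to $\mu$ depends on a single coordinate $L(p_X/p_Y)$, so the integral of separable concave terms is jointly concave --- actually fills a gap the paper leaves open, since coordinatewise concavity alone would not suffice for a convex program. The remainder of your argument (convexity of the feasible set, the Banach-space existence and KKT results from \cite{zeidler1985,zeidler1995}) is a direct transcription of the proof of Theorem~\ref{thm:problem}, which is exactly what the paper intends by ``the constraints are the same as in Theorem~\ref{thm:problem}.''
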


Not only, therefore, does our optimization framework
allow an LP to design a CFMM that maximizes successful trading activity, 
but it can also guide
a profit-maximizing LP to a profit-maximizing CFMM.
Observe that the objective value of this convex program is positive precisely
when the CFMM generates a profit. 


\subsection{Divergence Loss Shifts Liquidity Away From the Current Exchange Rate}
\label{sec:divloss_extreme}

The threat of divergence loss is in inherent tension with the potential of a CFMM to earn fee revenue by settling trades.
A CFMM can, of course, vacuously eliminate divergence loss by refusing to make any trades (by allocating no liquidity to any price),
and in general, higher liquidity at a given range of exchange rates leads to better CFMM performance at that range and higher divergence loss
if the reference exchange rate moves away from that range.


When divergence loss is accounted for in the objective, our optimization framework 
computes a liquidity allocation that trades off fee revenue with divergence loss by shifting liquidity away from the current exchange rate i.e., towards more extreme exchange rates (closer to $p=0$ and $p = \infty$). 

\LVR{
This qualitative behaviour is, however, not true of every possible loss function that an LP might add to the objective.
Each function will have its own effect; nevertheless, the optimization problem can always be solved to compute an optimal allocation.
}

\newcommand{\thmdivloss}
{
	
Let $L_1(p)$ be the optimal liquidity allocation that maximizes fee revenue ---
the solution to the optimization problem for the objective of minimizing the following:
\begin{equation*}
-\frac{\delta}{N_\psi} \iint_{p_X, p_Y} rate_{\delta}(p_X, p_Y)\psi(p_X, p_Y) \left(1-\frac{s}{p_YL(p_X/p_Y)}\right) dp_X~dp_Y
\end{equation*}

Let $L_2(p)$ be the optimal liquidity allocation that maximizes fee revenue while accounting for divergence loss ---
the solution to the optimization problem for the objective of minimizing the following:
\begin{dmath*}
- \nu(\psi) - \frac{\delta}{N_\psi} \iint_{p_X, p_Y} rate_{\delta}(p_X, p_Y)\psi(p_X, p_Y)\left(1-\frac{s}{p_YL(p_X/p_Y)}\right) dp_X~dp_Y 
\end{dmath*}

Let $ X_1=\int_{p_0}^\infty \frac{L_1(p)}{p^2}dp$ and $  X_2=\int_{p_0}^\infty \frac{L_2(p)}{p^2}dp$ be the optimal initial quantities of $X$
 for the above two problems respectively.

Then there exists some $p_1>p_0$ such that for $p_0 \leq p \leq p_1$, $\frac{L_1(p)}{X_1} \geq \frac{L_2(p)}{X_2}$ and for $p\geq p_1$,
$\frac{L_1(p)}{X_1} \leq \frac{L_2(p)}{X_2}$.  An analogous statement holds for the allocations of $Y$.

}

\begin{theorem}
\label{thm:qual_divloss}

\thmdivloss{}


\end{theorem}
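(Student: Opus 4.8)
The plan is to reduce both optimization problems to one‑dimensional convex programs in $L(\cdot)$, read off their KKT solutions in closed form, and exploit a monotonicity property of the coefficient that divergence loss contributes to the objective.

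\emph{Step 1 (reduction).} By Corollary~\ref{cor:fee_maximize} and Corollary~\ref{cor:varphi}, applied with the effective belief $\widetilde\psi = rate_\delta\cdot\psi$, both objectives contain the common term $\int_0^\infty \frac{c(p)}{L(p)}\,dp$, where $c(p):=\frac{\delta s}{N_\psi}\,\varphi_{\widetilde\psi}(\cot^{-1}p)\sin(\cot^{-1}p)\ge 0$, plus an $L$‑independent constant. By Proposition~\ref{prop:divergence}, $\nu(\psi)=\int_0^\infty d(p)L(p)\,dp$ with $d(p):=\frac{1}{N_\psi}\!\big(\tfrac{A(p)}{p^2}+\tfrac{B(p)}{p}\big)\ge 0$, $A(p):=\iint_{p_X,p_Y}p_X\psi\,\mathds{1}_{\{p_X/p_Y\le p\}}$, $B(p):=\iint_{p_X,p_Y}p_Y\psi\,\mathds{1}_{\{p_X/p_Y\ge p\}}$. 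Hence (modulo constants) $L_1$ minimizes $\int_0^\infty\frac{c(p)}{L(p)}\,dp$ and $L_2$ minimizes $\int_0^\infty\!\big(\frac{c(p)}{L(p)}-d(p)L(p)\big)\,dp$, both subject to the budget constraints \ref{eq:cop1}--\ref{eq:cop4}; both objectives are convex, so the KKT machinery behind Lemma~\ref{lemma:kktconds} applies.

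\emph{Step 2 (KKT solutions).} Repeating the derivation of Lemma~\ref{lemma:kktconds}, now including the extra linear term $-d(p)L(p)$ for the second program, and letting $\lambda_X^{(i)},\lambda_Y^{(i)}>0$ (positivity by Lemma~\ref{lemma:kktobs}) be the multipliers of \ref{eq:cop2},\ref{eq:cop1}: on the price range where the belief places mass (so $c(p)>0$ and, by Corollary~\ref{corolary:phi_and_L}, $L_i(p)>0$) we obtain, for $p\ge p_0$, $L_1(p)^2=\frac{c(p)p^2}{\lambda_X^{(1)}}$ and $L_2(p)^2=\frac{c(p)p^2}{\lambda_X^{(2)}-d(p)p^2}$, and for $p\le p_0$, $L_1(p)^2=\frac{c(p)p}{\lambda_Y^{(1)}}$ and $L_2(p)^2=\frac{c(p)p}{\lambda_Y^{(2)}-d(p)p}$ (the denominators are positive since $c(p)>0$ forces $\frac{c(p)}{L_2(p)^2}>0$). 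Dividing, $\frac{L_2(p)^2}{L_1(p)^2}=\frac{\lambda_X^{(1)}}{\lambda_X^{(2)}-d(p)p^2}$ for $p\ge p_0$ and $\frac{L_2(p)^2}{L_1(p)^2}=\frac{\lambda_Y^{(1)}}{\lambda_Y^{(2)}-d(p)p}$ for $p\le p_0$.

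\emph{Step 3 (the crux: monotonicity of the divergence coefficient).} Write $A(p)=\int_0^p\mu_X(u)\,du$ and $B(p)=\int_p^\infty\mu_Y(u)\,du$, where $\mu_X(u),\mu_Y(u)$ are the densities, in the price ratio $u=p_X/p_Y$, of the measures $p_X\psi$ and $p_Y\psi$. Since $p_X=u\,p_Y$ throughout the wedge of ratio $u$, these satisfy $\mu_X(u)=u\,\mu_Y(u)$. Therefore $\frac{d}{dp}\big[A(p)+pB(p)\big]=\mu_X(p)+B(p)-p\,\mu_Y(p)=B(p)\ge 0$ and $\frac{d}{dp}\big[\tfrac{A(p)}{p}+B(p)\big]=\tfrac{\mu_X(p)}{p}-\tfrac{A(p)}{p^2}-\mu_Y(p)=-\tfrac{A(p)}{p^2}\le 0$. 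Hence $d(p)p^2$ is non‑decreasing and $d(p)p$ is non‑increasing, so by Step 2 the ratio $L_2(p)/L_1(p)$ is non‑decreasing on $[p_0,\infty)$ and non‑increasing on $(0,p_0]$. Establishing this identity and monotonicity cleanly is the step I expect to be the main obstacle.

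\emph{Step 4 (locating the crossing).} For $p\ge p_0$ put $\rho(p):=\frac{L_2(p)/X_2}{L_1(p)/X_1}=\frac{X_1}{X_2}\cdot\frac{L_2(p)}{L_1(p)}$, continuous and non‑decreasing by Step 3. By the definitions of $X_i$, $\int_{p_0}^\infty\frac{L_i(p)/X_i}{p^2}\,dp=1$, so $\int_{p_0}^\infty(\rho(p)-1)\,\frac{L_1(p)/X_1}{p^2}\,dp=0$. As the weight is strictly positive and $\rho-1$ is monotone, $\rho-1$ must be $\le 0$ on an initial interval $[p_0,p_1]$ and $\ge 0$ on $[p_1,\infty)$ for some $p_1\ge p_0$ (the degenerate case $\rho\equiv 1$, which occurs iff $B(p)\equiv 0$ for $p\ge p_0$, holds trivially for any $p_1$). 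Unwinding gives $\frac{L_1(p)}{X_1}\ge\frac{L_2(p)}{X_2}$ on $[p_0,p_1]$ and $\frac{L_1(p)}{X_1}\le\frac{L_2(p)}{X_2}$ on $[p_1,\infty)$, as claimed. The statement for $Y$ follows symmetrically on $(0,p_0]$: with $Y_i=\int_0^{p_0}\frac{L_i(p)}{p}\,dp$ one has $\int_0^{p_0}\frac{L_i(p)/Y_i}{p}\,dp=1$, and $\frac{L_2(\cdot)/Y_2}{L_1(\cdot)/Y_1}$ is non‑increasing, yielding $p_2<p_0$ with $\frac{L_1(p)}{Y_1}\ge\frac{L_2(p)}{Y_2}$ for $p_2\le p\le p_0$ and $\frac{L_1(p)}{Y_1}\le\frac{L_2(p)}{Y_2}$ for $p\le p_2$. $\qed$
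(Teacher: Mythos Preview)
Your proposal is correct and follows essentially the same architecture as the paper's proof: write the KKT solutions for both programs, express $L_2/L_1$ as a monotone transform of $d(p)p^2$ (resp.\ $d(p)p$), establish that $d(p)p^2$ is nondecreasing and $d(p)p$ is nonincreasing, and then use the equal-mass normalization $\int_{p_0}^\infty \tfrac{L_i(p)/X_i}{p^2}\,dp=1$ to force a single crossing.

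The one substantive difference is your Step~3. The paper carries out the monotonicity computation in polar coordinates, rewriting $\kappa(p)p^2$ and $\kappa(p)p$ as double integrals in $(r,\theta')$ and differentiating in $\theta$; the cancellation then comes from the boundary term of the inner integral. Your argument via the one-dimensional densities $\mu_X,\mu_Y$ in the ratio variable $u=p_X/p_Y$, together with the pointwise identity $\mu_X(u)=u\,\mu_Y(u)$, gives the same cancellation in one line and avoids the trigonometric bookkeeping. Both routes prove exactly the same monotonicity facts, so this is a presentational improvement rather than a different idea; you are also slightly more careful than the paper in keeping the two sets of Lagrange multipliers $\lambda_X^{(1)},\lambda_X^{(2)}$ distinct.
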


The proof is technical and is given in Appendix~\ref{sec:expected_future_value}. In simple terms, the divergence loss might change the optimal initial choice of reserves, but given that choice, divergence loss
causes liquidity to shift away from the current exchange rate.  Qualitatively, the higher the magnitude of the expected divergence loss
relative to the expected fee revenue, the larger the magnitude of this effect.  This reflects the well-known intuition that 
when the reference exchange rate changes substantially,
highly concentrated liquidity positions (like in \S \ref{sec:uniswapv3}) suffer higher losses than more evenly distributed liquidity allocations (like
those of \S \ref{sec:uniswapv2}).

\section{Conclusion}

We develop in this work a convex program that, for any set of beliefs about future asset valuations, outputs a trading function
that maximizes the expected fraction of trade requests the CFMM can settle.  Careful analysis of this program allows for study of the inverse relationship
as well; for any trading function, our program can construct a class of beliefs for which the trading function is optimal.  Constructing this program
requires a new notion of the liquidity of a CFMM, 
and the core technical challenge
involves careful analysis of the KKT conditions of this program.

Unlike prior work, this program is able to explain the diversity of CFMM trading curves observed in practice.
  We analyze several CFMMs that are widely deployed in
the modern DeFi ecosystem, and show that the beliefs revealed by our model match the informal intuition of practitioners.

Furthermore, our program is versatile enough to cover the case of a profit-seeking LP that must trade off expected revenue from trading fees against loss due to arbitrage.
This program therefore can serve as a guide for practitioners when choosing a liquidity allocation in a real-world CFMM.

\bibliographystyle{splncs04}
\bibliography{main}

\newpage

\appendix

\section{CFMMs and Market Scoring Rules}
\label{apx:cost_fn}

We highlight here for completeness the equivalence between market scoring rules \cite{hanson2003combinatorial}
and CFMMs. Chen and Pennock \cite{chen2012utility} show that every prediction market, based on a market scoring rule,
can be represented using some ``cost function.''

A prediction market trades $n$ types of shares, each of which pays out 1 unit of a numeraire if a particular future event occurs.
The cost function $C(q)$ of \cite{chen2012utility} is a map from the total number of issued shares of each event, $q\in\mathbb{R}^n$, to some number of units of
the numeraire.  To make a trade $\delta\in\mathbb{R}^n$ with the prediction market (i.e. to change the total number of issued shares to $q+\delta$),
a user pays $z=C(q+\delta) - C(q)$ units of the numeraire to the market.

One discrepancy is that traditional formulations of prediction markets (e.g. \cite{chen2012utility,hanson2007logarithmic})
 allow an arbitrary number of shares to be issued by the market maker, but the CFMMs described in this work trade in assets with finite supplies. 
%
Suppose for the moment, however, that a CFMM could possess a negative quantity of shares (with the trading function $f$ defined
on the entirety of $\mathbb{R}^n$, instead of just the positive orthant).
This formulation of a prediction market directly gives a CFMM that trades the $n$ shares and the numeraire, with trading function 
$f(r,z)=-C(-r)+z$ for $r\in \mathbb{R}^n$ the number of shares
owned by the CFMM, and $z$ the number of units of the numeraire owned by the CFMM.  Observe that for any trade $\delta$ and 
$dz=C(-(r + \delta))-C(-r)$, $f(r, z)=f(r+\delta, z+dz)$.  This establishes the correspondence between prediction markets and CFMMs.

In our examples with the LMSR, we consider a CFMM for which $z = 0$ 
(i.e., it doesn't exchange shares for dollars, but only shares of one future event for shares of another future event). 
The cost function $C(r)$ for the LMSR is $\log(\sum_{i=1}^n \exp(-r_i)).$ The CFMM representation with this cost function follows by setting it to a constant.

\section{Continuous trade size distribution} \label{sec:cont}


\begin{definition}
\label{defn:dist_cont}
Let $size(\cdot)$ be some distribution on $\mathbb{R}_{\geq 0}$ with support in a neighborhood of $0$.

A trader appears at every timestep.  The trade has size $k$ units of $Y$, where $k$ is drawn from $size(\cdot)$.
  A trade buys or sells from the CFMM with
equal probability.
\end{definition}

This definition implicitly encodes an assumption that the amount of trading from $X$ to $Y$ is balanced in expectation
against the amount of trading from $Y$ to $X$.

An additional assumption makes this setting analytically tractable.

\begin{assumption}[Strict Slippage]
\label{ass:modified_trade_success}

Trade requests measure slippage relative to the post-trade \emph{spot exchange rate} of the CFMM,
not the overall exchange rate of the trade.

\end{assumption}

In other words, a trade request succeeds if and only if it would move the CFMM's reserves to some state within $L_\varepsilon(\hat p)$.
Assumption~\ref{ass:modified_trade_success} implies a pessimistic view of the trade failure probability. However, the following result signifies that Assumption~\ref{ass:modified_trade_success} is reasonable.

\begin{lemma} \label{lemma:slippage}
   At any state of the CFMMs asset reserves, the maximum size of a trade that will be successful under the model in Definition~\ref{defn:trade_success} is at most twice the maximum size of a trade that will be successful under Strict Slippage as in Assumption~\ref{ass:modified_trade_success}.
\end{lemma}

\begin{proof}
    See that the strict-quasi concavity of the trading function $f$ implies a ``convex-pricing'' of any asset. For any buy trade, the marginal exchange rate received is non-decreasing in the size of the trade. Therefore the slippage of a buy trade of $2k$ units is at least as much as the strict slippage of a  buy trade of $k$ units. A similar argument follows for sell trades.
\end{proof}

We now analyze the Markov chain over the CFMM's state, the stationary distribution of which gives us the trade failure probability under Assumption~\ref{ass:modified_trade_success}.
\begin{lemma} \label{lemma:uniform}

Let $M$ be the Markov chain defined by the state of $Y$ in the asset reserves of the CFMM with $ \mathcal Y \in L_\varepsilon(\hat p)$ and transitions 
induced by trades drawn from the distribution in Definition \ref{defn:dist_cont}. 
Under Assumption~\ref{ass:modified_trade_success}, the stationary distribution of $M$ is uniform over $L_\varepsilon(\hat p)$.

\end{lemma}

\begin{proof}
It is sufficient to show that for any measurable set $A \in L_{\varepsilon}(\hat p)$, $\mu(A)=\int_{L_{\varepsilon}(\hat p)} \mu(y) p(y, A) dy$, where $\mu(\cdot)$ is a uniform measure on $L_{\varepsilon}(\hat p)$ and $p$ is a state transition kernel induced by the trade distribution. Denote a trade by $v$ wherein $v>0$ implies that the trader sells (and the CFMM buys) $|v|$ units of Y and $v <0$ implies that the trader buys (and the CFMM sells) $|v|$ units of Y.

After a trade starting from point $y \in L_{\varepsilon}(\hat p)$ the Markov chain M lands in $A$ if either $y+v\in A$ (that is, the trade succeeds)  or if $y\in A$ and $y+v\notin L_{\varepsilon}(\hat p)$  (that is, the trade fails and the initial state was in $A$). These events are mutually exclusive for any fixed trade $v$ since it can only either succeed or fail.

 Let $\mathbb P (v)$ be the probability of trade $v$ per the distribution in Definition ~\ref{defn:dist_cont}. We have the following.
\begin{align*}
&\int\limits_{L_{\varepsilon}(\hat p)}  \mu(y) p(y, A) dy \\
&= \int\limits_A \mu(y) \int\limits_{-\infty}^\infty   \mathbb P (v) (\mathds{1}(y+v \in A) + \mathds{1}(y+v \notin L_{\varepsilon}(\hat p)))   dv dy
+ \int\limits_{L_{\varepsilon}(\hat p)\setminus A}  \mu(y)  \int\limits_{-\infty}^\infty  \mathbb P (v) \mathds{1}(y+v \in A)  dv dy, \\
&= \int\limits_A \mu(y) \int\limits_{-\infty}^\infty   \mathbb P (v)  \mathds{1}(y+v \notin L_{\varepsilon}(\hat p))  dv dy
+ \int\limits_{L_{\varepsilon}(\hat p)}  \mu(y)  \int\limits_{-\infty}^\infty  \mathbb P (v) \mathds{1}(y+v \in A)  dv dy. 
\end{align*}

The second term of the above expression is the probability that the Markov chain M ends up in a state in set $A$ due to a successful trade. Since the distribution of trades $v$ is symmetric about $0$ per Definition~\ref{defn:dist_cont}, the second term of the above expression 
is equal to the probability that the Markov chain M in a state in set $A$ ends up in $L_{\varepsilon}(\hat p)$ after a successful trade. Therefore:
\begin{align*}
\int\limits_{L_{\varepsilon}(\hat p)}  \mu(y) p(y, A) dy
&= \int\limits_A \mu(y) \int\limits_{-\infty}^\infty   \mathbb P (v)  \mathds{1}(y+v \notin L_{\varepsilon}(\hat p))  dv dy
+ \int\limits_{A}  \mu(y)  \int\limits_{-\infty}^\infty  \mathbb P (v) \mathds{1}(y+v \in L_{\varepsilon}(\hat p))  dv dy, \\
&= \int\limits_A \mu(y) \int\limits_{-\infty}^\infty \mathbb P (v)  dv dy.\\
&=\mu(A). \qedhere
\end{align*}
%
\extraqed{}
%
\end{proof}

\begin{proposition}

The probability that a trade of size $k$ units of $Y$ fails is approximately $\min(1, \frac{k}{\vert L_\varepsilon(\hat p)\vert})$, 
where the approximation error
is up to Assumption \ref{ass:modified_trade_success}.

\end{proposition}

\begin{proof}

The probability that a (without loss of generality) sell of size $k$ units of $Y$ fails is equal to the probability that a 
state $y$, drawn uniformly from the range $L_\varepsilon(p)=[y_1, y_2]$, lies in the range $[y_2-k, y_2]$.
Lemma \ref{lemma:uniform} shows this probability is $\min(1, \frac{k}{y_2-y_1})$.
\extraqed{}
\end{proof}


\section{Omitted Proofs} \label{sec:proofs}
\subsection{Omitted Proofs of \S \ref{sec:preliminaries} and \S \ref{sec:model}}
\label{apx:prelim_obs}

\begin{restatement*}[Observation \ref{obs:y_fn_p}]

If $f$ is strictly quasi-concave and differentiable, then for any constant $K$ and spot exchange rate $p$, 
the point $(x,y)$ where $f(x,y)=K$ and $p$ is a spot exchange rate at 
$(x,y)$ is unique.

\end{restatement*}

\begin{proof}

A constant $K=f(X_0, Y_0)$ defines a set $\lbrace x:f(x)\geq K\rbrace$.  Because $f$ is strictly quasi-concave, this set is
strictly convex.  Trades against the CFMM (starting from initial reserves $(X_0, Y_0)$) move along the boundaries of this set.
Because this set is strictly convex, no two points on the boundary can share a gradient (or subgradient).  
\extraqed{}
\end{proof}

\begin{restatement*}[Observation \ref{obs:y_fn_x}]

If $f$ is strictly increasing in both $X$ and $Y$ at every point on the positive orthant, then for a given constant function value $K$,
 the amount of $Y$ in the CFMM reserves
uniquely specifies the amount of $X$ in the reserves, and vice versa.

\end{restatement*}

\begin{proof}

If not, then $f$ would be constant on some line with either $X$ or $Y$ constant.
\extraqed{}
\end{proof}

\begin{restatement*}[Observation \ref{obs:y_nondecreasing}]
$\mathcal Y(p)$ is monotone nondecreasing.
\end{restatement*}

\begin{proof}

If $\mathcal Y(p)$ is decreasing, the level set of $f$, i.e., $\lbrace (x,y): f(x,y)\geq K\rbrace$ cannot be convex.
\extraqed{}
\end{proof}

\begin{restatement*}[Lemma \ref{lemma:y_differentiable}]
\lemmaydiffable{}
\end{restatement*}

\begin{proof}

Observation \ref{obs:y_fn_x} implies that the amount of $Y$ in the reserves can be represented as a function $\mathcal {\hat Y}(x)$ of the amount of $X$ in the reserves.
By assumption, the level sets of $f$ (other than for $f(\cdot)=0$) cannot touch the boundary of the nonnegative orthant.

Because $f$ is differentiable and increasing at every point in the positive orthant,
the map $g(x)$ from reserves $x$ to spot exchange rates at $(x, \mathcal {\hat Y}(x))$ must be a bijection from $(0, \infty)$ to $(0,\infty)$.
Because $f$ is twice-differentiable, $g(x)$ must be differentiable, and so the map $h(p)=g^{-1}(p)$ must also be differentiable.
The map $\mathcal Y(p)$ from spot exchange rates to reserves $Y$ is equal to $\mathcal{\hat{Y}}(h(p))$, and so $\hat Y(p)$ is differentiable
because $\mathcal{\hat{Y}}(\cdot)$ is differentiable and $h(\cdot)$ is differentiable.
\extraqed{}
\end{proof}

\begin{restatement*}[Lemma \ref{lemma:y_diffable}]
\lemmaypdiffable{}
\end{restatement*}

\begin{proof}
    Follows from Definitions~\ref{defn:Leps} and~\ref{defn:liquidity}. 
\end{proof}

\subsection{Omitted Proof of Corollary \ref{cor:equivclass}} \label{sec:proof_equiv}

\begin{restatement*}[Corollary \ref{cor:equivclass}]
\corequivclass{}
\end{restatement*}

\begin{proof}
Follows by substitution.  $\alpha$ rescales the derivative of the objective with respect to every variable by the same
constant and thus does not affect whether an allocation is optimal. \extraqed{}
\end{proof}
\subsection{Omitted Proof of Corollary \ref{cor:varphi}} \label{sec:proof_varphi}

\begin{restatement*}[Corollary \ref{cor:varphi}]
\corvarphi{}
\end{restatement*}

\begin{proof}
\begin{align*}
\int_{p_X, p_Y} \frac{\psi(p_X, p_Y)}{p_YL(p_X/p_Y)} dp_X~dp_Y
& = \int_\theta \frac{\varphi_\psi(\theta)}{L(\cot(\theta)) \sin(\theta)} d\theta\\
~& = \int_p \frac{\varphi_\psi(\theta)\sin^2(\theta)}{L(\cot(\theta)) \sin(\theta)} dp \\
~& = \int_p \frac{\varphi_\psi(\cot^{-1}(p))\sin(\cot^{-1}(p))}{L(p)}dp
\end{align*}

The first line follows by Lemma \ref{lemma:polar} (recall that $dp_X~dp_Y = r ~dr~ d\theta$), 
the second by substitution of $p=\cot(\theta)$ and $d\theta=-\sin^2(\theta)dp$ (and
changing the direction of integration --- recall $\theta=0$ when $p=\infty$),
and the third by substitution.
\extraqed{}
\end{proof}
\subsection{Omitted Proof of Lemma \ref{lemma:finite_soln}} \label{sec:prooffinitesol}
\begin{restatement*}[Lemma \ref{lemma:finite_soln}]
The optimization problem of Theorem \ref{thm:problem} always has a solution with finite objective value.
\end{restatement*}

\begin{proof}

Set $L(p)=1$ for $p\leq 1$ and $L(p)=\varphi_\psi(\cot^{-1}(p))/p^2$ otherwise.  Then 

\begin{dmath*}
\int_p \frac{\varphi_\psi(\cot^{-1}(p))\sin(\cot^{-1}(p))}{L(p)}dp \\
\leq \int_p \frac{\varphi_\psi(\cot^{-1}(p))}{L(p)}dp \\
\leq \int_0^1 \varphi_\psi(\cot^{-1}(p)) dp + \int_1^\infty \frac{dp}{p^2}
\end{dmath*}

The first term of the last line is finite, as per our assumption on trader beliefs. 

Set $Y_0=\int_0^{p_0}\frac{L(p)dp}{p}$ and $X_0=\int_{p_0}^\infty \frac{L(p)dp}{p^2}$.  Clearly both $X_0$ and $Y_0$ are finite.
Finally, rescale each $L(p)$, $X_0$, and $Y_0$ by a factor of $\frac{B}{P_X X_0 + P_Y Y_0}$ to get a new allocation $L^\prime(p)$,
$X^\prime_0$, and $Y^\prime_0$ satisfing the constraints and that still gives a finite objective value.
\extraqed{}
\end{proof}
\subsection{Omitted Proof of Lemma \ref{lemma:basicopt}} \label{proof_basicopt}

\begin{restatement*}
\lemmabasicopt{}
\end{restatement*}

\begin{proof}
The third equation
	holds	since the objective function is strictly decreasing in at least one $L(p)$ (where the belief puts a nonzero probability
	on the exchange rate $p$), so any unallocated capital
	could be allocated to increase this $L(\cdot)$ on a neighbourhood of $L(p)$ and reduce the objective.  

	The first equation holds because any unallocated units of $X$ could be allocated to $L(p^\prime)$ for a set of $p^\prime$ 
	in some neighbourhood of some $p \leq p_0$ 
	and thereby reduce the objective.  If there is no $p\leq p_0$ where the belief puts a nonzero probability, then 
	all of the capital allocated by the third constraint to $X_0$ could be reallocated into increasing $Y_0$ and thereby decreasing
	the objective.

	The second equation follows by symmetry with the argument for the first.
\extraqed{}
\end{proof}

\subsection{Omitted Proof of Corollary \ref{cor:yhat}} \label{sec:proof_int_y}

\begin{restatement*}[Corollary \ref{cor:yhat}]

\coryhatdefined{}

\end{restatement*}

\begin{proof}

The last item of Lemma \ref{lemma:basicopt} shows that $L(p)\neq 0$ if and only if $\varphi_\psi(\cot^{-1}(p))\sin(\cot^{-1}(p))\neq 0$.
When $L(p)$ is nonzero, it is either $\sqrt{\frac{p^2}{\lambda_X}\varphi_\psi(\cot^{-1}(p))\sin(\cot^{-1}(p))}$ 
or
$\sqrt{\frac{p}{\lambda_Y}\varphi_\psi(\cot^{-1}(p))\sin(\cot^{-1}(p))}$
 (depending on the value of $p$).

By our assumption on trader beliefs, $\varphi_\psi(\cot^{-1}(p))\sin(\cot^{-1}(p))$ is a well-defined function of $p$
and is integrable.  Thus, both $\sqrt{\frac{p^2}{\lambda_X}\varphi_\psi(\cot^{-1}(p))\sin(\cot^{-1}(p))}$
and $\sqrt{\frac{p}{\lambda_Y}\varphi_\psi(\cot^{-1}(p))\sin(\cot^{-1}(p))}$ are integrable.
Monotonicity follows from $L(p)\geq 0$ and continuity from basic facts about integrals. \extraqed{}
\end{proof}

\subsection{Omitted Proof of Corollaries \ref{cor:lp_to_equiv},~\ref{cor:inverse_problem}, and~\ref{cor:addbeliefs}} \label{sec:corr_proofs}

\begin{restatement*}[Corollary \ref{cor:lp_to_equiv}]
\corlpequiv{}
\end{restatement*}

\begin{proof}

It suffices to uniquely identify $\varphi_\psi(\cot^{-1}(p))$ for each $p$, up to some scalar.
Lemma \ref{lemma:kktobs} shows that $\varphi_\psi(\cot^{-1}(p))$ is a function of an optimal $L(p)$ and Lagrange multipliers
$\lambda_X$ or $\lambda_Y$,
and because $\lambda_X \frac{P_X}{P_Y} = \lambda_Y$, we must have that $\varphi_\psi$ is specified by $L(p)$ and $p_0$ up to some scalar
$\lambda_X$.
\extraqed{}
\end{proof}


\begin{restatement*}[Corollary \ref{cor:inverse_problem}]
\corinverse{}
\end{restatement*}

\begin{proof}

Recall the definition of $\varphi_\psi(\cdot)$ in~\ref{cor:varphi}. 
For the given belief function $\psi,$ standard trigonometric arguments show that when
 $p\geq p_0$, we have $\varphi_\psi(\cot^{-1}(p)) = \frac{P_X L(p)^2/p}{\cos(\cot^{-1}(p))}$
 and that when $p\leq p_0$, we have $\varphi_\psi(\cot^{-1}(p)) = \frac{P_Y L(p)^2/p}{\sin(\cot^{-1}(p))}$.

Let $\hat{L}(p)$ be the allocation that results from solving the optimization problem for 
minimising the expected CFMM inefficiency for belief $\psi$.
Lemma~\ref{lemma:kktobs} part 3, gives the complementary slackness condition of $\hat{L}(p)$ and its corresponding Lagrange multiplier. 
With this, Lemma \ref{lemma:kktconds} gives the following:
 when $p\geq p_0$, $\frac{\lambda_B P_X}{p^2}=\frac{1}{\hat{L}(p)^2}(P_X L(p)^2/p)/p$,
and when $p\leq p_0$,
$\frac{\lambda_B P_Y}{p}=\frac{1}{\hat{L}(p)^2}(P_Y L(p)^2/p)$.

In other words, for all $p$,
$\lambda_B = \frac{L(p)^2}{\hat{L}(p)^2}$, so $L(\cdot)$ and $\hat{L}(\cdot)$ differ by at most a constant multiplicative factor.
But both allocations use the same budget, so it must be that $\lambda_B=1$ and $\hat{L}(\cdot) = L(\cdot)$.
\extraqed{}
\end{proof}

\begin{restatement*}[Corollary \ref{cor:addbeliefs}]

\thmaddbeliefs{}

\end{restatement*}

\begin{proof}

Note that $\int_r \psi(r\cos(\theta), r\sin(\theta)dr$ is a linear function of each $\psi(p_X, p_Y)$,
and thus $\varphi_{\psi_1 + \psi_2}(\cdot) = \varphi_{\psi_1}(\cdot) + \varphi_{\psi_2}(\cdot)$
%
%
For any $p$ with $p\geq p_0$ and nonzero $\varphi_{\psi_1}(\cot^{-1}(p))$, 
$L_1(p)^2 = \frac{p^2}{\lambda_{1,X}} \varphi_{\psi_1}(\cot^{-1}(p))\sin(\cot^{-1}(p))$.

Similarly, for nonzero $\varphi_{\psi_2}(\cot^{-1}(p))$,
$L_2(p)^2 = \frac{p^2}{\lambda_{2,X}} \varphi_{\psi_2}(\cot^{-1}(p))\sin(\cot^{-1}(p))$.

If either $\varphi_{\psi_1}$ or $\varphi_{\psi_2}$ is nonzero at $\cot^{-1}(p)$,
then
\begin{equation*}
L(p)^2 = \frac{p^2}{\lambda_{X}} (\varphi_{\psi_1}(\cot^{-1}(p))\sin(\cot^{-1}(p)) + \varphi_{\psi_2}(\cot^{-1}(p))\sin(\cot^{-1}(p)))
\end{equation*}
Therefore,
\begin{equation*}
L(p)^2 = \frac{\lambda_{1,X}}{\lambda_X}L_1(p)^2 + \frac{\lambda_{2,X}}{\lambda_X}L_2(p)^2
\end{equation*}

An analogous argument holds for $p\leq p_0$.

The second statement follows from the fact that that when only one of $L_1(p)$ or $L_2(p)$ is nonzero,
we must have that either
$L(p)=\sqrt{\frac{\lambda_{1,X}}{\lambda_X}}L_1(p)$ or $L(p)=\sqrt{\frac{\lambda_{2,X}}{\lambda_X}}L_2(p)$.
\extraqed{}
\end{proof}

\subsection{Omitted Proof of Proposition \ref{prop:uniswapv3}} \label{proof_v3}

\begin{restatement*}[Proposition \ref{prop:uniswapv3}]
\univthreethm{}
\end{restatement*}

\begin{proof}

A concentrated liquidity position trades exactly as a constant product market maker within its price bounds
$p_{min}$ and $p_{max}$, and makes no trades outside of that range.

By Lemma \ref{lemma:basicopt}, the optimal $L(p)$ is $0$ for $p$ outside of the range $[p_{min}, p_{max}]$.
Inside that range, by Proposition \ref{prop:lp_dependencies}, $L(p)$ differs from the optimal liquidity allocation for
the constant product market maker by a constant, multiplicative factor (the same factor for every $p$).
Thus, the resulting liquidity allocation has the same behavior as a concentrated liquidity position. \extraqed{}
\end{proof}

\subsection{Omitted Proof of Proposition \ref{prop:balancer}} \label{proof_balancer}

\begin{restatement*}[Proposition \ref{prop:balancer}]

\balancerthm{}

\end{restatement*}

\begin{proof}

This trading function gives the relation $p = \frac{y\alpha}{x}$ and thus 
$\mathcal Y(p) = p^{\frac{\alpha}{\alpha+1}}(\frac{K}{\alpha^\alpha})^{\frac{1}{\alpha}}$,
for $K=f(\hat{X}, \hat{Y})$ and $\hat{X}, \hat{Y}$ is some initial state of the CFMM reserves.  

Thus, as defined by the trading function, 
$L(p) = p^{\frac{\alpha}{\alpha+1}}\frac{\alpha}{\alpha+1}\left(\frac{K}{\alpha^\alpha}\right)^{\frac{1}{\alpha+1}}$.

Corollary \ref{cor:inverse_problem} shows that a belief that leads to this liquidity allocation is
\begin{align*}
\frac{L(p)^2}{p} = p^{-1}p^{\frac{2\alpha}{\alpha+1}}\left(\frac{\alpha}{\alpha+1}\right)^2\left(\frac{K}{\alpha^\alpha}\right)^{\frac{2}{\alpha+1}} = p^{\frac{\alpha-1}.{\alpha+1}} C
\end{align*}

on the rectangle $(0, P_X] \times (0, P_Y]$ and $0$ elsewhere, for some constant $C$.  The result follows by rescaling
the belief function (Corollary \ref{cor:equivclass}).
\extraqed{}
\end{proof}

\subsection{Omitted Proof of Proposition \ref{prop:lmsr}} \label{proof_lmsr}

\begin{restatement*}[Proposition \ref{prop:lmsr}]

\lmsrthm{}


\end{restatement*}

\begin{proof}

This trading function implies the relationship $p=e^{y-x}.$ 

Combining this with the equation
$e^{-y} + e^{-x}=K$ (for some constant $K$) gives 
$(1+p)e^{-y}=K$
and thus $\mathcal Y(p)=\ln(\frac{1+p}{K})$.
From the definition of liquidity, we obtain $L(p)=\frac{p}{1+p}$.

Corollary \ref{cor:inverse_problem} shows that a belief function that leads to this 
liquidity allocation is (with $p=p_X/p_Y$)
\begin{equation*}
\frac{L(p)^2}{p}= \frac{p}{(1+p)^2} = \frac{p_X p_Y}{(p_X+p_Y)^2}
\end{equation*}
for $(p_X, p_Y)\in (0, P_X] \times (0, P_Y]$ and $0$ elsewhere.
%
The result follows by rescaling the belief function (Corollary \ref{cor:equivclass}).
\extraqed{}
\end{proof}

\subsection{Omitted Proof of Proposition \ref{prop:divergence}} \label{sec:expected_future_value}

\begin{restatement*}[Proposition \ref{prop:divergence}]

\thmdivergence{}

\end{restatement*}

\begin{proof}

\begin{dmath*}
 \frac{1}{N_{\psi}} \int\limits_{p_X, p_Y}\psi(p_X, p_Y)\left(p_X \mathcal X(p_X/p_Y)) + p_Y \mathcal Y(p_X/p_Y)\right)dp_X~dp_Y \\
= \frac{1}{N_{\psi}} \int\limits_{p_X, p_Y}\psi(p_X, p_Y)\left(p_X \int\limits_{p_X/p_Y}^\infty\frac{L(p)}{p^2} dp + p_Y\int\limits_0^{p_X/p_Y}\frac{L(p)}{p} dp\right)dp_X~dp_Y \\
= \frac{1}{N_{\psi}} \int\limits_{0}^\infty \left( \frac{L(p)}{p^2} \int\limits_{p_X, p_Y}p_X\psi(p_X, p_Y)\mathds{1}_{\lbrace \frac{p_X}{p_Y} \leq p\rbrace}dp_X~dp_Y
+ \frac{L(p)}{p} \int\limits_{p_X, p_Y}p_Y\psi(p_X, p_Y)\mathds{1}_{\lbrace \frac{p_X}{p_Y} \geq p\rbrace}dp_X~dp_Y\right)dp
\end{dmath*}

The first equation follows by substitution of the equations in Observation \ref{obs:alloc}.

Note that for any $p_X, p_Y$, the term
$\frac{L(p)}{p^2}$ for any $p$ appears in the integral $\int_{p_X/p_Y}^\infty\frac{L(p)dp}{p^2}$ if and only if 
$p \geq p_X/p_Y$.  The result follows from rearranging the integral to group terms by $L(p)$.
\extraqed{}
\end{proof}

\subsection{Omitted Proof of Theorem \ref{thm:qual_divloss}} \label{sec:proof_div}
\begin{restatement*}[Theorem \ref{thm:qual_divloss}]

\thmdivloss{}

\end{restatement*}

\begin{proof}

Define $\varphi^\prime(\theta) = \delta\int_r rate_{\delta}(r\cos(\theta), r\sin(\theta)\psi(r\cos(\theta), r\sin(\theta))dr$.

The KKT conditions for the first problem give the following (nearly identically to those in Lemma \ref{lemma:kktconds}, just using $L_1(\cdot)$ in place of $L(\cdot)$):
\begin{enumerate}
	\item
		For all $p$ with $p\geq p_0$, 
		$\frac{\lambda_X}{p^2} = \frac{1}{L_1(p)^2}\varphi^\prime(\cot^{-1}(p))\sin(\cot^{-1}(p)) + \lambda_{L_1(p)}$.

	\item 
		For all $p$ with $p\leq p_0$, 
		$\frac{\lambda_Y}{p} = \frac{1}{L_1(p)^2}\varphi^\prime(\cot^{-1}(p))\sin(\cot^{-1}(p)) + \lambda_{L_1(p)}$.

	\item $ \lambda_X = P_X \lambda_B$ and  $ \lambda_Y = P_Y \lambda_B$.
\end{enumerate}

Observe that the derivative, with respect to $L(p)$, of the divergence loss, is
\begin{equation*}
\kappa(p) 
= \frac{1}{N_\psi} \left( \frac{1}{p^2} \iint\limits_{p_X, p_Y}p_X\psi(p_X, p_Y)\mathds{1}_{\lbrace \frac{p_X}{p_Y} \leq p\rbrace}dp_X~dp_Y
+ \frac{1}{p} \iint\limits_{p_X, p_Y}p_Y\psi(p_X, p_Y)\mathds{1}_{\lbrace \frac{p_X}{p_Y} \geq p\rbrace}dp_X~dp_Y\right).
\end{equation*}

Computing the KKT conditions for the second problem gives the following:
\begin{enumerate}
	\item
		For all $p$ with $p\geq p_0$, 
		$\frac{\lambda_X}{p^2} = \frac{1}{L_2(p)^2}\varphi^\prime(\cot^{-1}(p))\sin(\cot^{-1}(p)) + \kappa(p) +\lambda_{L_2(p)}$.

	\item 
		For all $p$ with $p\leq p_0$, 
		$\frac{\lambda_Y}{p} = \frac{1}{L_2(p)^2}\varphi^\prime(\cot^{-1}(p))\sin(\cot^{-1}(p)) + \kappa(p) + \lambda_{L_2(p)}$.
\end{enumerate}


As defined in the theorem statement, $X_1=\int_{p_0}^\infty \frac{L_1(p)}{p^2}dp$ and $X_2=\int_{p_0}^\infty \frac{L_2(p)}{p^2}dp$ 
are the optimal initial quantities of $X$.  
Normalizing $L_1$ and $L_2$ by $X_1$ and $X_2$ respectively gives the equation
\begin{equation}
\label{eqn:normalized}
\int_{p_0}^\infty \frac{L_1(p)}{X_1p^2} dp=\int_{p_0}^\infty \frac{L_2(p)}{X_2p^2} dp
\end{equation}  

This implies that, when normalized by $X_1$ and $X_2$, $p\geq p_0$, and $\varphi^\prime(\cot^{-1}(p))\neq 0$, we have that
\begin{dmath*}
L_2(p)
= \sqrt{\frac{\varphi^\prime(\cot^{-1}(p))\sin(\cot^{-1}(p))}{\frac{\lambda_X}{p^2} - \kappa(p)}} \\
= \sqrt{\frac{\varphi^\prime(\cot^{-1}(p))\sin(\cot^{-1}(p))}{\frac{\lambda_X}{p^2}}*\frac{\frac{\lambda_X}{p^2}}{\frac{\lambda_X}{p_2} - \kappa(p)}} \\
= L_1(p)\frac{X_2}{X_1} \sqrt{\frac{\lambda_X}{\lambda_X - p^2\kappa(p)}}
\end{dmath*}

A similar argument shows that when $p \leq p_0$,
\begin{dmath*}
L_2(p)
= L_1(p)\frac{Y_2}{Y_1} \sqrt{\frac{\lambda_Y}{\lambda_Y - p\kappa(p)}}
\end{dmath*}

\newcommand{\psitp}
{
\psi(r\cos(\theta^\prime), r\sin(\theta^\prime))
}
\newcommand{\drdt}
{
dr ~d\theta^\prime
}

Arithmetic calculation gives that 
\begin{dmath*}
\kappa(p)p^2 = \frac{1}{N_{\psi}}\int_r \int_{\theta^\prime=\theta}^{\pi/2} r^2 \cos(\theta^\prime)\psitp{}\drdt{}
+ \cot(\theta) \int_r \int_{\theta^\prime=0}^\theta r^2\sin(\theta^\prime)\psitp{}\drdt{}
\end{dmath*}
and thus that
\begin{equation*}
\frac{d(\kappa(\theta)\cot(\theta)^2)}{d\theta} = - \frac{\csc^2(\theta)}{N_{\psi}}\int_r\int_{\theta^\prime=0}^\theta r^2\sin(\theta^\prime)\psitp{}\drdt{} \leq 0
\end{equation*}

$\kappa(\theta)\cot(\theta)^2$ is therefore decreasing in $\theta$, so $\kappa(p)p^2$ is increasing in $p$ (since $p = \cot(\theta)$).
Therefore, $\sqrt{\frac{\lambda_X}{\lambda_X - p^2\kappa(p)}}$ increases as $p$ goes to $\infty$.

By an analogous argument,
\begin{dmath*}
\kappa(p)p = \frac{\tan(\theta)}{N_{\psi}} \int_r \int_{\theta^\prime=\theta}^{\pi/2} r^2 \cos(\theta^\prime)\psitp{}\drdt{}
+ \int_r \int_{\theta^\prime=0}^\theta r^2\sin(\theta^\prime)\psitp{}\drdt{}
\end{dmath*}
and thus
\begin{equation*}
\frac{d(\kappa(\theta)\cot(\theta))}{d\theta} = \frac{\sec^2(\theta)}{N_{\psi}}  \int_r\int_{\theta^\prime=\theta}^{\pi/2} r^2\sin(\theta^\prime)\psitp{}\drdt{} \geq 0
\end{equation*}

$\kappa(\theta)\cot(\theta)$ is therefore increasing in $\theta$, so $\kappa(p)p$ increases as $p$ decreases.

Therefore, $\sqrt{\frac{\lambda_Y}{\lambda_Y - p\kappa(p)}}$ increases as $p$ goes to $0$.


Equation 
\ref{eqn:normalized} implies that the quantities $\frac{L_1(p)}{X_1p^2}$ and $\frac{L_2(p)}{X_2p^2}$ integrate to the same value, but $L_2(\cdot)$ increases
strictly more quickly than $L_1(\cdot)$, so there must be a point 
$p_1>p_0$ beyond which $\frac{L_1(p)}{X_1} \leq \frac{L_2(p)}{X_2}$.

An analogous argument holds for $p<p_0$.
\extraqed{}
\end{proof}

\end{document}